\newcommand\seq[4]{{#1} \sslash {#2} \vdash {#3} \sslash {#4}}
\def\imp{\supset}
\def\uc{\overline{\forall}}
\def\cut{(cut)}
\def\iw{(iw)}
\def\ew{(ew)}
\def\icl{(ic_{l})}
\def\icr{(ic_{r})}
\def\mrg{(mrg)}
\def\lwr{(lwr)}
\def\idone{(id_{1})}
\def\idtwo{(id_{2})}
\def\botl{(\bot_{l})}
\def\landl{(\land_{l})}
\def\landr{(\land_{r})}
\def\lorl{(\lor_{l})}
\def\lorr{(\lor_{r})}
\def\impl{(\supset_{l})}
\def\improne{(\supset_{r1})}
\def\imprtwo{(\supset_{r2})}
\def\allrone{(\forall_{r1})}
\def\allrtwo{(\forall_{r2})}
\def\alll{(\forall_{l})}
\def\existsl{(\exists_{l})}
\def\existsr{(\exists_{r})}
\def\lift{(lift)}
\def\vsub{(sub)}
\def\gl{\mathsf{IF}}
\def\ifl{\mathsf{IF}}
\def\calc{\mathsf{LNIF}}
\def\lng{\mathsf{LNG}}
\def\g{\mathcal{G}}
\def\h{\mathcal{H}}
\def\f{\mathcal{F}}
\def\k{\mathcal{K}}
\newcommand{\lang}{\mathcal{L}}
\def\slash{\sslash}
\def\lng{\mathsf{LNG}}
\def\fvx{a} 
\def\fvy{b} 
\def\fvz{c} 
\providecommand{\acknowledgments}[1]{\textbf{Acknowledgments. } #1}
\newenvironment{customlem}[1]
  {\innercustomlem}
  {\endinnercustomlem}
\newenvironment{customthm}[1]
  {\innercustomthm}
  {\endinnercustomthm}
\begin{document}

\title{Syntactic Cut-Elimination for Intuitionistic Fuzzy Logic via Linear Nested Sequents\thanks{Work funded by FWF projects I2982, Y544-N2, and W1255-N23.}}

\author{Tim Lyon}

\institute{Institut f\"ur Logic and Computation, Technische Universit\"at Wien, 1040 Wien, Austria  \\ \email{lyon@logic.at}}

\titlerunning{Syntactic Cut-Elimination for Intuitionistic Fuzzy Logic}
\authorrunning{Tim Lyon}

\maketitle

\begin{abstract}
This paper employs the \emph{linear nested sequent} framework to design a new cut-free calculus ($\calc$) for intuitionistic fuzzy logic---the first-order G\"odel logic characterized by linear relational frames with constant domains. Linear nested sequents---which are nested sequents restricted to linear structures---prove to be a well-suited proof-theoretic formalism for intuitionistic fuzzy logic. We show that the calculus $\calc$ possesses highly desirable proof-theoretic properties such as invertibility of all rules, admissibility of structural rules, and syntactic cut-elimination. 


\keywords{Cut-elimination · Fuzzy logic · G\"odel logic · Intermediate logic · Intuitionistic logic · Linear nested sequents · Proof theory} 
\end{abstract}

\section{Introduction}



Intuitionistic fuzzy logic ($\ifl$) has attracted considerable attention due to its unique nature as a logic blending fuzzy reasoning and constructive reasoning~\cite{Avr91,BaaPreZac07,BaaZac00,Haj98,TakTit84}. The logic, which was initially defined by Takeuti and Titani in~\cite{TakTit84}, has its roots in the work of Kurt G\"odel. G\"odel introduced extensions of propositional intuitionistic logic (now called, ``G\"odel logics'') in order to prove that propositional intuitionistic logic does not possess a finite characteristic matrix~\cite{God32}. These logics were later studied by Dummett who extended G\"odel's finite-valued semantics to include an infinite number of truth-values~\cite{Dum59}. Dummett additionally provided an axiomatization for the propositional fragment of $\ifl$~\cite{Dum59}. The first-order logic $\ifl$ also admits a finite axiomatization, obtained by extending an axiomatization of first-order intuitionistic logic with the \emph{linearity axiom} $(A \imp B) \lor (B \imp A)$ and the \emph{quantifier shift axiom} $(\forall x) (A(x) \lor C) \imp \forall x A(x) \lor C$ (where $x$ does not occur free in $C$)~\cite{Hor69}.

Over the last few decades, propositional and first-order G\"odel logics (including the prominent logic $\ifl$) have been applied in various areas of logic and computer science~\cite{BaaPreZac07,BaaZac00,BorDisPen14,Haj98,LelKuz18,LifPeaVal01,Vis82}. For example, Visser~\cite{Vis82} applied the propositional fragment of $\ifl$ while analyzing the provability logic of Heyting arithmetic, Lifschitz et al.~\cite{LifPeaVal01} employed a G\"odel logic to model the strong equivalence of logic programs, and Borgwardt et al.~\cite{BorDisPen14} studied standard reasoning problems of first-order G\"odel logics in the context of fuzzy description logics. Additionally---and quite significantly---the logic $\ifl$ has been recognized as one of the fundamental formalizations of fuzzy logic~\cite{Haj98}.

The question of whether or not a logic possesses an \emph{analytic} proof calculus---that is, a calculus which stepwise (de)composes the formula to be proved---is of critical importance. Such calculi are effective tools for designing automated reasoning procedures and for proving certain (meta-)logical properties of a logic. For example, analytic calculi have been leveraged to provide decidability procedures for logics~\cite{Gen35}, to prove that logics interpolate~\cite{LelKuz18}, for counter-model extraction~\cite{LyoBer19}, and to understand the computational content of proofs~\cite{Mas93}.

In his seminal work~\cite{Gen35}, Gentzen proposed the \emph{sequent calculus} framework for classical and intuitionistic logic, and subsequently, proved his celebrated \emph{Hauptsatz} (i.e. cut-elimination theorem), which ultimately provided analytic calculi for the two logics. Gentzen's sequent calculus formalism has become one of the preferred proof-theoretic frameworks for providing analytic calculi, and indeed, many logics of interest have been equipped with such calculi. Nevertheless, one of the alluring features of the formalism---namely, its simplicity---has also proven to be one of the formalism's drawbacks; there remain many logics for which no cut-free, or analytic, sequent calculus (\`a la Gentzen) is known~\cite{GorPosTiu08,Pog08}. In response to this, the sequent calculus formalism has been extended in various ways over the last 30 years to include additional structure, allowing for numerous logics to be supplied with cut-free, analytic calculi. Some of the most prominent extensions of Gentzen's formalism include \emph{display calculi}~\cite{Bel82}, \emph{labelled calculi}~\cite{LyoBer19,Vig00}, \emph{hypersequent calculi}~\cite{Pog08}, and \emph{nested calculi}~\cite{Fit14,GorPosTiu08}.


In this paper, we employ the \emph{linear nested sequent} formalism, introduced by Lellmann in~\cite{Lel15}. Linear nested sequents fall within the \emph{nested calculus paradigm}, but where sequents are restricted to linear, instead of treelike, structures. Linear nested sequents are based off of Masini's 2-sequent framework~\cite{Mas92,Mas93} that was used to provide cut-free calculi for the modal logic $\mathsf{KD}$ as well as various other constructive logics. 
The linear nested formalism proves to be highly compatible with the well-known first-order G\"odel logic $\ifl$ (i.e. intuitionistic fuzzy logic), due to the fact that $\ifl$ can be semantically characterized by \emph{linear} relational frames (see Sect.~\ref{sec:logical-prelims}). The present work provides the linear nested calculus $\calc$ for $\ifl$, which enjoys a variety of fruitful properties, such as:\footnote{We refer to~\cite{Wan94} for a detailed discussion of fundamental proof-theoretic properties.}
\begin{itemize}

\item[$\blacktriangleright$] \emph{Separation}: Each logical rule exhibits no other logical connectives than the one to be introduced.

\item[$\blacktriangleright$] \emph{Symmetry}: Each logical connective has a left and right introduction rule.

\item[$\blacktriangleright$] \emph{Internality}: Each sequent translates into a formula of the logical language.

\item[$\blacktriangleright$] \emph{Cut-eliminability}: There exists an algorithm allowing the permutation of a $\cut$ rule (encoding reasoning with lemmata) upwards in a derivation until the rule is completely eliminated from the derivation.

\item[$\blacktriangleright$] \emph{Subformula property}: Every formula occurring in a derivation is a subformula of some formula in the end sequent.

\item[$\blacktriangleright$] \emph{Admissibility of structural rules}: Everything derivable with a structural rule (cf. $\iw$ and $\mrg$ in Sect.~\ref{sec:proof-properties}) is derivable without the structural rule. 

\item[$\blacktriangleright$] \emph{Invertibility of rules}: If the conclusion of an inference rule is derivable, then so is the premise.

\end{itemize}

In~\cite{BaaZac00}, a cut-free hypersequent calculus $\mathsf{HIF}$ for $\ifl$ was introduced to overcome the shortcomings of previously introduced systems~\cite{Hor69,TakTit84} that violated fundamental proof-theoretic properties such as cut-elimination. In contrast to $\mathsf{HIF}$, the current approach of exploiting linear nested sequents has two main benefits. First, the admissibility of structural rules has not been shown in $\mathsf{HIF}$, and as such, the calculus does not offer a purely formula-driven approach to proof search. Therefore, the calculus $\calc$ serves as a better basis for automated reasoning in $\ifl$---bottom-up applications of the rules in $\calc$ simply decompose or propagate formulae, and so, the question of if/when structural rules need to be applied does not arise. Second, the calculus $\mathsf{HIF}$ cannot be leveraged to prove interpolation for the logic $\ifl$ (see~\cite{LelKuz18}) via the so-called \emph{proof-theoretic method} (cf.~\cite{LelKuz18,LyoTiuGorClo20}) due to the presence of the communication structural rule~\cite{Avr91}. In~\cite{LelKuz18}, it was shown that the propositional fragment of $\calc$ can be harnessed to prove Lyndon interpolation for the propositional fragment of $\ifl$. This result suggests that $\calc$, in conjunction with the aforementioned proof-theoretic method, may potentially be harnessed to study and determine interpolable fragments of $\ifl$, or to solve the longstanding open problem of if the entire logic $\ifl$ interpolates or not.

The contributions and organization of this paper can be summarized as follows: In Sect.~\ref{sec:logical-prelims}, we introduce the semantics and axiomatization for intuitionistic fuzzy logic ($\ifl$). Sect.~\ref{sec:sound-complete} introduces linear nested sequents and the calculus $\calc$, as well as proves the calculus sound and complete relative to $\ifl$. In Sect.~\ref{sec:proof-properties}, we provide invertibility, structural rule admissibility, and cut-elimination results. Last, Sect.~\ref{sec:conclusion} concludes and discusses future work.

\section{Logical Preliminaries}\label{sec:logical-prelims}

Our language consists of denumerably many \emph{variables} $\{x, y, z, \ldots\}$, 
denumerably many $n$-ary \emph{predicates} $\{p, q, r, \ldots\}$ (with $n \in \mathbb{N}$), the \emph{connectives} $\bot$, $\land$, $\lor$, $\supset$, the \emph{quantifiers} $\forall$, $\exists$, and \emph{parentheses} $`($' and $`)$'. 
We define the language $\lang$ via the BNF grammar below, and will use $A, B, C,$ etc. to represent formulae from $\lang$.
\begin{center}
$A ::= p(x_{1}, \ldots, x_{n}) \ | \ \bot \ | \ (A \vee A) \ | \ (A \wedge A) \ | \ (A \imp A) \ | \ (\forall x) A \ | \ (\exists x) A $
\end{center}
In the above grammar, $p$ is any $n$-ary predicate symbol and $x_{1}, \ldots, x_{n}, x$ are variables. We refer to formulae of the form $p(x_{1}, \ldots, x_{n})$ as \emph{atomic formulae}, and (more specifically) refer to formulae of the form $p$ as \emph{propositional variables} (i.e. a $0$-ary predicate $p$ is a propositional variable). The \emph{free variables} of a formula $A$ are defined in the usual way as variables unbound by a quantifier, and \emph{bound variables} as those bounded by a quantifier.

We opt for the relational semantics of $\ifl$---as opposed to the fuzzy semantics (cf.~\cite{BaaZac00})---since the structure of linear nested sequents is well-suited for interpretation via linear relational frames.

\begin{definition}[Relational Frames, Models~\cite{GabSheSkv09}]\label{def:frame-model} A \emph{relational frame} is a triple $F = (W,R,D)$ such that: $(i)$ $W$ is a non-empty set of worlds $w, u, v, \ldots$, $(ii)$ $R$ is a reflexive, transitive, antisymmetric, and connected binary relation on $W$, 
 and $(iii)$ $D$ is a function that maps a world $w \in W$ to a non-empty set of \emph{parameters} $D_{w}$ called the \emph{domain of $w$} such that the following condition is met:
\begin{flushleft}
$\mathbf{(CD)}$ \ \ If $Rwu$, then $D_{w} = D_{u}$.
\end{flushleft}

A \emph{model} $M$ is a tuple $(F,V)$ where $F$ is a relational frame and $V$ is a \emph{valuation function} such that $V(p,w) \subseteq (D_{w})^{n}$ for each $n$-ary predicate $p$ and
\begin{flushleft}
$\mathbf{(TP)}$ \ \ If $Rwu$, then $V(p,w) \subseteq V(p,u)$ (if $p$ is of arity $n > 0$);\\
\hspace{3em} If $Rwu$ and $w \in V(p,w)$, then $u \in V(p,v)$ (if $p$ is of arity $0$).
\end{flushleft}

We uphold the convention in~\cite{GabSheSkv09} and assume that for each world $w \in W$, $(D_{w})^{0} = \{w\}$, so $V(p,w) = \{w\}$ or $V(p,w) = \emptyset$, for a propositional variable $p$.

\end{definition}

The distinctive feature of relational frames for $\ifl$ is the \emph{connected} property, which states that for any $w,u,v \in W$ of a frame $F = (W,R,D)$, if $Rwu$ and $Rwv$, then either $Ruv$ or $Rvu$. Imposing this property on reflexive, transitive, and antisymmetric (i.e. intuitionistic) frames causes the set of worlds to become linearly ordered, thus validating the linearity axiom $(A \imp B) \lor (B \imp A)$ (shown in Fig.~\ref{fig:axioms-ifl}). Furthermore, the constant domain condition (\textbf{CD}) validates the quantifier shift axiom $\forall x (A(x) \lor B) \imp \forall x A(x) \lor B$ (also shown in Fig.~\ref{fig:axioms-ifl}). 

Rather than interpret formulae from $\lang$ in relational models, we follow~\cite{GabSheSkv09} and introduce $D_{w}$-sentences to be interpreted in relational models. This gives rise to a notion of validity for formulae in $\lang$ (see Def.~\ref{def:semantic-clauses}). The definition of validity also depends on the \emph{universal closure} of a formula: if a formula $A$ contains only $x_{0}, \ldots, x_{m}$ as free variables, then the universal closure $\uc A$ is taken to be the formula $\forall x_{0} \ldots \forall x_{m} A$.

\begin{definition}[$D_{w}$-Sentence]\label{def:d-sentence} Let $M$ be a relational model with $w \in W$ of $M$. We define $\lang_{D_{w}}$ to be the language $\lang$ expanded with parameters from the set $D_{w}$. We define a \emph{$D_{w}$-formula} to be a formula in $\lang_{D_{w}}$, and we define a \emph{$D_{w}$-sentence} to be a $D_{w}$-formula that does not contain any free variables. Last, we use $a, b, c, \ldots$ to denote parameters in a set $D_{w}$.
\end{definition}

\begin{definition}[Semantic Clauses~\cite{GabSheSkv09}]
\label{def:semantic-clauses} Let $M = (W,R,D,V)$ be a relational model with $w~{\in}~W$ and $R(w) := \{v \in W \ | \ (w,v) \in R\}$. The \emph{satisfaction relation} $M,w \Vdash A$ between $w \in W$ and a $D_{w}$-sentence $A$ is inductively defined as follows:

\begin{itemize}

\item $M,w \not\Vdash \bot$

\item If $p$ is a propositional variable, then $M,w \Vdash p$ iff $w \in V(p,w)$;

\item If $p$ is an $n$-ary predicate symbol $($with $n > 0)$, then $M,w \Vdash p(a_{1}, \cdots, a_{n})$ iff $(a_{1}, \cdots, a_{n}) \in V(p,w)$;

\item $M,w \Vdash A \vee B$ iff $M,w \Vdash A$ or $M,w \Vdash B$;

\item $M,w \Vdash A \land B$ iff $M,w \Vdash A$ and $M,w \Vdash B$;

\item $M,w \Vdash A \imp B$ iff for all $u \in R(w)$, if $M,u \Vdash A$, then $M,u \Vdash B$;

\item $M,w \Vdash \forall x A(x)$ iff for all $u \in R(w)$ and all $a \in D_{u}$, $M,u \Vdash A(a)$;

\item $M,w \Vdash \exists x A(x)$ iff there exists an $a \in D_{w}$ such that $M,w \Vdash A(a)$.

\end{itemize}


We say that a formula $A$ is \emph{globally true on $M$}, written $M \Vdash A$, iff $M,u \Vdash \overline{\forall} A$ for all worlds $u \in W$. A formula $A$ is \emph{valid}, written $ \Vdash A$, iff it is globally true on all relational models.

\end{definition}

\begin{lemma}[Persistence] 
\label{lm:persistency} Let $M$ be a relational model with $w,u \in W$ of $M$. For any $D_{w}$-sentence $A$, if $M,w \Vdash A$ and $Rwu$, then $M,u \Vdash A$.

\end{lemma}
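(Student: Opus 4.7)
The plan is a routine induction on the structure of the $D_w$-sentence $A$, exploiting the frame conditions on $R$ together with the two semantic monotonicity conditions $\mathbf{(TP)}$ and $\mathbf{(CD)}$ from Definition~\ref{def:frame-model}.

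First I would dispatch the base cases. The case $A = \bot$ is vacuous because $M,w \not\Vdash \bot$. For an atomic formula $A = p(a_1,\ldots,a_n)$ with $n > 0$, note that since $A$ is a $D_w$-sentence each $a_i \in D_w$, and by $\mathbf{(CD)}$ together with $Rwu$ we get $D_w = D_u$, so $A$ is also a $D_u$-sentence; persistence then follows directly from $\mathbf{(TP)}$, which gives $V(p,w) \subseteq V(p,u)$. For a propositional variable $p$ the corresponding clause of $\mathbf{(TP)}$ delivers the result immediately from $w \in V(p,w)$ and $Rwu$.

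Next I would handle the propositional connectives $\land$ and $\lor$, which are completely routine applications of the induction hypothesis to each disjunct or conjunct (noting that by $\mathbf{(CD)}$ each subformula is still a $D_u$-sentence). For $A = B \imp C$, I would suppose $M,u \Vdash B$ for some $v \in R(u)$; by transitivity of $R$ we have $v \in R(w)$, so the hypothesis $M,w \Vdash B \imp C$ yields $M,v \Vdash C$. The universal case $A = \forall x\, B(x)$ is entirely analogous using transitivity of $R$ together with $\mathbf{(CD)}$ to ensure that quantification over $D_v$ for $v \in R(u)$ coincides with quantification over $D_v$ for $v \in R(w)$. The existential case $A = \exists x\, B(x)$ uses the induction hypothesis on the witness $a \in D_w$ to get $M,u \Vdash B(a)$, after which $\mathbf{(CD)}$ guarantees $a \in D_u$ so that $a$ still witnesses the existential at $u$.

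There is no genuine obstacle here; the only point that requires some care (and the closest thing to a ``main step'') is ensuring that in the cases for $\imp$ and $\forall$ one invokes transitivity of $R$ to promote successors of $u$ to successors of $w$, and that in the atomic and existential cases one invokes $\mathbf{(CD)}$ to keep parameters in the relevant domains so that the inductive hypothesis applies to genuine $D_u$-sentences. Everything else is bookkeeping.
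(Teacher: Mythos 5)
Your proof is correct and is exactly the standard structural induction that the paper delegates to the reference ({\cite[Lem.~3.2.16]{GabSheSkv09}}) rather than spelling out: transitivity of $R$ for the $\imp$ and $\forall$ cases, $\mathbf{(TP)}$ for the atomic base case, and $\mathbf{(CD)}$ to keep parameters in the right domains for atoms and $\exists$ are precisely the points that matter. The only blemish is a trivial slip in the $\imp$ case, where ``suppose $M,u \Vdash B$ for some $v \in R(u)$'' should read ``suppose $M,v \Vdash B$ for some $v \in R(u)$''.
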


\begin{proof} See~{\cite[Lem.~3.2.16]{GabSheSkv09}} for details.
\qed
\end{proof}

A sound and complete axiomatization for the logic $\ifl$ is provided in Fig.~\ref{fig:axioms-ifl}. We define the \emph{substitution} $[y/x]$ of the variable $y$ for the free variable $x$ on a formula $A$ in the standard way as the replacement of all free occurrences of $x$ in $A$ with $y$. The substitution $[a/x]$ of the parameter $a$ for the free variable $x$ is defined similarly. Last, the side condition \emph{$y$ is free for $x$} (see~Fig.~\ref{fig:axioms-ifl}) is taken to mean that $y$ does not become bound by a quantifier if substituted for $x$.

\begin{figure}
\noindent\hrule
\[
A \supset (B \supset A)
\qquad
(A \supset (B \supset C)) \supset ((A \supset B) \supset (A \supset C))
\qquad
A \supset (B \supset (A \land B))
\]
\[
(A \land B) \supset A
\qquad
(A \land B) \supset B
\qquad
A \supset (A \lor B)
\qquad
B \supset (A \lor B)
\qquad
\infer[mp]
{B}
{A & A \imp B}
\]
\[
(A \supset C) \supset ((B \supset C) \supset ((A \lor B) \supset C))
\qquad
\bot \imp A
\qquad
(A \imp B) \lor (B \imp A)
\qquad
\infer[gen]
{\forall x A}
{A}
\]
\quad
\[
\forall x A \supset A[y/x]~\textit{y free for x}
\quad
A[y/x] \supset \exists x A~\textit{y free for x}
\quad
\forall x (B \imp A(x)) \imp (B \imp \forall x A(x))
\]
\quad
\[
\forall x (A(x) \imp B) \imp (\exists x A(x) \imp B)
\qquad
\forall x (A(x) \vee B) \imp \forall x A(x) \vee B~\textit{ with $x \not\in B$}
\]
\hrule
\caption{Axiomatization for the logic $\gl$~\cite{GabSheSkv09}. The logic $\ifl$ is the smallest set of formulae from $\lang$ closed under substitutions of the axioms and applications of the inference rules $mp$ and $gen$. We write $\vdash_{\ifl} A$ to denote that $A$ is an element, or \emph{theorem}, of $\ifl$.}
\label{fig:axioms-ifl}
\end{figure}


\begin{theorem}[Adequacy of $\gl$] For any $A \in \lang$, $\Vdash A$ iff $\vdash_{\gl} A$.

\end{theorem}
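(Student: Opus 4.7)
The plan is to prove soundness and completeness separately, following the standard canonical-model methodology for first-order Kripke-style semantics with constant domains, as done in~\cite{GabSheSkv09}.

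For the soundness direction ($\vdash_{\gl} A$ implies $\Vdash A$), I would proceed by induction on the length of a derivation of $A$ in the Hilbert system of Fig.~\ref{fig:axioms-ifl}. The intuitionistic axioms and the rules $mp$ and $gen$ are validated exactly as in the standard treatment of first-order intuitionistic logic, relying on the reflexivity and transitivity of $R$ together with the Persistence Lemma~\ref{lm:persistency}. The two non-intuitionistic axioms are the points to check carefully: the linearity axiom $(A \imp B) \lor (B \imp A)$ is validated by the \emph{connectedness} of $R$, via a case distinction on whether $Ruv$ or $Rvu$ holds for the two $R$-successors witnessing potential countermodels; the quantifier shift axiom $\forall x(A(x) \lor C) \imp \forall x A(x) \lor C$ (with $x \notin C$) is validated using the constant domain condition $\mathbf{(CD)}$ together with persistence, which allows one to move the disjunct $C$ outside the universal quantifier uniformly across all $R$-successors. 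The universal closure in the definition of global truth takes care of free variables, so the step from $\vdash_{\gl} A$ to $\Vdash \uc A$ (equivalently $\Vdash A$) is routine.

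For the completeness direction ($\Vdash A$ implies $\vdash_{\gl} A$), I would contrapose and build, for each non-theorem $A$, a countermodel via a canonical-model construction in terms of \emph{prime theories} over an extended language with a denumerable stock of fresh parameters. Concretely, one first enriches $\lang$ with infinitely many parameters, then uses a Lindenbaum-style saturation to extend any consistent set that refutes $A$ into a prime theory $\Gamma$ (closed under $\vdash_{\gl}$, disjunction-prime, and existence-witnessed, i.e.\ if $\exists x B(x) \in \Gamma$ then $B(a) \in \Gamma$ for some parameter $a$). Worlds of the canonical model are such prime theories, the relation $R$ is inclusion, and the valuation is read off membership; the truth lemma $M,\Gamma \Vdash B \iff B \in \Gamma$ is proved by the usual induction on~$B$.

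The main obstacle is ensuring that the canonical frame meets the two distinctive properties: \emph{connectedness} of $R$ and the \emph{constant domain} condition. Connectedness requires that whenever two prime theories $\Gamma_1, \Gamma_2$ both extend $\Gamma_0$, one is contained in the other; this is the step where the linearity axiom is essential, since one uses it (in the form of suitable instances $(B \imp C) \lor (C \imp B)$ with $B \in \Gamma_1 \setminus \Gamma_2$ and $C \in \Gamma_2 \setminus \Gamma_1$) to derive a contradiction. Constancy of domains requires that the saturation procedure can be arranged so that every prime theory uses the same parameter stock, and that when extending $\Gamma$ to a larger prime theory $\Gamma'$, no new witnesses are needed for universal formulae; here the quantifier shift axiom $\forall x(A(x) \lor C) \imp \forall x A(x) \lor C$ is indispensable, as it allows one to push a fresh parameter inside $\forall$ while preserving disjunction-primeness during the Lindenbaum construction. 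Once the canonical model is verified to satisfy $\mathbf{(CD)}$, $\mathbf{(TP)}$, and connectedness, the truth lemma applied at the prime theory refuting $A$ yields a model witnessing $\not\Vdash A$, completing the proof. Since the detailed verifications are standard and carried out in~\cite[Ch.~3]{GabSheSkv09}, I would cite that reference for the routine parts and only spell out the two critical steps above.
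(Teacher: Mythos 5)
Your proposal is correct and is essentially the same argument as the paper's, which simply defers both directions to the standard soundness and canonical-model completeness results in~\cite{GabSheSkv09}; your sketch fills in exactly the content of those cited propositions, correctly locating where connectedness (linearity axiom) and the constant-domain condition (quantifier-shift axiom) enter on each side.
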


\begin{proof} The forward direction follows from~{\cite[Prop.~7.2.9]{GabSheSkv09}} and~{\cite[Prop.~7.3.6]{GabSheSkv09}}, and the backwards direction follows from~{\cite[Lem.~3.2.31]{GabSheSkv09}}.
\qed
\end{proof}

\section{Soundness and Completeness of $\calc$}\label{sec:sound-complete}

Let us define \emph{linear nested sequents} (which we will refer to as \emph{sequents}) to be syntactic objects $\g$ given by the BNF grammar shown below:
$$
\g ::= \Gamma \vdash \Gamma \ | \ \g \slash \g \textit{ where } \Gamma ::= A \ | \ \Gamma, \Gamma \textit{ with } A \in \lang.
$$
Each sequent $\g$ is of the form $\Gamma_{1} \vdash \Delta_{1} \slash \cdots \slash \Gamma_{n} \vdash \Delta_{n}$ with $n \in \mathbb{N}$. We refer to each $\Gamma_{i} \vdash \Delta_{i}$ (for $1 \leq i \leq n$) as 
a \emph{component} of $\g$ and use $|| \g ||$ to denote the number of components in $\g$. 

We often use $\g$, $\h$, $\f$, and $\k$ to denote sequents, and will use $\Gamma$ and $\Delta$ to denote antecedents and consequents of components. Last, we take the comma operator to be commutative and associative; for example, we identify the sequent $p(x) \vdash q(x), r(y), p(x)$ with $p(x) \vdash r(y), p(x), q(x)$. This interpretation lets us view an antecedent $\Gamma$ or consequent $\Delta$ as a multiset of formulae.

To ease the proof of cut-elimination (Thm.~\ref{thm:cut-elimination}), we follow~\cite{Fit14} and syntactically distinguish between \emph{bound variables} $\{x, y, z, \ldots\}$ and \emph{parameters} $\{\fvx, \fvy, \fvz, \ldots\},$ which will take the place of free variables occurring in formulae. Thus, our sequents make use of formulae from $\lang$ where each free variable has been replaced by a unique parameter. For example, we would use the sequent $p(a) \vdash \forall x q(x,b) \slash \bot \vdash r(a)$ instead of the sequent $p(x) \vdash \forall x q(x,y) \slash \bot \vdash r(x)$ in a derivation (where the parameter $a$ has been substituted for the free variable $x$ and $b$ has been substituted for $y$). We also use the notation $A(a_{0}, \ldots, a_{n})$ to denote that the parameters $a_{0}, \ldots, a_{n}$ occur in the formula $A$, and write $A(\vv{a})$ as shorthand for $A(a_{0}, \ldots, a_{n})$. This notation extends straightforwardly to sequents as well.

The linear nested calculus $\calc$ for $\ifl$ is given in Fig.~\ref{fig:lngl}. (NB. The linear nested calculus $\lng$ introduced in~\cite{LelKuz18} is the propositional fragment of $\calc$, i.e. $\lng$ is the calculus $\calc$ without the quantifier rules and where propositional variables are used in place of atomic formulae.) The $\imprtwo$ and $\allrtwo$ rules in $\calc$ are particularly noteworthy; as will be seen in the next section, the rules play a vital role in ensuring the invertibility and admissibility of certain rules, ultimately permitting the elimination of $\cut$ (see Thm.~\ref{thm:cut-elimination}).

\begin{figure}[t]
\noindent\rule{38em}{0.4pt}
\[
\infer[\idone]
{\seq{\g}{\Gamma,p(\vv{\fvx})}{p(\vv{\fvx}),\Delta}{\h}}
{}
\quad
\infer[\idtwo]
{\seq{\g}{\Gamma_{1}, p(\vv{\fvx}) \vdash \Delta_{1} \slash \h \slash \Gamma_{2}}{p(\vv{\fvx}),\Delta_{2}}{\f}}
{}
\]
\[
\infer[\botl]
{\seq{\g}{\Gamma, \bot}{\Delta}{\h}}
{}
\qquad
\infer[\landl]
{\seq{\g}{ \Gamma, A \wedge B}{\Delta}{\h}}
{\seq{\g}{ \Gamma, A, B}{\Delta}{\h}}
\qquad
\infer[\lorr]
{\seq{\g}{\Gamma}{\Delta, A \lor B}{\h}}
{\seq{\g}{\Gamma}{\Delta, A, B}{\h}}
\]
\[
\infer[\landr]
{\seq{\g}{ \Gamma}{\Delta, A \wedge B}{\h}}
{\seq{\g}{ \Gamma}{\Delta, A}{\h} & \seq{\g}{ \Gamma}{\Delta, B}{\h}}
\quad
\infer[\lorl]
{\seq{\g}{\Gamma, A \lor B}{\Delta}{\h}}
{\seq{\g}{\Gamma, A}{\Delta}{\h} & \seq{\g}{ \Gamma, B}{\Delta}{\h}}
\]
\[
\infer[\existsr] 
{\seq{\g}{\Gamma}{\exists x A, \Delta}{\h}}
{\seq{\g}{\Gamma}{A[\fvx/x], \exists x A, \Delta}{\h}}
\qquad
\infer[\impl]
{\seq{\g}{\Gamma, A \supset B}{\Delta}{\h}}
{\seq{\g}{ \Gamma, B}{\Delta}{\h} & \seq{\g}{\Gamma, A \supset B}{A,\Delta}{\h}}
\]
\[
\infer[\lift]
{\seq{\g}{\Gamma_{1}, A}{\Delta_{1} \sslash \h \sslash \Gamma_{2} \vdash \Delta_{2}}{\f}}
{\seq{\g}{ \Gamma_{1}, A}{\Delta_{1} \sslash \h \sslash \Gamma_{2}, A \vdash \Delta_{2}}{\f}}
\qquad
\infer[\alll] 
{\seq{\g}{\Gamma, \forall x A}{\Delta}{\h}}
{\seq{\g}{\Gamma, A[\fvx/x], \forall x A}{\Delta}{\h}}
\]
\[
\infer[\allrone^{\dag}]
{\g \sslash \Gamma \vdash \Delta, \forall x A}
{\seq{\g}{\Gamma}{\Delta}{ \vdash A[\fvx/x] }}
\
\infer[\existsl^{\dag}]
{\seq{\g}{\Gamma, \exists x A}{\Delta}{\h}}
{\seq{\g}{\Gamma, A[\fvx/x]}{\Delta}{\h}}
\
\infer[\improne]
{\g \sslash \Gamma \vdash \Delta, A \supset B}
{\g \sslash \Gamma \vdash \Delta \sslash A \vdash B}
\]
\[
\infer[\imprtwo]
{\g \sslash \Gamma_{1} \vdash \Delta_{1}, A \supset B \sslash  \Gamma_{2} \vdash \Delta_{2} \sslash \h}
{{\g \sslash \Gamma_{1} \vdash \Delta_{1} \sslash A \vdash B \sslash \Gamma_{2} \vdash \Delta_{2} \sslash \h} & {\g \sslash \Gamma_{1} \vdash \Delta_{1} \sslash \Gamma_{2} \vdash \Delta_{2}, A \supset B \sslash \h}}
\]
\[
\infer[\allrtwo^{\dag}]
{\g \sslash \Gamma_{1} \vdash \Delta_{1}, \forall x A \sslash \Gamma_{2} \vdash \Delta_{2} \sslash \h}
{{\g \sslash \Gamma_{1} \vdash \Delta_{1} \sslash \vdash A[\fvx/x] \sslash \Gamma_{2} \vdash \Delta_{2} \sslash \h} & {\g \sslash \Gamma_{1} \vdash \Delta_{1} \sslash \Gamma_{2} \vdash \Delta_{2}, \forall x A \sslash \h}}
\]
\noindent\rule{38em}{0.4pt}
\caption{The Calculus $\calc$. The side condition $\dag$ stipulates that the parameter $\fvx$ is an \emph{eigenvariable}, i.e. it does not occur in the conclusion. Occasionally, we write $\vdash_{\calc} \g$ to mean that the sequent $\g$ is derivable in $\calc$.} 
\label{fig:lngl}
\end{figure}

To obtain soundness, we interpret each sequent as a formula in $\lang$ and utilize the notion of validity in Def.~\ref{def:semantic-clauses}. The following definition specifies how each sequent is interpreted.

\begin{definition}[Interpretation $\iota$]
\label{def:interpretation-of-lns}
The interpretation of a sequent is defined inductively as follows:

\begin{center}
\begin{tabular}{c @{\hskip 3em} c}

$\iota(\Gamma \vdash \Delta) := \displaystyle{\bigwedge \Gamma \imp \bigvee \Delta}$

&

$\iota(\Gamma \vdash \Delta \sslash \g) := \displaystyle{ \bigwedge \Gamma \imp \bigg ( \bigvee \Delta \vee \iota(\g) \bigg )}$

\end{tabular}
\end{center}
We interpret a sequent $\g$ as a formula in $\lang$ by taking the universal closure $\uc \iota(\g)$ of $\iota(\g)$ and we say that $\g$ is valid if and only if $\Vdash \uc \iota(\g)$. 

\end{definition}

\begin{theorem}[Soundness of $\calc$]\label{thm:soundness-calc}
For any linear nested sequent $\g$, if $\g$ is provable in $\calc$, then $\Vdash \uc \iota(\g)$.
\end{theorem}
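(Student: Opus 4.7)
The plan is to proceed by induction on the height of the derivation of $\g$ in $\calc$. For the base cases I verify that the interpretations of the conclusions of $\idone$, $\idtwo$, and $\botl$ are valid in every relational model. Validity for $\idone$ and $\botl$ is immediate from unfolding $\iota$. For $\idtwo$, where an atomic formula $p(\vv{\fvx})$ appears in the antecedent of an earlier component and in the consequent of a later component, validity follows from the persistence lemma (Lemma~\ref{lm:persistency}) applied along the chain of $R$-successors that the nested structure encodes.

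For the inductive step I assume validity of the premise(s) of the last applied rule and derive validity of the conclusion. The rules $\landl$, $\landr$, $\lorl$, $\lorr$, $\impl$, $\improne$, $\existsr$, $\alll$ mirror the corresponding clauses of Def.~\ref{def:semantic-clauses} so closely that soundness is a routine unfolding of $\iota$, and $\lift$ is handled directly by persistence, which justifies duplicating $A$ into a later component.

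The delicate cases are $\imprtwo$ and $\allrtwo$, which exploit connectedness. Consider $\imprtwo$ with conclusion $\g \sslash \Gamma_1 \vdash \Delta_1, A \supset B \sslash \Gamma_2 \vdash \Delta_2 \sslash \h$ and premises (P1) $\g \sslash \Gamma_1 \vdash \Delta_1 \sslash A \vdash B \sslash \Gamma_2 \vdash \Delta_2 \sslash \h$ and (P2) $\g \sslash \Gamma_1 \vdash \Delta_1 \sslash \Gamma_2 \vdash \Delta_2, A \supset B \sslash \h$. Fix a model $M$ and unfold $\uc \iota(\mathrm{conclusion})$ down to a world $w_1$ with $M,w_1 \Vdash \bigwedge \Gamma_1$, $M,w_1 \not\Vdash \bigvee \Delta_1$, and $M,w_1 \not\Vdash A \supset B$; the last assumption yields a witness $u$ with $Rw_1 u$, $M,u \Vdash A$, and $M,u \not\Vdash B$. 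To establish the remaining disjunct $\bigwedge \Gamma_2 \supset (\bigvee \Delta_2 \vee \iota(\h))$ at $w_1$, take any $w_2$ with $Rw_1 w_2$ and $M,w_2 \Vdash \bigwedge \Gamma_2$ and split by connectedness applied to $w_1,u,w_2$. If $Ru w_2$, then (P1) applied at $w_1$ propagates through $u$ (where $A$ holds, $B$ fails) to deliver $\bigwedge \Gamma_2 \supset (\bigvee \Delta_2 \vee \iota(\h))$ at $u$, which persists to $w_2$ and discharges $\bigwedge \Gamma_2$. If instead $Rw_2 u$, then (P2) applied at $w_1$ gives $\bigvee \Delta_2 \vee (A \supset B) \vee \iota(\h)$ at $w_2$; the middle disjunct would, by persistence to $u$, contradict $M,u \not\Vdash B$, so $w_2$ satisfies $\bigvee \Delta_2 \vee \iota(\h)$ as required. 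The argument for $\allrtwo$ is analogous, with the witness $u$ instead witnessing $\neg \forall x A$ at $w_1$ and the constant-domain condition $(\mathbf{CD})$ invoked to ensure the eigenvariable instance lives in $D_{w_2}$.

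For the eigenvariable rules $\allrone$, $\allrtwo$, $\existsl$, the universal closure $\uc$ in the theorem statement is essential: it binds the eigenvariable introduced by the rule in the premise's interpretation. For $\allrone$, applying the premise at the very world $u$ where $\bigwedge \Gamma$ holds (rather than at some later $v$) yields $A[\fvx/x]$ at $u$ for every parameter $\fvx \in D_u$; persistence together with $(\mathbf{CD})$ then upgrades this to $M,u \Vdash \forall x A$, and $\existsl$ is dual, using freshness of $\fvx$ in the conclusion to discharge the witness existentially.

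The main obstacle is the $\imprtwo$ and $\allrtwo$ cases, where a coordinated use of \emph{both} premises, persistence, and connectedness is required to cover every configuration of the three relevant worlds. Tracking which parameters are bound by $\uc$ in the conclusion versus the eigenvariable-augmented premise of $\allrtwo$ demands some additional bookkeeping but poses no conceptual difficulty.
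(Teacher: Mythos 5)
Your proof is correct and follows essentially the same route as the paper's: induction on derivation height, with persistence handling $\idtwo$ and $\lift$, connectedness splitting the two-premise cases $\imprtwo$/$\allrtwo$ into the two configurations $Ruw_2$ versus $Rw_2u$ that match the left and right premises, and $(\mathbf{CD})$ plus the universal closure handling the quantifier/eigenvariable cases. The only difference is presentational: you argue directly that validity of the premises yields validity of the conclusion, whereas the paper argues the contrapositive (a falsifying model for the conclusion falsifies some premise), but the case analysis and the semantic facts invoked are identical.
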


\begin{proof} We prove the result by induction on the height of the derivation of
$$
\g = \Gamma_{1} \vdash \Delta_{1} \sslash \cdots \sslash \Gamma_{n} \vdash \Delta_{n} \sslash \Gamma_{n+1} \vdash \Delta_{n+1} \sslash \cdots \sslash \Gamma_{m} \vdash \Delta_{m}
$$
and only present the more interesting $\forall$ quantifier cases in the inductive step. All remaining cases can be found in App.~\ref{appendix}. Each inference rule considered is of one of the following two forms. 
\begin{center}
\begin{tabular}{c @{\hskip 5em} c}
\AxiomC{$\g'$}
\RightLabel{$(r_{1})$}
\UnaryInfC{$\g$}
\DisplayProof

&

\AxiomC{$\g_{1}$}
\AxiomC{$\g_{2}$}
\RightLabel{$(r_{2})$}
\BinaryInfC{$\g$}
\DisplayProof
\end{tabular}
\end{center}

We argue by contraposition and prove that if $\g$ is invalid, then at least one premise is invalid. Assuming $\g$ is invalid (i.e. $\not\Vdash \uc \iota(\g)$) implies the existence of a model $M = (W,R,D,V)$ with world $v \in W$ such that $Rvw_{0}$, $\vv{a} \in D_{w_{0}}$, and $M, w_{0} \not\Vdash \iota(\g)(\vv{a})$, where $\vv{a}$ represents all parameters in $\iota(\g)$. Hence, there is a sequence of worlds $w_{1}, \cdots, w_{m} \in W$ such that $Rw_{j}w_{j+1}$ (for $0 \leq j \leq m-1$), $M, w_{i} \Vdash \bigwedge \Gamma_{i}$, and $M, w_{i} \not\Vdash \bigvee \Delta_{i}$, for each $1 \leq i \leq m$. We assume all parameters in $\bigwedge \Gamma_{i}$ and $\bigvee \Delta_{i}$ are interpreted as elements of the associated domain $D_{w_{i}}$ (for $1 \leq i \leq m$). 

{\bf $\allrone$-rule:} By our assumption $M, w_{m} \Vdash \bigwedge \Gamma_{m}$ and $M, w_{m} \not\Vdash \bigvee \Delta_{m} \vee \forall x A$. The latter implies that $M, w_{m} \not\Vdash \forall x A$, meaning there exists a world $w_{m+1} \in W$ such that $Rw_{m}w_{m+1}$ and $M, w_{m+1} \not\Vdash A[b/x]$ for some $b \in D_{w_{m+1}}$. If we interpret the eigenvariable of the premise as $b$, then the premise is shown invalid.


{\bf $\allrtwo$-rule:} It follows from our assumption that $M, w_{n} \Vdash \bigwedge \Gamma_{n}$, $M, w_{n} \not\Vdash \bigvee \Delta_{n} \vee \forall x A$, $M, w_{n+1} \Vdash \bigwedge \Gamma_{n+1}$, and $M, w_{n+1} \not\Vdash \bigvee \Delta_{n+1}$. The fact that $M, w_{n} \not\Vdash \bigvee \Delta_{n} \vee \forall x A$ implies that there exists a world $w \in W$ such that $Rw_{n}w$ and for some $b \in D_{w}$, $M, w \not\Vdash A[b / x]$. Since our frames are connected, there are two cases to consider: (i) $ R w w_{n+1}$, or (ii) $R w_{n+1} w$. Case (i) falsifies the left premise, and case (ii) falsifies the right premise.

{\bf $\alll$-rule:} We know that $M, w_{n} \Vdash \bigwedge \Gamma_{n} \wedge \forall x A$ and $M, w_{n} \not\Vdash \bigvee \Delta_{n}$. Hence, for any world $w \in W$, if $Rw_{n}w$, then $M, w \Vdash A[b/x]$ for all $b \in D_{w}$. Since $Rw_{n}w_{n}$, it follows that $M, w_{n} \Vdash A[\fvy / x]$ for any $b \in D_{w_{n}}$. If $a$ occurs in the conclusion $\g$, then by the constant domain condition (\textbf{CD}), we know that $a \in D_{w_{n}}$, so we may falsify the premise of the rule. If $a$ does not occur in $\g$, then it is an eigenvariable, and assigning $a$ to any element of $D_{w_{n}}$ will falsify the premise. 
\qed
\end{proof}

\begin{theorem}[Completeness of $\calc$]\label{thm:completeness-calc} If $\vdash_{\ifl} A$, then $A$ is provable in $\calc$. 

\end{theorem}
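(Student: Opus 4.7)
The plan is to prove completeness by induction on the length of a Hilbert-style derivation of $A$ in $\ifl$ with respect to the axiomatization in Fig.~\ref{fig:axioms-ifl}. I need to show that each axiom is derivable in $\calc$ as the sequent $\vdash A$ and that $\calc$ is closed under the inference rules $mp$ and $gen$. As a preparatory step, I would first establish a generalized identity lemma: for any $A \in \lang$, a sequent of the form $\g \sslash \Gamma, A \vdash A, \Delta \sslash \h$ is derivable in $\calc$. This proceeds by induction on the complexity of $A$, with the base case being an instance of $\idone$ (or $\idtwo$), connective cases reduced to smaller subformulae via backward application of the associated left and right rules, and quantifier cases using $\alll$/$\allrone$ and $\existsl$/$\existsr$ with a fresh parameter to witness the eigenvariable side condition.

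Next I would derive each axiom of $\ifl$ as $\vdash A$ in $\calc$. The intuitionistic propositional axioms and the quantifier axioms $\forall x A \imp A[y/x]$, $A[y/x] \imp \exists x A$, $\forall x(B \imp A(x)) \imp (B \imp \forall x A(x))$, and $\forall x(A(x) \imp B) \imp (\exists x A(x) \imp B)$ are obtained by routine backward applications of the standard left/right rules together with $\lift$ to shuttle antecedent assumptions across components, each branch eventually closed by the identity lemma. The two distinctive axioms of $\ifl$ are more delicate: the linearity axiom $(A \imp B) \lor (B \imp A)$ is derived by $\lorr$, then $\improne$ to place one implication in a fresh component, and then $\imprtwo$ applied to the other implication, with each resulting branch closed via $\lift$ and identity. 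The quantifier shift axiom $\forall x(A(x) \lor B) \imp \forall x A(x) \lor B$ is derived by $\improne$, $\lorr$, $\allrone$, $\alll$, $\lorl$, and $\lift$ — the constant-domain condition being captured by lifting $A[a/x]$ from the intermediate component to the final one, and the side condition $x \notin B$ ensuring $B$ can be shifted freely.

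For the inductive step, $gen$ is discharged by a single application of $\allrone$ after renaming $x$ to a fresh parameter. The $mp$ rule is handled by admitting $\cut$: given derivations of $\vdash A$ and $\vdash A \imp B$, I obtain $\vdash B$ by cutting on $A$ against $A \vdash B$, where the latter is obtained from $\vdash A \imp B$ via invertibility of $\improne$ (established in Sect.~\ref{sec:proof-properties}). The $\cut$ rule is then shown eliminable in Thm.~\ref{thm:cut-elimination}, so the final derivation lies in $\calc$ itself. The main obstacle will be the derivation of the linearity axiom, precisely the axiom whose validity motivates the nonstandard two-premise rule $\imprtwo$; verifying that the backward-search strategy yields a valid $\calc$-derivation amounts to checking that the design of $\imprtwo$ faithfully captures linearity. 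A secondary challenge is the parameter bookkeeping required to uniformly respect the eigenvariable side conditions in the quantifier axiom cases.
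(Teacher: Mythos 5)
Your proposal matches the paper's proof: the paper likewise first establishes the generalized identity lemma (Lem.~\ref{lm:A-implies-A}) by induction on formula complexity, derives each axiom of $\ifl$ in $\calc$ (with the linearity and quantifier-shift axioms handled exactly as you describe, via $\imprtwo$ and via $\alll$ plus $\lift$ respectively), simulates $gen$ with $\allrone$, and simulates $mp$ by invertibility of $\improne$ followed by $\cut$ and its elimination. The only details you elide are the uses of $\ew$ and $\mrg$ admissibility (Lem.~\ref{lm:admiss-ew} and Lem.~\ref{lm:admiss-mrg}) to adjust component counts around the $gen$ and $mp$ simulations, and the fact that the identity lemma's $\imp$ and $\forall$ cases in non-final components require the two-premise rules $\imprtwo$ and $\allrtwo$ rather than only the one-premise right rules.
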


\begin{proof} It is not difficult to show that $\calc$ can derive each axiom of $\ifl$ and can simulate each inference rule. We refer the reader to App.~\ref{appendix} for details.
\qed
\end{proof}

\section{Proof-Theoretic Properties of $\calc$}\label{sec:proof-properties}

In this section, we present the fundamental proof-theoretic properties of $\calc$, thus extending the results in~\cite{LelKuz18} from the propositional setting to the first-order setting. (NB. We often leverage results from~\cite{LelKuz18} to simplify our proofs.) 
With the exception of Lem.~\ref{lm:admiss-icl}, Lem.~\ref{lm:admiss-icr}, and Thm.~\ref{thm:cut-elimination}, all results are proved by induction on the \emph{height} of a given derivation $\Pi$, i.e. on the length (number of sequents) of the longest branch from the end sequent to an initial sequent in $\Pi$. %
Lemmata whose proofs are omitted can be found in App.~\ref{appendix}.

\begin{figure}
\hrule
\[
\infer[\iw]
{\seq{\g}{\Gamma_{1}, \Gamma_{2}}{\Delta_{1},\Delta_{2}}{\h}}
{\seq{\g}{\Gamma_{1}}{\Delta_{1}}{\h}}
\qquad
\infer[\icl]
{\seq{\g}{\Gamma, A}{\Delta}{\h}}
{\seq{\g}{\Gamma,A,A}{\Delta}{\h}}
\qquad
\infer[\ew]
{\g \sslash \ \vdash \ \sslash \h}
{\g \sslash \h}
\]
\[
\infer[\vsub]
{\g[\fvy / \fvx]}
{\g}
\qquad
\infer[\lwr]
{\g \sslash \Gamma_{1} \vdash \Delta_{1} \sslash \Gamma_{2} \vdash A, \Delta_{2} \sslash \h }
{\g \sslash \Gamma_{1} \vdash A, \Delta_{1} \sslash \Gamma_{2} \vdash \Delta_{2} \sslash \h }
\qquad
\infer[(\bot_{r})]
{\g \slash \Gamma \vdash \Delta \slash \h}
{\g \slash \Gamma \vdash \Delta, \bot \slash \h}
\]
\[
\infer[\icr]
{\seq{\g}{\Gamma, \Gamma}{A,\Delta}{\h}}
{\seq{\g}{\Gamma}{A,A,\Delta}{\h}}
\qquad
\infer[\mrg]
{\g \sslash \Gamma_{1}, \Gamma_{2} \vdash \Delta_{1}, \Delta_{2} \sslash \h }
{\g \sslash \Gamma_{1} \vdash \Delta_{1} \sslash \Gamma_{2} \vdash \Delta_{2} \sslash \h }
\]
\noindent\rule{38em}{0.4pt}
\caption{Admissible rules in $\calc$.}
\label{fig:admissible-structural-rules}
\end{figure}

We say that a rule is \emph{admissible} in $\calc$ iff derivability of the premise(s) implies derivability of the conclusion in $\calc$. Additionally, a rule is \emph{height preserving (hp-)admissible} in $\calc$ iff if the premise of the rule has a derivation of a certain height in $\calc$, then the conclusion of the rule has a derivation of the same height or less in $\calc$. Last, a rule is \emph{invertible} (\emph{hp-invertible}) iff derivability of the conclusion implies derivability of the premise(s) (with a derivation of the same height or less). Admissible rules of $\calc$ are given in Fig.~\ref{fig:admissible-structural-rules}.

\begin{lemma}\label{lm:A-implies-A} For any $A$, $\Gamma$, $\Delta$, $\g$, and $\h$, $\vdash_{\calc} \seq{\g}{\Gamma, A}{A, \Delta}{\h}$.

\end{lemma}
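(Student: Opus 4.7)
The plan is to proceed by induction on the complexity $|A|$, using a nested induction on $||\h||$ in the cases $A = B \imp C$ and $A = \forall x B$. In the base cases, if $A$ is atomic then $\idone$ immediately yields the sequent, and if $A = \bot$ then $\botl$ does.

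The cases $A = B \land C$, $A = B \lor C$, and $A = \exists x B$ are routine. For $A = B \land C$, I would apply $\landr$ to split the goal, then $\landl$ on each branch to expose $B$ and $C$ on the left, closing by the outer IH on $B$ and $C$; the case $A = B \lor C$ is dual, applying $\lorl$ first and $\lorr$ in each branch. For $A = \exists x B$, apply $\existsl$ with a fresh eigenvariable $a$, then $\existsr$ with the same parameter $a$, reducing the goal to the identity on $B[a/x]$, which is handled by the outer IH since $|B[a/x]| = |B| < |A|$.

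The interesting cases are $A = B \imp C$ and $A = \forall x B$, where the position of $A$ in the sequent matters; I sketch the implication case, the quantifier case being entirely analogous via $\allrone$/$\allrtwo$, $\lift$, $\alll$, and the outer IH applied to $B[a/x]$. If $\h$ is empty, apply $\improne$ to create a new rightmost component $B \vdash C$, then $\lift$ to copy the left-side $B \imp C$ into that component, and finally $\impl$ to reduce to two identity instances on $B$ and on $C$, both dispatched by the outer IH. If $\h = \Gamma_2 \vdash \Delta_2 \sslash \h'$ is non-empty, apply $\imprtwo$: the first premise is handled exactly as in the empty-$\h$ sub-case (using $\lift$, $\impl$, and the outer IH with the appropriate trailing tail), while the second premise has the shape $\g \sslash \Gamma, B \imp C \vdash \Delta \sslash \Gamma_2 \vdash \Delta_2, B \imp C \sslash \h'$ and is derived via a single $\lift$ from $\g \sslash \Gamma, B \imp C \vdash \Delta \sslash \Gamma_2, B \imp C \vdash \Delta_2, B \imp C \sslash \h'$; this latter sequent has the exact form of the lemma with trailing tail $\h'$, so the inner induction on $||\h||$ closes it.

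The main obstacle is precisely this second premise of $\imprtwo$ (and $\allrtwo$), where $A$ appears in the antecedent of one component and the consequent of the next: the interplay of $\lift$ (to re-collect $A$ on both sides of a single component) with the inner induction on $||\h||$ is what makes the argument go through, and it is this phenomenon that forces the nested inductive scheme.
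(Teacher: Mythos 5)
Your proposal is correct and matches the paper's proof in all essentials: the paper also inducts on $|A|$, dispatches $\land$, $\lor$, $\exists$ directly, and in the $\imp$ and $\forall$ cases uses $\imprtwo$/$\allrtwo$ together with $\lift$ to push the formula one component rightward, terminating with $\improne$/$\allrone$ at the last component. What you phrase as an inner induction on $\|\h\|$ the paper writes out as an explicit recursive family of subderivations $\Pi_{0},\ldots,\Pi_{n-2}$ indexed by the position in the tail, but this is the same argument.
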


\begin{lemma}\label{lem:botr-admiss} The $(\bot_{r})$ rule is hp-admissible in $\calc$.

\end{lemma}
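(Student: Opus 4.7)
The plan is to prove the lemma by induction on the height $h$ of the given derivation of $\g \sslash \Gamma \vdash \Delta, \bot \sslash \h$, observing that $\bot$ has no right-introduction rule in $\calc$ and therefore the displayed occurrence of $\bot$ in the consequent can never be principal.

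In the base case $h = 1$, the derivation is an initial sequent, i.e., an instance of $\idone$, $\idtwo$, or $\botl$. For $\idone$ and $\idtwo$ the initial sequent status is witnessed by a pair of occurrences of some atomic formula $p(\vv{\fvx})$, one in an antecedent and one in a consequent, and neither of these occurrences is the distinguished $\bot$ we are removing; so deleting $\bot$ from $\Delta$ leaves a sequent that is still an instance of the same axiom. For $\botl$ the principal $\bot$ occurs in some antecedent and is therefore unaffected by deleting an occurrence of $\bot$ from a consequent. In all three cases the conclusion $\g \sslash \Gamma \vdash \Delta \sslash \h$ is an initial sequent, derivable at height $1$.

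For the inductive step, let $(r)$ be the last rule applied in the given derivation. Because no rule in $\calc$ introduces $\bot$ on the right, the displayed $\bot$ is parametric in $(r)$ and is merely carried from conclusion to premise(s). I would argue by cases on $(r)$, but uniformly: apply the inductive hypothesis to each premise (each of which contains that same parametric $\bot$ in the corresponding component) to obtain derivations of height $\leq h - 1$ of the premises with $\bot$ removed, and then reapply $(r)$. The shape and context conditions of every rule are preserved, since removing a formula from a consequent does not alter antecedents, component boundaries, or the number of components, and eigenvariable side conditions on $\allrone$, $\allrtwo$, and $\existsl$ remain satisfied, as the set of parameters occurring in the conclusion can only shrink. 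The resulting derivation of $\g \sslash \Gamma \vdash \Delta \sslash \h$ has height at most $h$.

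The argument is essentially routine; the only point that requires attention is the verification that $\bot$ is indeed never principal on the right in any rule of Fig.~\ref{fig:lngl}, which is immediate by inspection. I would state this as a brief remark and then give the inductive case analysis schematically, illustrating with one or two representative rules (such as $\imprtwo$ or $\allrtwo$, where the component structure is nontrivially rearranged) and noting that all remaining cases are analogous.
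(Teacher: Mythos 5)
Your proof is correct and follows essentially the same route as the paper: induction on the height of the derivation, with initial sequents remaining initial in the base case and, since $\bot$ is never principal on the right, an application of the inductive hypothesis followed by the same rule in each inductive case. The extra observations about eigenvariable conditions and the non-principality of $\bot$ are fine but not needed beyond what the paper's brief argument already covers.
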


\begin{proof} By induction on the height of the given derivation. In the base case, applying $(\bot_{r})$ to $\idone$, $\idtwo$, or $\botl$ gives an initial sequent, and for each case of the inductive step we apply IH followed by the corresponding rule.
\qed
\end{proof}

\begin{lemma}\label{lm:substitution-lemma}
The $\vsub$ rule is hp-admissible in $\calc$.
\end{lemma}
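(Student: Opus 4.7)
The plan is to argue by induction on the height $h$ of the given derivation of $\g$, showing that for any parameters $\fvx, \fvy$ there is a derivation of $\g[\fvy/\fvx]$ of height at most $h$. In the base cases ($\idone$, $\idtwo$, $\botl$), parameter substitution maps atomic formulae and $\bot$ to atomic formulae and $\bot$, and uniformly preserves shared occurrences of a predicate across the antecedent and consequent of the relevant component(s). Consequently, $\g[\fvy/\fvx]$ remains an initial sequent of the same form.

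For the inductive step, I first dispatch the rules without eigenvariable conditions, namely $\landl$, $\landr$, $\lorl$, $\lorr$, $\impl$, $\improne$, $\imprtwo$, $\lift$, $\alll$, and $\existsr$. For each such rule, I apply the induction hypothesis to the premise(s)---each of strictly smaller height---to obtain derivation(s) of the substituted premise(s) of height at most $h-1$, and re-apply the same rule. A routine check confirms that substitution commutes with the rule's manipulation of formulae and components; in particular, for $\alll$ and $\existsr$, if the auxiliary parameter introduced by the rule happens to be $\fvx$, then after substitution the rule is applied instead with parameter $\fvy$ (together with the already-substituted principal formula).

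The delicate cases are the rules carrying the eigenvariable side condition $\dag$, namely $\allrone$, $\allrtwo$, and $\existsl$. Suppose the derivation ends with such a rule whose eigenvariable is $\fvz$, so by the side condition $\fvz$ does not occur in the conclusion. If $\fvz \neq \fvy$, then applying IH to the premise(s) yields derivation(s) of the substituted premise(s), and $\fvz$ is still absent from the substituted conclusion (since $[\fvy/\fvx]$ cannot introduce $\fvz$ when $\fvz \neq \fvy$), so the rule re-applies cleanly with $\fvz$ as eigenvariable. If instead $\fvz = \fvy$, I first invoke IH on the premise(s) with the auxiliary substitution $[\fvw/\fvz]$, where $\fvw$ is a fresh parameter distinct from $\fvx$, $\fvy$ and from all parameters occurring in the original derivation; this renames $\fvz$ to $\fvw$ throughout the premise(s) without increasing the height. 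A second application of IH with $[\fvy/\fvx]$ then delivers the desired substituted premise(s), after which the rule is re-applied with $\fvw$ as the new eigenvariable, now safely fresh in the substituted conclusion.

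The main obstacle is this eigenvariable clash case, and its resolution relies critically on the fact that we are proving \emph{height-preserving} admissibility: the first (renaming) invocation of IH produces a derivation of the same height as the original premise, so the second (substituting) invocation of IH is still on a derivation of height at most $h-1$ and is therefore justified by the outer induction. No new hypotheses about freshness or the behaviour of $\vsub$ on the ambient contexts $\g$ and $\h$ are required, since substitution is defined purely on parameters and propagates through the linear nested structure componentwise.
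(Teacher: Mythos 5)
Your proof is correct and follows essentially the same route as the paper's: induction on derivation height, with the only delicate cases being the eigenvariable rules $\allrone$, $\existsl$, $\allrtwo$ when the incoming parameter clashes with the eigenvariable, resolved by a double application of the (height-preserving) induction hypothesis — first renaming the eigenvariable to a fresh parameter, then performing the intended substitution. Your explicit remark that height-preservation is what licenses the second IH application is a point the paper leaves implicit, but it is the same argument.
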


\begin{lemma}
\label{lm:hp-admiss-iw}
The $\iw$ rule is hp-admissible in $\calc$.
\end{lemma}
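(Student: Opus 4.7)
The plan is to prove hp-admissibility of $\iw$ by induction on the height $h$ of the given derivation of $\seq{\g}{\Gamma_{1}}{\Delta_{1}}{\h}$, showing simultaneously for all choices of $\Gamma_{2}$, $\Delta_{2}$ (and for weakening applied to any component of the sequent, not only the displayed one).

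For the base case ($h=0$), the derivation is an initial sequent of the form $\idone$, $\idtwo$, or $\botl$. In each case the distinguished occurrence of $p(\vv{\fvx})$ or $\bot$ is preserved by enlarging antecedents or consequents with further multisets of formulae, so the weakened sequent is again an instance of the same initial rule.

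For the inductive step, we distinguish cases according to the last rule $(r)$ applied. For every rule that does not carry an eigenvariable side condition (that is, $\landl$, $\landr$, $\lorl$, $\lorr$, $\improne$, $\impl$, $\imprtwo$, $\lift$, $\alll$, and $\existsr$), we simply apply IH to each premise to weaken it with $\Gamma_{2}$, $\Delta_{2}$ in the appropriate component, and then reapply $(r)$; we must make sure not to add formulae into any auxiliary component created by the rule (e.g.\ the $A \vdash B$ component introduced by $\improne$), but this only restricts \emph{which} component we hand to IH, never the principal formula, so the rule instance remains valid. In each such case the resulting derivation has height at most $h$.

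The only real difficulty lies in the rules $\allrone$, $\allrtwo$, and $\existsl$, whose side condition $\dag$ requires the introduced parameter $\fvx$ to be an eigenvariable. If $\fvx$ happens to occur in $\Gamma_{2}$ or $\Delta_{2}$, then after naive weakening the eigenvariable condition of the reapplied rule would be violated. The remedy is to first use hp-admissibility of $\vsub$ (Lemma~\ref{lm:substitution-lemma}) to rename $\fvx$ in the premise to some fresh parameter $\fvx'$ not occurring in $\g$, $\h$, $\Gamma_{1}$, $\Delta_{1}$, $\Gamma_{2}$, $\Delta_{2}$, or the principal formula. This preserves derivation height. Then IH supplies a height-preserving derivation of the weakened, renamed premise, and $\fvx'$ serves as a legitimate eigenvariable when we reapply the rule. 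The main obstacle is precisely this interaction between weakening and eigenvariable conditions, and it is exactly why Lemma~\ref{lm:substitution-lemma} was established first. \qed
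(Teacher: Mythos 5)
Your proposal is correct and follows essentially the same route as the paper: the crucial step in both is handling the eigenvariable rules $\allrone$, $\allrtwo$, and $\existsl$ by first invoking hp-admissibility of $\vsub$ (Lem.~\ref{lm:substitution-lemma}) to rename the eigenvariable to a fresh parameter before applying the induction hypothesis and reapplying the rule. The only difference is that the paper discharges the propositional cases by citing the corresponding result for $\lng$ from~\cite{LelKuz18}, whereas you reprove them directly by the same straightforward induction.
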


\begin{lemma}\label{lm:admiss-ew}
The $\ew$ rule is admissible in $\calc$.
\end{lemma}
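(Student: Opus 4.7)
The plan is to induct on the height $h$ of the given derivation of $\g \sslash \h$. The base case is immediate: every initial sequent ($\idone$, $\idtwo$, or $\botl$) remains an initial sequent after the insertion of an empty component between $\g$ and $\h$, since the axiomatic patterns depend only on formula occurrences (within a single component, or across two non-adjacent components in the case of $\idtwo$) that the insertion preserves.

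For the inductive step, the treatment depends on the last rule $r$ of the derivation. For rules whose action is confined to a single component ($\landl$, $\landr$, $\lorl$, $\lorr$, $\impl$, $\alll$, $\existsl$, $\existsr$) and for $\improne$ and $\allrone$ (which append a new component at the far end), the inserted empty component lies at a position irrelevant to the action of $r$, so I will simply apply the IH to each premise---shifting the insertion position by at most one to account for any extra component appended by $r$---and then re-apply $r$. The eigenvariable side conditions for $\allrone$, $\existsl$, and $\existsr$ remain satisfied because the added empty component contributes no new parameters.

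The delicate cases are the two-component rules $\lift$, $\imprtwo$, and $\allrtwo$ in the sub-case where the empty component is inserted precisely between the two components on which $r$ acts. For $\lift$, I will (i) apply the IH to the premise to install the empty component; (ii) invoke hp-admissibility of $\iw$ (Lem.~\ref{lm:hp-admiss-iw}) to plant the lifted formula $A$ in the antecedent of that empty component; and (iii) apply $\lift$ twice, first to strip $A$ from the original second active component and then from the new middle component. For $\imprtwo$ (and symmetrically for $\allrtwo$), I will apply the rule twice without any auxiliary weakening: once at the top to split $A \supset B$ between $\Gamma_{1} \vdash \Delta_{1}, A \supset B$ and the new empty component, and then, on the premise that places $A \supset B$ in the consequent of the otherwise empty component, a second application of $\imprtwo$ splitting it between that component and $\Gamma_{2} \vdash \Delta_{2}$. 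The four resulting leaves correspond, after a small shift of position indices, to IH instances of the two original premises of $r$.

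The main obstacle I expect is the bookkeeping: tracking how an insertion position in the conclusion translates into insertion positions in the premises of each rule (in particular for the two-component rules, whose premises can have one more or one fewer component than their conclusion), and verifying that in each critical case the reconstructed derivation really produces the target sequent with the empty component in the intended place. The analysis is routine but needs careful execution so that the sequences of components, and the ranges over which each reinstated rule application acts, line up exactly.
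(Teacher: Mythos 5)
Your general plan---induction on height, re-applying the last rule after the IH, the double application of $\imprtwo$ (resp.\ $\allrtwo$) when the empty component is inserted between its two active components, and the $\iw$-plus-two-$\lift$s repair for $\lift$---is essentially the argument the paper gives (the paper imports the propositional cases, including $\lift$, $\improne$ and $\imprtwo$, from the known admissibility of $\ew$ in $\lng$ and only works out the quantifier rules). However, there is a genuine gap: you declare $\improne$ and $\allrone$ unproblematic on the grounds that the inserted component ``lies at a position irrelevant to the action of $r$''. This fails in the subcase where $\ew$ appends the empty component at the very end, i.e.\ directly after the component into which the rule has just introduced $\forall x A$ (resp.\ $A \supset B$). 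The target sequent $\g \slash \Gamma \vdash \Delta, \forall x A \slash \ \vdash$ cannot be the conclusion of $\allrone$ at all, because $\allrone$ puts its principal formula in the last component; so ``apply the IH and re-apply $r$'' is unavailable, and no shift of the insertion position rescues it (inserting the empty component before the eigenvariable component of the premise and applying $\allrone$ yields $\g \slash \Gamma \vdash \Delta \slash \ \vdash \forall x A$, with $\forall x A$ in the wrong component).

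This is exactly the nontrivial subcase the paper isolates, and it requires a cross-over to the two-premise rule: derive the target by $\allrtwo$, whose left premise $\g \slash \Gamma \vdash \Delta \slash \ \vdash A[\fvx/x] \slash \ \vdash$ is the IH applied to the original premise with the empty component appended at the end, and whose right premise $\g \slash \Gamma \vdash \Delta \slash \ \vdash \forall x A$ is obtained by an $\allrone$ inference from the IH instance $\g \slash \Gamma \vdash \Delta \slash \ \vdash \ \slash \ \vdash A[\fvx/x]$ (empty component inserted before the eigenvariable component). The analogous repair for $\improne$ uses $\imprtwo$. Without this step your induction does not go through. (A minor point: your two stacked applications of $\imprtwo$/$\allrtwo$ produce three leaves, not four, but that construction itself is correct and matches the paper's treatment of $\allrtwo$.)
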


\begin{proof} By~\cite[Lem.~5.6]{LelKuz18} we know that $\ew$ is admissible in $\lng$, thus leaving us to prove the $\allrone$, $\existsl$, $\allrtwo$, $\alll$, and $\existsr$ cases. The $\existsl$, $\alll$, and $\existsr$ cases are easily shown by applying IH and then the rule. We therefore prove the $\allrone$ and $\allrtwo$ cases, beginning with the former, which is split into the two subcases, shown below:

\begin{center}
\begin{tabular}{c @{\hskip 2em} c}

\AxiomC{$\g \slash \Gamma \vdash \Delta \slash \vdash A[\fvx/x]$}
\RightLabel{$\allrone$}
\UnaryInfC{$\g \slash \Gamma \vdash \Delta, \forall x A$}
\RightLabel{$\ew$}
\UnaryInfC{$\g' \slash \Gamma \vdash \Delta, \forall x A$}
\DisplayProof

&

\AxiomC{$\g \slash \Gamma \vdash \Delta \slash \vdash A[\fvx/x]$}
\RightLabel{$\allrone$}
\UnaryInfC{$\g \slash \Gamma \vdash \Delta, \forall x A$}
\RightLabel{$\ew$}
\UnaryInfC{$\g \slash \Gamma \vdash \Delta, \forall x A \slash \vdash$}
\DisplayProof

\end{tabular}
\end{center}

In the top left case, where we weaken in a component prior to the component $\Gamma \vdash \Delta, \forall x A$, we may freely permute the two rule instances. The top right case is resolved as shown below.
\begin{center}
\begin{tabular}{c}

\AxiomC{}
\RightLabel{IH}
\dashedLine
\UnaryInfC{$\g \slash \Gamma \vdash \Delta \slash \vdash A[\fvx/x] \slash \vdash$}

\AxiomC{}
\RightLabel{IH}
\dashedLine
\UnaryInfC{$\g \slash \Gamma \vdash \Delta \slash \vdash \slash \vdash A[\fvx/x]$}

\RightLabel{$\allrone$}

\UnaryInfC{$\g \slash \Gamma \vdash \Delta \slash \vdash \forall x A$}
\RightLabel{$\allrtwo$}
\BinaryInfC{$\g \slash \Gamma \vdash \Delta, \forall x A \slash \vdash$}
\DisplayProof

\end{tabular}
\end{center}

Suppose now that we have an $\allrtwo$ inference (as in Fig.~\ref{fig:lngl}) followed by an $\ew$ inference. 
The only nontrivial case (which is resolved as shown below) occurs when a component is weakened in directly after the component $ \Gamma_{1} \vdash \Delta_{1}, \forall x A$. All other cases follow by an application of IH followed by an application of the $\allrtwo$ rule.
\begin{center}
\begin{tabular}{c}
\AxiomC{}
\RightLabel{IH}
\dashedLine
\UnaryInfC{$\g \slash \Gamma_{1} \vdash \Delta_{1} \slash \vdash A[\fvx/x] \slash \vdash \slash \Gamma_{2} \vdash \Delta_{2} \slash \h $}

\AxiomC{$\Pi$}
\RightLabel{$\allrtwo$}
\BinaryInfC{$\g \slash \Gamma_{1} \vdash \Delta_{1}, \forall x A \slash \vdash \slash \Gamma_{2} \vdash \Delta_{2} \slash \h $}
\DisplayProof
\end{tabular}
\end{center}
\begin{center}
\resizebox{\columnwidth}{!}{
\begin{tabular}{c c c} 
$\Pi$

&

$= \Bigg \{$ 

&

\AxiomC{}
\RightLabel{IH}
\dashedLine
\UnaryInfC{$\g \slash \Gamma_{1} \vdash \Delta_{1} \slash \vdash \slash \vdash A[\fvx/x] \slash \Gamma_{2} \vdash \Delta_{2} \slash \h $}

\AxiomC{}
\RightLabel{IH}
\dashedLine
\UnaryInfC{$\g \slash \Gamma_{1} \vdash \Delta_{1} \slash \vdash \slash \Gamma_{2} \vdash \Delta_{2}, \forall x A \slash \h $}
\RightLabel{$\allrtwo$}
\BinaryInfC{$\g \slash \Gamma_{1} \vdash \Delta_{1} \slash  \vdash \forall x A \slash \Gamma_{2} \vdash \Delta_{2} \slash \h $}
\DisplayProof
\end{tabular}
}
\end{center}
\qed
\end{proof}

\begin{lemma}\label{lm:admiss-lower} The rule $\lwr$ is hp-admissible in $\calc$.
\end{lemma}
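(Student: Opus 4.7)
The plan is to prove the hp-admissibility of $\lwr$ by induction on the height $h$ of the derivation $\Pi$ of the premise $\g \sslash \Gamma_{1} \vdash A, \Delta_{1} \sslash \Gamma_{2} \vdash \Delta_{2} \sslash \h$ of the $\lwr$ inference. For the base case, $\Pi$ consists of a single application of $\idone$, $\idtwo$, or $\botl$. Since $\lwr$ neither touches any antecedent nor deletes the moved formula, each initial sequent remains initial after the rule is applied; the only mild subtlety occurs when $\Pi$ is an $\idone$ instance witnessed by an atom $p(\vv{\fvx})$ in component $\Gamma_1 \vdash A, \Delta_1$ with $A = p(\vv{\fvx})$ being the very copy of $p(\vv{\fvx})$ in the succedent, in which case the conclusion of $\lwr$ is immediately an instance of $\idtwo$.

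For the inductive step, I would perform a case analysis on the last rule $(r)$ of $\Pi$ and, in almost every case, push $\lwr$ above $(r)$ by applying the inductive hypothesis to each premise of $(r)$ and then reapplying $(r)$. Rules acting on components disjoint from those flanking $\lwr$, and rules acting purely on antecedents such as $\lift$, $\landl$, $\lorl$, $\impl$, $\alll$, $\existsl$, all permute immediately. The eigenvariable side conditions on $\allrone$, $\allrtwo$, and $\existsl$ are preserved because $\lwr$ introduces no new parameters into the sequent. The rules $\improne$ and $\allrone$ cannot have introduced the specific formula $A$ being lowered into $\Delta_1$, because they only act on the last component of a sequent, whereas $\lwr$ requires the component $\Gamma_2 \vdash \Delta_2$ to follow its source.

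The cases requiring genuine work are those in which a right rule introduces the principal formula $A$ into $\Delta_1$. For $\lorr$, $\landr$, and $\existsr$, I invoke the inductive hypothesis to first lower the immediate subformulae of $A$ from the premise(s) into the component $\Gamma_2 \vdash \Delta_2$, and then reapply the right rule in the target component to reintroduce $A$ there. The main obstacle — the case that motivates the whole formulation — is the two-premise rules $\imprtwo$ and $\allrtwo$ when the formula they introduce in $\Delta_1$ coincides with the formula being lowered and their intermediate component matches the destination $\Gamma_2 \vdash \Delta_2$. Here no permutation is required at all: the right premise of $\imprtwo$ (respectively $\allrtwo$) is syntactically identical to the desired conclusion $\g \sslash \Gamma_{1} \vdash \Delta_{1} \sslash \Gamma_{2} \vdash A, \Delta_{2} \sslash \h$, and that sub-derivation has strictly smaller height than $\Pi$, so hp-admissibility is delivered for free.

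When the formula being lowered is merely a side formula of an $\imprtwo$ or $\allrtwo$ inference occurring elsewhere in the sequent, I apply the inductive hypothesis to both premises; note that these premises contain an additional intermediate component (either $A' \vdash B'$ or $\vdash A'[\fvx/x]$) through which the formula $A$ must be pushed, so two consecutive appeals to the inductive hypothesis may be needed on each premise before the rule is reapplied. The design of $\imprtwo$ and $\allrtwo$ — precisely the feature highlighted after Fig.~\ref{fig:lngl} — is what makes the above shortcut and these permutations possible in the first place.
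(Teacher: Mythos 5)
Your proof is correct and follows essentially the same strategy as the paper's: induction on derivation height, permuting $\lwr$ upward, with the two key observations that (a) when the lowered formula is principal in $\imprtwo$ or $\allrtwo$ the right premise is already the desired conclusion, and (b) when it is a side formula in the first principal component of those rules, the inductive hypothesis must be applied twice on the left premise to push it through the intermediate component. The only presentational difference is that the paper delegates the propositional cases to the prior result for $\lng$ and treats only the quantifier rules, whereas you argue all cases directly.
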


\begin{proof} By~\cite[Lem.~5.7]{LelKuz18} we know that $\lwr$ is admissible in $\lng$, and so, we may prove the claim by extending it to include the quantifier rules. We have two cases to consider: either (i) the lower-formula is a side formula in the quantifier inference, or (ii) it is principal. In case (i), the $\allrone$, $\alll$, $\existsl$, and $\existsr$ cases can be resolved by applying IH followed by an application of the rule. Concerning the $\allrtwo$ rule, all cases follow by applying IH and then the rule, with the exception of the following:
$$
\scalebox{.9}{
\infer[\lwr]
{\g \sslash \Gamma_{1} \vdash \Delta_{1}, \forall x B \sslash \Gamma_{2} \vdash \Delta_{2}, A \sslash \h}
{
\infer[\allrtwo]
{\g \sslash \Gamma_{1} \vdash \Delta_{1}, A, \forall x B \sslash \Gamma_{2} \vdash \Delta_{2} \sslash \h}
{{\g \sslash \Gamma_{1} \vdash \Delta_{1}, A \sslash \vdash B[\fvx/x] \sslash \Gamma_{2} \vdash \Delta_{2} \sslash \h} & {\g \sslash \Gamma_{1} \vdash \Delta_{1}, A \sslash \Gamma_{2} \vdash \Delta_{2}, \forall x B \sslash \h}}
}
}
$$
In the above case, we apply IH twice to the top left premise and apply IH once to the top right premise. A single application of $\allrtwo$ gives the desired result.

Let us now consider case (ii). Observe that the principal formulae in $\allrone$, $\alll$, and $\existsl$ cannot be principal in the use of $\lwr$, so we need only consider the $\existsr$ and $\allrtwo$ cases. The $\existsr$ case is shown below top-left and the case is resolved as shown below top-right. In the $\allrtwo$ case (shown below bottom), we take the derivation of the top right premise as the proof of the desired conclusion.
\begin{center}
\resizebox{\columnwidth}{!}{
\begin{tabular}{c @{\hskip 1em} c}
\AxiomC{$\g \slash \Gamma_{1} \vdash \Delta_{1}, A[\fvx/x], \exists x A \slash \Gamma_{2} \vdash \Delta_{2} \slash \h$}
\RightLabel{$\existsr$}
\UnaryInfC{$\g \slash \Gamma_{1} \vdash \Delta_{1}, \exists x A \slash \Gamma_{2} \vdash \Delta_{2} \slash \h$}
\RightLabel{$\lwr$}
\UnaryInfC{$\g \slash \Gamma_{1} \vdash \Delta_{1} \slash \Gamma_{2} \vdash \Delta_{2}, \exists x A \slash \h$}
\DisplayProof

&

\AxiomC{}
\RightLabel{IH $\times 2$}
\dashedLine
\UnaryInfC{$\g \slash \Gamma_{1} \vdash \Delta_{1} \slash \Gamma_{2} \vdash \Delta_{2}, A[\fvx/x], \exists x A \slash \h$}
\RightLabel{$\existsr$}
\UnaryInfC{$\g \slash \Gamma_{1} \vdash \Delta_{1} \slash \Gamma_{2} \vdash \Delta_{2}, \exists x A \slash \h$}
\DisplayProof
\end{tabular}
}
\end{center}
\begin{center}
\resizebox{\columnwidth}{!}{
\begin{tabular}{c}
\AxiomC{$\g \slash \Gamma_{1} \vdash \Delta_{1} \slash \vdash A[\fvx/x] \slash \Gamma_{2} \vdash \Delta_{2} \slash \h$}
\AxiomC{$\g \slash \Gamma_{1} \vdash \Delta_{1} \slash \Gamma_{2} \vdash \Delta_{2}, \forall x A  \slash \h$}

\RightLabel{$\allrtwo$}
\BinaryInfC{$\g \slash \Gamma_{1} \vdash \Delta_{1}, \forall x A \slash \Gamma_{2} \vdash \Delta_{2} \slash \h$}
\RightLabel{$\lwr$}
\UnaryInfC{$\g \slash \Gamma_{1} \vdash \Delta_{1} \slash \Gamma_{2} \vdash \Delta_{2}, \forall x A \slash \h$}
\DisplayProof
\end{tabular}
}
\end{center}
\qed
\end{proof}

Our version of the $\lift$ rule necessitates a stronger form of invertibility, called \emph{m-invertibility}, for the $\landl$, $\lorl$, $\impl$, $\alll$, and $\existsl$ rules (cf.~\cite{LelKuz18}). We use $A^{k_{i}}$ to represent $k_{i}$ copies of a formula $A$, with $i \in \mathbb{N}$.

\begin{lemma}\label{lm:m-invertibility} If $\sum_{i=1}^{n} k_{n} \geq 1$, then
\vspace*{-1 em}
\begin{center}
\begin{tabular}{c @{\hskip 1em} c @{\hskip 1em} c}
$
\begin{aligned}
& (i) \ (1) \text{ implies } (2)\\
& (ii) \ (3) \text{ implies } (4) \text{ and } (5)
\end{aligned}
$

&

$
\begin{aligned}
& (iii) \ (6) \text{ implies } (7) \text{ and } (8)\\
& (iv) \ (9) \text{ implies } (10)
\end{aligned}
$

&

$
\begin{aligned}
& (v) \ (11) \text{ implies } (12)\\
\
\end{aligned}
$
\end{tabular}
\end{center}
\vspace*{-1 em}
\begin{eqnarray}
\vdash_{\calc} \ \Gamma_{1}, (A \land B)^{k_{1}} \vdash \Delta_{1} \slash & \cdots & \slash \Gamma_{n}, (A \land B)^{k_{n}} \vdash \Delta_{n}\\
\vdash_{\calc} \ \Gamma_{1}, A^{k_{1}}, B^{k_{1}} \vdash \Delta_{1} \slash & \cdots & \slash \Gamma_{n}, A^{k_{n}}, B^{k_{n}} \vdash \Delta_{n}\\
\vdash_{\calc} \ \Gamma_{1}, (A \lor B)^{k_{1}} \vdash \Delta_{1} \slash & \cdots & \slash \Gamma_{n}, (A \lor B)^{k_{n}} \vdash \Delta_{n}\\
\vdash_{\calc} \ \Gamma_{1}, A^{k_{1}} \vdash \Delta_{1} \slash & \cdots & \slash \Gamma_{n}, A^{k_{n}} \vdash \Delta_{n}\\
\vdash_{\calc} \ \Gamma_{1}, B^{k_{1}} \vdash \Delta_{1} \slash & \cdots & \slash \Gamma_{n}, B^{k_{n}} \vdash \Delta_{n}\\
\vdash_{\calc} \ \Gamma_{1}, (A \imp B)^{k_{1}} \vdash \Delta_{1} \slash & \cdots & \slash \Gamma_{n}, (A \imp B)^{k_{n}} \vdash \Delta_{n}\\
\vdash_{\calc} \ \Gamma_{1}, B^{k_{1}} \vdash \Delta_{1} \slash & \cdots & \slash \Gamma_{n}, B^{k_{n}} \vdash \Delta_{n}\\
\vdash_{\calc} \ \Gamma_{1}, (A \imp B)^{k_{1}} \vdash \Delta_{1}, A^{k_{1}} \slash & \cdots & \slash \Gamma_{n}, (A \imp B)^{k_{n}} \vdash \Delta_{n}, A^{k_{n}}\\
\vdash_{\calc} \ \Gamma_{1}, (\forall x A)^{k_{1}} \vdash \Delta_{1} \slash & \cdots & \slash \Gamma_{n}, (\forall x A)^{k_{n}} \vdash \Delta_{n}\\
\vdash_{\calc} \ \Gamma_{1}, A[\fvx /x]^{k_{1}}, (\forall x A)^{k_{1}} \vdash \Delta_{1} \slash & \cdots & \slash \Gamma_{n}, A[\fvx /x]^{k_{n}}, (\forall x A)^{k_{n}} \vdash \Delta_{n}\\
\vdash_{\calc} \ \Gamma_{1}, (\exists x A)^{k_{1}} \vdash \Delta_{1} \slash & \cdots & \slash \Gamma_{n}, (\exists x A)^{k_{n}} \vdash \Delta_{n}\\
\vdash_{\calc} \ \Gamma_{1}, A[\fvx/x]^{k_{1}} \vdash \Delta_{1} \slash & \cdots & \slash \Gamma_{n}, A[\fvx/x]^{k_{n}}\vdash \Delta_{n}
\end{eqnarray}

\end{lemma}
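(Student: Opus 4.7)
The plan is to prove parts (i), (ii), (iii), and (v) by induction on the height $h$ of the given derivation (of the less-decomposed sequent), and to dispose of (iv) separately: since (10) is just (9) with $A[\fvx/x]^{k_i}$ added to each component's antecedent, (iv) follows immediately from the hp-admissibility of $\iw$ (Lemma~\ref{lm:hp-admiss-iw}).

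In the base case, the derivation is a single initial sequent ($\idone$, $\idtwo$, or $\botl$), whose principal formula is atomic or $\bot$; since the formulas being inverted are composite, decomposing them preserves the initial-sequent form. For the inductive step, I would case on the last rule $r$ of the derivation. If no occurrence of the target formula is principal in $r$, I apply the IH to each premise and reapply $r$. If the target formula is principal in $r$ (so $r$ is $\landl$, $\lorl$, $\impl$, or $\existsl$), the premise of $r$ already exhibits the decomposed form on that one occurrence, and IH on the premise---handling the remaining $k_i - 1$ copies in the principal component along with the $k_j$ copies in the other components---yields the result. For the (6)-to-(8) inversion, we invert only $k_i - 1$ copies in the principal component so as to absorb the extra $A$ that the right premise of $\impl$ already contributes to the succedent. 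For (v), when $r = \existsl$ has eigenvariable $\fvy$, I first apply hp-admissible $\vsub$ (Lemma~\ref{lm:substitution-lemma}) to rename $\fvy$ to the designated parameter $\fvx$---safe by the eigenvariable condition---and then invoke the IH.

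The hard part, and the very motivation for the m-invertibility formulation, is the case $r = \lift$ acting on the target formula. Here, the premise of $\lift$ carries one extra copy of the target in the later component's antecedent (count $k_2 + 1$ rather than $k_2$). For parts (i), (ii), (v), and the (6)-to-(7) inversion, my strategy is to apply IH to all copies in the premise (including the extra), producing the decomposed sequent with one surplus instance of each decomposed subformula in the later antecedent; then I apply $\lift$ once per decomposed subformula to eliminate the surplus, which is legitimate because the earlier antecedent contains those subformulas after IH (and $k_1 \geq 1$ is forced by the applicability of the original $\lift$). The (6)-to-(8) inversion of (iii) requires a different tactic, since the decomposed material (the added $A$) sits in the succedent, where $\lift$ cannot reach it: instead, I apply IH to only $k_1$ and $k_2$ copies (leaving one $A \imp B$ in the later antecedent un-inverted), then apply $\lift$ on this leftover $A \imp B$, which is valid because the inversion to (8) preserves $A \imp B$ in the earlier antecedent.
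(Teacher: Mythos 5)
Your proof is correct and follows essentially the same route as the paper: induction on the height of the derivation, with the $\lift$ case handled by inverting the extra lifted copy and then re-applying $\lift$ to discharge the surplus subformulae (legitimate since $k_i \geq 1$ in the lifting component), eigenvariable clashes resolved via hp-admissible $\vsub$, and claim (iv) dispatched by hp-admissibility of $\iw$. The only (harmless) divergence is that you establish (6)$\Rightarrow$(8) by induction, whereas the paper notes it is immediate from $\iw$, since (8) retains every copy of $A \supset B$ in the antecedents and merely weakens $A^{k_i}$ into the succedents.
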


\begin{lemma}\label{lm:invertibility-of-andr-orr-existsr} The $\landr$, $\lorr$, and $\existsr$ rules are hp-invertible in $\calc$.

\end{lemma}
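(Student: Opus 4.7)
The plan is to prove each of the three claims by a uniform induction on the height $h$ of the given derivation of the conclusion. Write $C$ for the principal formula being inverted: $C = A \wedge B$ for $\landr$, $C = A \vee B$ for $\lorr$, and $C = \exists x A$ for $\existsr$; it appears in some component $\Gamma \vdash \Delta, C$ of the conclusion.

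In the base case ($h = 1$), the conclusion is an initial sequent derived by $\idone$, $\idtwo$, or $\botl$, whose principal formula is atomic or $\bot$ and therefore distinct from the compound $C$. Hence $C$ is only a side formula, and replacing it in the consequent by $A$ and by $B$ (for $\landr$), by $A, B$ (for $\lorr$), or by $A[\fvx/x]$ (for $\existsr$) yields an initial sequent of the same type at height $1$.

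For the inductive step, let $(r)$ be the last rule applied. If the principal formula of $(r)$ is $C$ itself, then $(r)$ must be $\landr$, $\lorr$, or $\existsr$ respectively, and the premise(s) of $(r)$---of height at most $h - 1$---are already the desired objects. Otherwise, $C$ occurs as a side formula in every premise of $(r)$. Apply the inductive hypothesis to each premise to obtain a derivation of height at most $h - 1$ in which $C$ has been replaced appropriately, then reapply $(r)$ to recover a derivation of the target sequent at height $\leq h$. Side formulae and eigenvariable conditions on $(r)$ are not affected by this replacement, so the reapplication is always legal. In particular, the two-premise rules $\imprtwo$ and $\allrtwo$ pose no new difficulty: in each of their premises $C$ remains a side formula, so a single round of "IH, then reapply $(r)$" suffices.

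The main obstacle arises when inverting $\existsr$ through an eigenvariable rule such as $\allrone$, $\allrtwo$, or $\existsl$: the parameter $\fvx$ that witnesses $A[\fvx/x]$ in the inverted sequent might coincide with the eigenvariable of $(r)$, so that the eigenvariable side condition would be broken after the $C$-replacement. I would resolve this in the standard way: before applying the inductive hypothesis, invoke Lem.~\ref{lm:substitution-lemma} (hp-admissibility of $\vsub$) on the premise of $(r)$ to rename its eigenvariable to a fresh parameter distinct from $\fvx$ and from every parameter in the conclusion. Because $\vsub$ is height-preserving, the bound of $\leq h$ on the resulting derivation is unchanged, and the remainder of the induction proceeds routinely.
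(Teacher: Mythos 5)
Your overall strategy---induction on the height of the derivation, ``apply IH then reapply the last rule'' in the non-principal cases, and renaming eigenvariables via Lem.~\ref{lm:substitution-lemma} when they would clash with the parameter introduced by the inversion---is essentially the paper's, which cites \cite[Lem.~5.8]{LelKuz18} for $\landr$ and $\lorr$ and declares the quantifier cases routine. For $\landr$ and $\lorr$ your argument is complete: the formulae $A$ and $B$ produced by the inversion are subformulae of a formula already occurring in the conclusion, so they cannot contain the eigenvariable of any $\allrone$, $\allrtwo$, or $\existsl$ inference above, and the principal case hands you exactly the premises you need at smaller height.

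The gap is in the $\existsr$ case, at the step ``the premise(s) of $(r)$ are already the desired objects.'' When the last rule is $\existsr$ with the very occurrence of $\exists x A$ you are inverting as its principal formula, its premise is $\g \sslash \Gamma \vdash A[b/x], \Delta \sslash \h$ for whatever witness $b$ that derivation happened to use, and nothing forces $b$ to coincide with the parameter $a$ of the rule instance you are inverting. This cannot be repaired by $\vsub$: the witness $b$ may occur elsewhere in $\Gamma$ and $\Delta$, and two branches created by a branching rule higher up may even use two different witnesses for the same occurrence of $\exists x A$. Concretely, $p(a) \vdash \exists x\, p(x)$ is derivable while $p(a) \vdash p(b)$ is not, so inversion of $\existsr$ for an arbitrary parameter fails outright; your proof never says which parameter the inverted sequent is supposed to carry, and the quoted step is exactly where it silently assumes the witness can be chosen at will. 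You should either restrict the claim to the witness actually used on a given branch (and then check that this weaker form still supports the lemma's later applications, e.g.\ in Lem.~\ref{lm:admiss-icr} and Thm.~\ref{thm:cut-elimination}, which appear to need a \emph{specific} parameter) or rework the $\existsr$ rule/statement. To be fair, the paper's own one-sentence treatment of $\existsr$ does not confront this point either; the $\landr$ and $\lorr$ halves of your proposal stand as written.
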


\begin{proof} By~\cite[Lem.~5.8]{LelKuz18} we know that the claim holds for the $\landr$ and $\lorr$ rules relative to $\lng$. The proof may be extended to $\calc$ by considering the quantifier rules in the inductive step; however, it is quick to verify the claim for the quantifier rules by applying IH and then the corresponding rule. Proving invertibility of the $\existsr$ rule is straightforward, and follows from the hp-admissibility of $\iw$ (Lem.~\ref{lm:hp-admiss-iw}).
\qed
\end{proof}

\begin{lemma}\label{lm:invert-implies-right-two} The $\imprtwo$ rule is invertible in $\calc$.

\end{lemma}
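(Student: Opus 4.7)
The right premise $\g \sslash \Gamma_1 \vdash \Delta_1 \sslash \Gamma_2 \vdash \Delta_2, A \imp B \sslash \h$ falls out immediately from a single application of $\lwr$ (Lem.~\ref{lm:admiss-lower}) to the conclusion: $\lwr$ transfers $A \imp B$ from $\Delta_1$ to $\Delta_2$ verbatim, and since $\lwr$ is hp-admissible in $\calc$, derivability of the conclusion yields derivability of this premise.

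For the left premise $\g \sslash \Gamma_1 \vdash \Delta_1 \sslash A \vdash B \sslash \Gamma_2 \vdash \Delta_2 \sslash \h$, I plan to proceed by induction on the height of a derivation $\Pi$ of the conclusion. In the base case, $\Pi$ ends with an initial rule $\idone$, $\idtwo$, or $\botl$; since the principal formula of each such rule is an atom or $\bot$, it is disjoint from the displayed $A \imp B$, so deleting $A \imp B$ from $\Delta_1$ and splicing the fresh interior component $A \vdash B$ between components $\Gamma_1 \vdash \Delta_1$ and $\Gamma_2 \vdash \Delta_2$ leaves the initial-rule pattern intact, yielding another instance of the same initial rule.

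In the inductive step, let $(r)$ be the last rule applied. If $(r)$ is an $\imprtwo$-inference whose principal formula is the designated occurrence of $A \imp B$ in $\Delta_1$, then the left premise of $(r)$ is exactly the sequent we seek. Otherwise, $A \imp B$ occurs only as a side formula throughout the premise(s) of $(r)$, and I apply the induction hypothesis to each premise to obtain the corresponding spliced sequent, then re-apply $(r)$ with the same principal formula (and the same eigenvariable, where applicable).

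The only delicate point — which I expect to be the main bookkeeping obstacle — is checking that re-applying $(r)$ after inserting the new component $A \vdash B$ still constitutes a valid rule instance. The rules whose shape is sensitive to the exact component structure are $\improne$, $\allrone$, $\imprtwo$, $\allrtwo$, and $\lift$; each of these either acts on the last component (whose relative position from $\Gamma_1 \vdash \Delta_1$ is preserved by the insertion) or acts at positions that are uniformly shifted one step by the splice, so the pattern of the rule is preserved. The eigenvariable side conditions on $\allrone$, $\allrtwo$, and $\existsl$ carry over because the set of parameters appearing in the conclusion is unchanged by the surgery.
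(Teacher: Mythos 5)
Your treatment of the right premise and of the base case matches the paper, but the inductive step for the left premise has a genuine gap. The claim that whenever $A \imp B$ is not principal you can ``apply the induction hypothesis to each premise and re-apply $(r)$'' fails precisely when $(r)$ is a two-premise right rule ($\imprtwo$ or $\allrtwo$) whose principal formula sits in the \emph{same} component $\Gamma_1 \vdash \Delta_1$ as the occurrence of $A \imp B$ being inverted. In that situation the left premise of $(r)$ interposes a brand-new component (e.g.\ $\vdash C[\fvx/x]$) immediately after $\Gamma_1 \vdash \Delta_1, A \imp B$; applying the induction hypothesis to that premise therefore inserts $A \vdash B$ \emph{before} the new component, yielding $\g \slash \Gamma_1 \vdash \Delta_1 \slash A \vdash B \slash \vdash C[\fvx/x] \slash \Gamma_2 \vdash \Delta_2 \slash \h$, whereas re-applying $(r)$ to reach the target $\g \slash \Gamma_1 \vdash \Delta_1, \forall x C \slash A \vdash B \slash \Gamma_2 \vdash \Delta_2 \slash \h$ requires the two interior components in the opposite order. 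The splice point and the freshly created component collide, so the positions are not ``uniformly shifted one step'' as you assert.

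The paper resolves exactly this case with extra work: for the left premise of the reassembled $\allrtwo$ it first applies hp-admissibility of $\lwr$ (Lem.~\ref{lm:admiss-lower}) to push $A \imp B$ from $\Delta_1$ into the new component $\vdash C[\fvx/x]$ and only then invokes the induction hypothesis, so that $A \vdash B$ lands \emph{after} $\vdash C[\fvx/x]$; for the right premise it builds a separate $\allrtwo$ inference out of induction-hypothesis-transformed versions of both original premises. You would need to add this case (and its $\imprtwo$ analogue, handled in the propositional proof being extended) explicitly; without it the induction does not go through.
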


\begin{proof} We extend the proof of \cite[Lem.~5.10]{LelKuz18} to include the quantifier rules, and prove the result by induction on the height of the given derivation of $\g \slash \Gamma_{1} \vdash \Delta_{1}, A \imp B \slash \Gamma_{2} \vdash \Delta_{2} \slash \h$. Derivability of the right premise $\g \slash \Gamma_{1} \vdash \Delta_{1} \slash \Gamma_{2} \vdash \Delta_{2}, A \imp B \slash \h$ follows from Lem.~\ref{lm:admiss-lower}, so we focus on showing that the left premise $\g \slash \Gamma_{1} \vdash \Delta_{1} \slash A \vdash B \slash \Gamma_{2} \vdash \Delta_{2} \slash \h$ is derivable. For the $\allrone$, $\alll$, $\existsl$, and $\existsr$ rules the desired conclusion is obtained by applying IH, followed by an application of the corresponding rule. The nontrivial $\allrtwo$ case is shown below top and is resolved as shown below bottom. In all other $\allrtwo$ cases, we apply IH followed by the $\allrtwo$ rule.
\begin{center}
\resizebox{\columnwidth}{!}{
\begin{tabular}{c c c} 
\AxiomC{$\g \slash \Gamma_{1} \vdash \Delta_{1}, A \imp B \slash \vdash C[\fvx / x] \slash \Gamma_{2} \vdash \Delta_{2} \slash \h$}
\AxiomC{$\g \slash \Gamma_{1} \vdash \Delta_{1}, A \imp B \slash \Gamma_{2} \vdash \Delta_{2}, \forall x C \slash \h$}
\RightLabel{$\allrtwo$}
\BinaryInfC{$\g \slash \Gamma_{1} \vdash \Delta_{1}, \forall x C, A \imp B \slash \Gamma_{2} \vdash \Delta_{2} \slash \h$}
\DisplayProof
\end{tabular}
}
\end{center}
\begin{center}
\resizebox{\columnwidth}{!}{
\begin{tabular}{c c c}
\AxiomC{$\Pi_{1}$}
\AxiomC{$\Pi_{2}$}
\RightLabel{$\allrtwo$}
\BinaryInfC{$\g \slash \Gamma_{1} \vdash \Delta_{1}, \forall x C \slash A \vdash B \slash \Gamma_{2} \vdash \Delta_{2} \slash \h$}
\DisplayProof

&

&

\begin{tabular}{c c c}
$\Pi_{1}$

&

$= \Bigg \{$

&

\AxiomC{$\g \slash \Gamma_{1} \vdash \Delta_{1}, A \imp B \slash \vdash C[\fvx / x] \slash \Gamma_{2} \vdash \Delta_{2} \slash \h$}
\RightLabel{Lem.~\ref{lm:admiss-lower}}
\dashedLine
\UnaryInfC{$\g \slash \Gamma_{1} \vdash \Delta_{1} \slash \vdash C[\fvx / x], A \imp B \slash \Gamma_{2} \vdash \Delta_{2} \slash \h$}
\RightLabel{IH}
\dashedLine
\UnaryInfC{$\g \slash \Gamma_{1} \vdash \Delta_{1} \slash \vdash C[\fvx / x] \slash A \vdash B \slash \Gamma_{2} \vdash \Delta_{2} \slash \h$}
\DisplayProof
\end{tabular}

\end{tabular}
}
\end{center}

\begin{center}
\resizebox{\columnwidth}{!}{
\begin{tabular}{c c c} 
$\Pi_{2}$

&

$= \Bigg \{$

&

\AxiomC{$\g \slash \Gamma_{1} \vdash \Delta_{1}, A \imp B \slash \vdash C[\fvx / x] \slash \Gamma_{2} \vdash \Delta_{2} \slash \h$}
\RightLabel{IH}
\dashedLine
\UnaryInfC{$\g \slash \Gamma_{1} \vdash \Delta_{1} \slash A \vdash B \slash \vdash C[\fvx / x] \slash \Gamma_{2} \vdash \Delta_{2} \slash \h$}

\AxiomC{$\g \slash \Gamma_{1} \vdash \Delta_{1}, A \imp B \slash \Gamma_{2} \vdash \Delta_{2}, \forall x C \slash \h$}
\RightLabel{IH}
\dashedLine
\UnaryInfC{$\g \slash \Gamma_{1} \vdash \Delta_{1} \slash A \vdash B \slash \Gamma_{2} \vdash \Delta_{2}, \forall x C \slash \h$}

\RightLabel{$\allrtwo$}
\BinaryInfC{$\g \slash \Gamma_{1} \vdash \Delta_{1} \slash A \vdash B, \forall x C \slash \Gamma_{2} \vdash \Delta_{2} \slash \h$}
\DisplayProof
\end{tabular}
}
\end{center}
\qed
\end{proof}


\begin{lemma}\label{lm:invert-all-right-two} The $\allrtwo$ rule is invertible in $\calc$.

\end{lemma}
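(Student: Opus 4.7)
The plan is to handle the two premises separately. The right premise, $\g \slash \Gamma_{1} \vdash \Delta_{1} \slash \Gamma_{2} \vdash \Delta_{2}, \forall x A \slash \h$, is immediate from Lem~\ref{lm:admiss-lower}, which lowers $\forall x A$ from $\Delta_{1}$ into $\Delta_{2}$. The left premise, $\g \slash \Gamma_{1} \vdash \Delta_{1} \slash \vdash A[\fvx/x] \slash \Gamma_{2} \vdash \Delta_{2} \slash \h$ with $\fvx$ chosen fresh, I would prove by induction on the height of the given derivation, mirroring the structure of the proof of Lem~\ref{lm:invert-implies-right-two}. For the base cases, splicing a new empty component $\vdash A[\fvx/x]$ between two components of an initial sequent still yields an initial sequent ($\idone$, $\idtwo$, or $\botl$), so the goal is immediate. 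In the inductive step, if the last rule is the very $\allrtwo$ acting on the $\forall x A$ we wish to invert, its left premise is literally the goal, up to an application of Lem~\ref{lm:substitution-lemma} to rename the eigenvariable introduced there to $\fvx$. For any other last rule $r$ whose principal formula does not lie in the component $\Gamma_{1} \vdash \Delta_{1}, \forall x A$, I apply IH to each premise of $r$ and reapply $r$; for $r \in \{\allrone, \existsl\}$, a preliminary use of Lem~\ref{lm:substitution-lemma} guarantees that the eigenvariable of $r$ is disjoint from $\fvx$.

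The genuinely interesting cases are those in which the last rule is an $\imprtwo$ or a further $\allrtwo$ whose principal formula $C \imp D$ or $\forall y B$ inhabits the same component as $\forall x A$. I would treat them in exact analogy with the nontrivial $\allrtwo$ case appearing in the proof of Lem~\ref{lm:invert-implies-right-two}. Concretely, suppose the end sequent is $\g \slash \Gamma_{1} \vdash \Delta_{1}', \forall x A, \forall y B \slash \Gamma_{2} \vdash \Delta_{2} \slash \h$, obtained by $\allrtwo$ from premises $\mathrm{(P1)} = \g \slash \Gamma_{1} \vdash \Delta_{1}', \forall x A \slash \vdash B[\fvy/y] \slash \Gamma_{2} \vdash \Delta_{2} \slash \h$ and $\mathrm{(P2)} = \g \slash \Gamma_{1} \vdash \Delta_{1}', \forall x A \slash \Gamma_{2} \vdash \Delta_{2}, \forall y B \slash \h$. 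I would then (i) lower $\forall x A$ in $\mathrm{(P1)}$ via Lem~\ref{lm:admiss-lower} and apply IH to obtain $\g \slash \Gamma_{1} \vdash \Delta_{1}' \slash \vdash B[\fvy/y] \slash \vdash A[\fvx/x] \slash \Gamma_{2} \vdash \Delta_{2} \slash \h$; (ii) apply IH to both $\mathrm{(P1)}$ and $\mathrm{(P2)}$ and combine the resulting sequents with an $\allrtwo$ on $\forall y B$ in the fresh $\vdash A[\fvx/x]$ component, yielding $\g \slash \Gamma_{1} \vdash \Delta_{1}' \slash \vdash A[\fvx/x], \forall y B \slash \Gamma_{2} \vdash \Delta_{2} \slash \h$; and (iii) conclude with a final $\allrtwo$ on $\forall y B$, taking the sequents produced in (i) and (ii) as its two premises.

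The main obstacle is precisely this bookkeeping: when two $\allrtwo$ (or $\imprtwo$) principal formulae sit in the same component, one must splice in the fresh $\vdash A[\fvx/x]$ component while simultaneously re-introducing the other quantifier or implication that the last rule had just processed, all the while respecting the eigenvariable side condition $\dag$. Hp-admissibility of $\lwr$ (Lem~\ref{lm:admiss-lower}) is the essential tool that lets us permute $\forall x A$ into the freshly opened component produced by the offending rule so that IH can fire there, and hp-admissibility of $\vsub$ (Lem~\ref{lm:substitution-lemma}) prevents eigenvariable clashes throughout the construction.
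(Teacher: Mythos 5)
Your decomposition coincides with the paper's proof: the right premise is obtained from hp-admissibility of $\lwr$ (Lem.~\ref{lm:admiss-lower}), the left premise is proved by induction on the height of the derivation, eigenvariable clashes in $\allrone$, $\existsl$, and $\allrtwo$ are dissolved via hp-admissibility of $\vsub$ (Lem.~\ref{lm:substitution-lemma}), and your three-step treatment of a second $\imprtwo$ or $\allrtwo$ principal in the same component---lower-then-IH for one premise, IH-on-both-premises-then-reapply for the other, followed by a final application of the rule---is exactly the $\Pi_{1}/\Pi_{2}$ construction the paper uses.

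There is, however, one case your generic clause ``apply IH to each premise of $r$ and reapply $r$'' does not cover, and it is one the paper singles out as nontrivial: $\lift$. Suppose the derivation ends with a $\lift$ whose conclusion is $\g \slash \Gamma_{1}, B \vdash \Delta_{1}, \forall x A \slash \Gamma_{2} \vdash \Delta_{2} \slash \h$ and whose premise carries the extra copy of $B$ in $\Gamma_{2}$. The lifted formula is not principal ``in the succedent'' of the critical component, so your dichotomy files this under the routine cases; but after applying IH to the premise you obtain $\g \slash \Gamma_{1}, B \vdash \Delta_{1} \slash \vdash A[\fvx/x] \slash \Gamma_{2}, B \vdash \Delta_{2} \slash \h$, and $\lift$ can no longer be reapplied directly, because the freshly spliced component $\vdash A[\fvx/x]$ now separates the source component (containing $B$) from the target component, whereas $\lift$ only transfers a formula between \emph{adjacent} components. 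The repair is easy---weaken $B$ into the new middle component using hp-admissibility of $\iw$ (Lem.~\ref{lm:hp-admiss-iw}) and then apply $\lift$ twice---but as written your argument would attempt an ill-formed inference at this point, so the case needs to be added explicitly.
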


\begin{proof} Let the sequent $\g \slash \Gamma_{1} \vdash \Delta_{1}, \forall x A \slash \Gamma_{2} \vdash \Delta_{2} \slash \h$ be derivable in $\calc$. Derivability of the right premise $\g \slash \Gamma_{1} \vdash \Delta_{1}\slash \Gamma_{2} \vdash \Delta_{2}, \forall x A \slash \h$ follows from the hp-admissibility of $\lwr$ (Lem.~\ref{lm:admiss-lower}). We prove that the left premise $\g \slash \Gamma_{1} \vdash \Delta_{1} \slash \vdash A[\fvx/x] \slash \Gamma_{2} \vdash \Delta_{2} \slash \h$ is derivable by induction on the height of the given derivation.

\textit{Base case.} Regardless of if $\g \slash \Gamma_{1} \vdash \Delta_{1}, \forall x A \slash \Gamma_{2} \vdash \Delta_{2} \slash \h$ is derived by an application of $\idone$, $\idtwo$, or $\botl$, $\g \slash \Gamma_{1} \vdash \Delta_{1} \slash \vdash A[\fvx/x] \slash \Gamma_{2} \vdash \Delta_{2} \slash \h$ is an initial sequent as well.

\textit{Inductive step.} For all rules, with the exception of $\lift$, $\imprtwo$, $\allrone$, $\existsl$, and $\allrtwo$, we apply IH to the premise(s) followed by the corresponding rule. We consider the aforementioned nontrivial cases below.

If the $\lift$ rule is applied as shown below left, then the desired conclusion may be derived as shown below right. In all other cases, we apply IH and then $\lift$ to achieve the desired result. 
\begin{center}
\resizebox{\columnwidth}{!}{
\begin{tabular}{c c} 
\AxiomC{$\g \slash \Gamma_{1}, B \vdash \Delta_{1}, \forall x A  \slash \Gamma_{2}, B \vdash \Delta_{2} \slash \h$}
\RightLabel{$\lift$}
\UnaryInfC{$\g \slash \Gamma_{1}, B \vdash \Delta_{1}, \forall x A \slash \Gamma_{2} \vdash \Delta_{2} \slash \h$}
\DisplayProof

&

\AxiomC{$\g \slash \Gamma_{1}, B \vdash \Delta_{1}, \forall x A  \slash \Gamma_{2}, B \vdash \Delta_{2} \slash \h$}
\RightLabel{IH}
\dashedLine
\UnaryInfC{$\g \slash \Gamma_{1}, B \vdash \Delta_{1} \slash \vdash A[\fvx / x] \slash \Gamma_{2}, B \vdash \Delta_{2} \slash \h$}
\dashedLine
\RightLabel{Lem.~\ref{lm:hp-admiss-iw}}
\UnaryInfC{$\g \slash \Gamma_{1}, B \vdash \Delta_{1} \slash B \vdash A[\fvx / x] \slash \Gamma_{2}, B \vdash \Delta_{2} \slash \h$}
\RightLabel{$\lift$}
\UnaryInfC{$\g \slash \Gamma_{1}, B \vdash \Delta_{1} \slash B \vdash A[\fvx / x] \slash \Gamma_{2} \vdash \Delta_{2} \slash \h$}
\RightLabel{$\lift$}
\UnaryInfC{$\g \slash \Gamma_{1}, B \vdash \Delta_{1} \slash \vdash A[\fvx / x] \slash \Gamma_{2} \vdash \Delta_{2} \slash \h$}
\DisplayProof
\end{tabular}
}
\end{center}

If the $\imprtwo$ rule is applied as shown below top, then the desired conclusion may be derived as shown below bottom. In all other cases, we apply IH and then the $\imprtwo$ rule to obtain the desired result.
\begin{center}
\resizebox{\columnwidth}{!}{
\begin{tabular}{c} 
\AxiomC{$\g \slash \Gamma_{1} \vdash \Delta_{1}, \forall x A \slash B \vdash C \slash \Gamma_{2} \vdash \Delta_{2} \slash \h$}
\AxiomC{$\g \slash \Gamma_{1} \vdash \Delta_{1}, \forall x A \slash \Gamma_{2} \vdash \Delta_{2}, B \imp C \slash \h$}
\RightLabel{$\imprtwo$}
\BinaryInfC{$\g \slash \Gamma_{1} \vdash \Delta_{1}, \forall x A, B \imp C \slash \Gamma_{2} \vdash \Delta_{2} \slash \h$}
\DisplayProof
\end{tabular}
}
\end{center}
\begin{center}
\resizebox{\columnwidth}{!}{
\begin{tabular}{c c c}
\AxiomC{$\Pi_{1}$}
\AxiomC{$\Pi_{2}$}
\RightLabel{$\imprtwo$}
\BinaryInfC{$\g \slash \Gamma_{1} \vdash \Delta_{1}, B \imp C \slash \vdash A[\fvx / x] \slash \Gamma_{2} \vdash \Delta_{2} \slash \h$}
\DisplayProof

&

&

\begin{tabular}{c c c}
$\Pi_{1}$

&

$= \Bigg \{$

&

\AxiomC{$\g \slash \Gamma_{1} \vdash \Delta_{1}, \forall x A \slash B \vdash C \slash \Gamma_{2} \vdash \Delta_{2} \slash \h$}
\RightLabel{Lem.~\ref{lm:admiss-lower}}
\dashedLine
\UnaryInfC{$\g \slash \Gamma_{1} \vdash \Delta_{1} \slash B \vdash C, \forall x A \slash \Gamma_{2} \vdash \Delta_{2} \slash \h$}
\RightLabel{IH}
\dashedLine
\UnaryInfC{$\g \slash \Gamma_{1} \vdash \Delta_{1} \slash B \vdash C \slash \vdash A[\fvx / x] \slash \Gamma_{2} \vdash \Delta_{2} \slash \h$}
\DisplayProof
\end{tabular}
\end{tabular}
}
\end{center}
\begin{center}
\resizebox{\columnwidth}{!}{
\begin{tabular}{c c c} 
$\Pi_{2}$

&

$= \Bigg \{$

&

\AxiomC{$\g \slash \Gamma_{1} \vdash \Delta_{1}, \forall x A \slash B \vdash C \slash \Gamma_{2} \vdash \Delta_{2} \slash \h$}
\RightLabel{IH}
\dashedLine
\UnaryInfC{$\g \slash \Gamma_{1} \vdash \Delta_{1} \slash \vdash A[\fvx / x] \slash B \vdash C \slash \Gamma_{2} \vdash \Delta_{2} \slash \h$}

\AxiomC{$\g \slash \Gamma_{1} \vdash \Delta_{1}, \forall x A \slash \Gamma_{2} \vdash \Delta_{2}, B \imp C \slash \h$}
\RightLabel{IH}
\dashedLine
\UnaryInfC{$\g \slash \Gamma_{1} \vdash \Delta_{1} \slash \vdash A[\fvx / x] \slash \Gamma_{2} \vdash \Delta_{2}, B \imp C \slash \h$}

\RightLabel{$\imprtwo$}
\BinaryInfC{$\g \slash \Gamma_{1} \vdash \Delta_{1} \slash \vdash A[\fvx / x], B \imp C \slash \Gamma_{2} \vdash \Delta_{2} \slash \h$}
\DisplayProof
\end{tabular}
}
\end{center}

In the $\allrone$ and $\existsl$ cases, we must ensure that the eigenvariable of the inference is not identical to the parameter $\fvx$ in $A[\fvx / x]$ introduced by IH. However, this can always be ensured by Lem.~\ref{lm:substitution-lemma}. Therefore, we move onto the last nontrivial case, which concerns the $\allrtwo$ rule. The only nontrivial case occurs as shown below top and is resolved as shown below bottom. In all other cases, we apply IH followed by the $\allrtwo$ rule (invoking Lem.~\ref{lm:substitution-lemma} if necessary).
\begin{center}
\resizebox{\columnwidth}{!}{
\begin{tabular}{c} 
\AxiomC{$\g \slash \Gamma_{1} \vdash \Delta_{1}, \forall x A \slash \vdash B[\fvy/y] \slash \Gamma_{2} \vdash \Delta_{2} \slash \h$}
\AxiomC{$\g \slash \Gamma_{1} \vdash \Delta_{1}, \forall x A \slash \Gamma_{2} \vdash \Delta_{2}, \forall y B \slash \h$}
\RightLabel{$\allrtwo$}
\BinaryInfC{$\g \slash \Gamma_{1} \vdash \Delta_{1}, \forall x A, \forall y B \slash \Gamma_{2} \vdash \Delta_{2} \slash \h$}
\DisplayProof
\end{tabular}
}
\end{center}

\begin{center}
\resizebox{\columnwidth}{!}{
\begin{tabular}{c c c}
\AxiomC{$\Pi_{1}$}
\AxiomC{$\Pi_{2}$}
\RightLabel{$\allrtwo$}
\BinaryInfC{$\g \slash \Gamma_{1} \vdash \Delta_{1}, \forall y B \slash \vdash A[\fvx/x] \slash \Gamma_{2} \vdash \Delta_{2} \slash \h$}
\DisplayProof

&

&

\begin{tabular}{c c c}

$\Pi_{1}$

&

$= \Bigg \{$

&

\AxiomC{$\g \slash \Gamma_{1} \vdash \Delta_{1}, \forall x A \slash \vdash B[\fvy/y] \slash \Gamma_{2} \vdash \Delta_{2} \slash \h$}
\RightLabel{Lem.~\ref{lm:substitution-lemma}}
\dashedLine
\UnaryInfC{$\g \slash \Gamma_{1} \vdash \Delta_{1} \slash \vdash B[\fvy/y], \forall x A \slash \Gamma_{2} \vdash \Delta_{2} \slash \h$}
\RightLabel{IH}
\dashedLine
\UnaryInfC{$\g \slash \Gamma_{1} \vdash \Delta_{1} \slash \vdash B[\fvy/y] \slash \vdash A[\fvx / x] \slash \Gamma_{2} \vdash \Delta_{2} \slash \h$}
\DisplayProof
\end{tabular}
\end{tabular}
}
\end{center}
\begin{center}
\resizebox{\columnwidth}{!}{
\begin{tabular}{c c c} 
$\Pi_{2}$

&

$= \Bigg \{$

&

\AxiomC{$\g \slash \Gamma_{1} \vdash \Delta_{1}, \forall x A \slash \vdash B[\fvy/y] \slash \Gamma_{2} \vdash \Delta_{2} \slash \h$}
\RightLabel{IH}
\dashedLine
\UnaryInfC{$\g \slash \Gamma_{1} \vdash \Delta_{1} \slash \vdash A[\fvx / x] \slash \vdash B[\fvy/y] \slash \Gamma_{2} \vdash \Delta_{2} \slash \h$}

\AxiomC{$\g \slash \Gamma_{1} \vdash \Delta_{1}, \forall x A \slash \Gamma_{2} \vdash \Delta_{2}, \forall y B \slash \h$}
\RightLabel{IH}
\dashedLine
\UnaryInfC{$\g \slash \Gamma_{1} \vdash \Delta_{1} \slash \vdash A[\fvx / x] \slash \Gamma_{2} \vdash \Delta_{2}, \forall y B \slash \h$}

\RightLabel{$\allrtwo$}
\BinaryInfC{$\g \slash \Gamma_{1} \vdash \Delta_{1} \slash \vdash A[\fvx / x], \forall y B \slash \Gamma_{2} \vdash \Delta_{2} \slash \h$}
\DisplayProof
\end{tabular}
}
\end{center}
\qed
\end{proof}

\begin{lemma}\label{lm:invert-implies-right-one} The $\improne$ rule is invertible in $\calc$.

\end{lemma}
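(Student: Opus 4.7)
The plan is to prove the result by induction on the height of the derivation of $\g \slash \Gamma \vdash \Delta, A \imp B$, mirroring the strategy of Lem.~\ref{lm:invert-implies-right-two} and Lem.~\ref{lm:invert-all-right-two}. In the base case, because $A \imp B$ is neither atomic nor $\bot$, any instance of $\idone$, $\idtwo$, or $\botl$ that derives the conclusion remains an initial sequent of the same kind after one occurrence of $A \imp B$ is excised from the last component and a fresh component $A \vdash B$ is appended.

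In the inductive step I would do a case analysis on the last rule. For all propositional and quantifier rules that do not decompose a right-side connective in the same last component as $A \imp B$---including any instance of $\lift$, $\imprtwo$, or $\allrtwo$ acting only on components strictly before the last---the target follows by applying IH to each premise (which pushes $A \imp B$ into a new trailing $A \vdash B$ component) and then reapplying the rule, invoking Lem.~\ref{lm:substitution-lemma} to rename eigenvariables if needed. When $\lift$, $\imprtwo$, or $\allrtwo$ instead acts on the last two components, IH again splits $A \imp B$ off, and the rule is reapplied by treating the newly created $A \vdash B$ as the $\h$ trailing its schema. The case in which the last rule is $\improne$ with principal $A \imp B$ is immediate, since its premise is literally the target.

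The real work lies in the two subcases in which a different right-side connective is decomposed from the very last component that carries $A \imp B$: either the last rule is $\improne$ with some $C \imp D \ne A \imp B$ (so $\Delta = E, C \imp D$ and the premise reads $\g \slash \Gamma \vdash E, A \imp B \slash C \vdash D$), or $\allrone$ with principal $\forall x C$ (so the premise reads $\g \slash \Gamma \vdash E, A \imp B \slash \vdash C[\fvx/x]$). For the $\improne$ subcase I would derive the target $\g \slash \Gamma \vdash E, C \imp D \slash A \vdash B$ by a single final application of $\imprtwo$, producing its left premise $\g \slash \Gamma \vdash E \slash C \vdash D \slash A \vdash B$ by applying Lem.~\ref{lm:admiss-lower} to the original premise and then invoking IH, and its right premise $\g \slash \Gamma \vdash E \slash A \vdash B, C \imp D$ by first applying Lem.~\ref{lm:invert-implies-right-two} to the original premise (yielding $\g \slash \Gamma \vdash E \slash A \vdash B \slash C \vdash D$) and then contracting the last two components with $\improne$. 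The $\allrone$ subcase is completely parallel, with $\allrtwo$ replacing $\imprtwo$, $\allrone$ replacing the final $\improne$, and a routine appeal to Lem.~\ref{lm:substitution-lemma} ensuring that the eigenvariable of $\allrone$ does not occur in $A$ or $B$ (which it cannot, since all parameters of those formulae already occur in the original conclusion).

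The main obstacle will be the bookkeeping in these two subcases, since each of the two premises needed for the final $\imprtwo$/$\allrtwo$ reassembly is produced by a different combination of Lem.~\ref{lm:admiss-lower}, Lem.~\ref{lm:invert-implies-right-two}, and IH, and the eigenvariable side condition of $\allrtwo$ must be re-verified after the recombination.
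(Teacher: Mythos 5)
Your proposal is correct and follows essentially the same route as the paper: induction on the height of the derivation, with the critical $\allrone$ case (and its propositional analogue, which the paper defers to the cited Lem.~5.11 of~\cite{LelKuz18}) resolved by combining hp-admissibility of $\lwr$ with IH for one premise, invertibility of $\imprtwo$ followed by $\allrone$ (resp.\ $\improne$) for the other, and a final $\allrtwo$ (resp.\ $\imprtwo$) to reassemble. The remaining cases, eigenvariable renamings via Lem.~\ref{lm:substitution-lemma} included, are handled exactly as in the paper.
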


\begin{proof} We extend the proof of~\cite[Lem.~5.11]{LelKuz18} to include the quantifier cases. The claim is shown by induction on the height of the given derivation. When the last rule of the derivation is $\alll$, $\existsl$, $\existsr$, or $\allrtwo$ in the inductive step, we apply IH to the premise(s) of the inference followed by an application of the corresponding rule. If the last inference of the derivation is an application of the $\allrone$ rule (as shown below left), then the case is resolved as shown below right.
\begin{center}
\resizebox{\columnwidth}{!}{
\begin{tabular}{c c c}
\AxiomC{$\g \slash \Gamma \vdash \Delta, A \imp B \slash \vdash C[\fvx / x]$}
\RightLabel{$\allrone$}
\UnaryInfC{$\g \slash \Gamma \vdash \Delta, A \imp B, \forall x C$}
\DisplayProof

&

&

\AxiomC{$\g \slash \Gamma \vdash \Delta, A \imp B \slash \vdash C[\fvx / x]$}
\RightLabel{Lem.~\ref{lm:admiss-lower}}
\dashedLine
\UnaryInfC{$\g \slash \Gamma \vdash \Delta \slash \vdash C[\fvx / x], A \imp B$}
\RightLabel{IH}
\dashedLine
\UnaryInfC{$\g \slash \Gamma \vdash \Delta \slash \vdash C[\fvx / x] \slash A \vdash B$}

\AxiomC{$\g \slash \Gamma \vdash \Delta, A \imp B \slash \vdash C[\fvx / x]$}
\RightLabel{Lem.~\ref{lm:invert-implies-right-two}}
\dashedLine
\UnaryInfC{$\g \slash \Gamma \vdash \Delta \slash A \vdash B \slash \vdash C[\fvx / x]$}
\RightLabel{$\allrone$}
\UnaryInfC{$\g \slash \Gamma \vdash \Delta \slash A \vdash B, \forall x C$}

\RightLabel{$\allrtwo$}
\BinaryInfC{$\g \slash \Gamma \vdash \Delta, \forall x C \slash A \vdash B$}
\DisplayProof
\end{tabular}
}
\end{center}
\qed
\end{proof}

\begin{lemma}\label{lm:invert-all-right-one} The $\allrone$ rule is invertible in $\calc$.

\end{lemma}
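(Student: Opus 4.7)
The plan is to prove the lemma by induction on the height of the given derivation of $\g \sslash \Gamma \vdash \Delta, \forall x A$, establishing in parallel the derivability of $\g \sslash \Gamma \vdash \Delta \sslash \vdash A[\fvx/x]$ (with $\fvx$ an eigenvariable). In the base case, if the conclusion is an initial sequent obtained via $\idone$, $\idtwo$, or $\botl$, then appending a new last component $\vdash A[\fvx/x]$ preserves the pattern of the initial-sequent schema, so the resulting sequent is again initial.

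In the inductive step, the majority of rules are handled routinely: if the last rule either does not touch the last component of the conclusion, or modifies only a side formula within $\Delta$ while leaving $\forall x A$ in place, I invoke IH on each premise and re-apply the rule. In particular, the $\imprtwo$ and $\allrtwo$ cases in which $\forall x A$ sits in the last component of the conclusion proceed by a double application of IH followed by the rule, since $\forall x A$ remains in the final component of each premise. When the last rule is $\allrone$ or $\existsl$ with an eigenvariable, I first appeal to Lem.~\ref{lm:substitution-lemma} to rename so that its eigenvariable differs from $\fvx$. The $\lift$ case also goes through by IH and then $\lift$, since the two components it acts on remain consecutive after a new last component has been added.

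The main obstacles are the cases where the final rule introduces a new last component and so pushes $\forall x A$ into the penultimate position, namely $\improne$ on some $C \imp D \in \Delta$, and $\allrone$ on some $\forall y B \in \Delta$ distinct from $\forall x A$ (the sub-case in which the principal formula is $\forall x A$ itself is immediate, since the premise then coincides with the target). I propose to resolve both by mirroring the strategy used in Lem.~\ref{lm:invert-implies-right-one}: produce two subderivations and combine them via $\imprtwo$ or $\allrtwo$. Concretely, for an $\improne$ step whose premise is $\g \sslash \Gamma \vdash \Delta', \forall x A \sslash C \vdash D$ (with $\Delta = \Delta', C \imp D$), I first use Lem.~\ref{lm:admiss-lower} to lower $\forall x A$ into the last component and then apply IH to obtain $\g \sslash \Gamma \vdash \Delta' \sslash C \vdash D \sslash \vdash A[\fvx/x]$; separately, Lem.~\ref{lm:invert-all-right-two} followed by $\improne$ yields $\g \sslash \Gamma \vdash \Delta' \sslash \vdash A[\fvx/x], C \imp D$; combining these two sequents with $\imprtwo$ delivers the target $\g \sslash \Gamma \vdash \Delta', C \imp D \sslash \vdash A[\fvx/x]$. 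The $\allrone$-on-$\forall y B$ case is fully analogous, using $\allrtwo$ as the combining rule and inheriting the eigenvariable freshness of $\fvy$ from the original $\allrone$ application to discharge the side-condition $\dag$.
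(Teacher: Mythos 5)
Your proposal is correct and follows essentially the same route as the paper: induction on height, routine cases by IH plus re-application of the rule, eigenvariable renaming via Lem.~\ref{lm:substitution-lemma}, and the two critical cases ($\improne$, and $\allrone$ on a non-principal $\forall y B$) resolved by splitting into a branch using Lem.~\ref{lm:admiss-lower} plus IH and a branch using Lem.~\ref{lm:invert-all-right-two} plus the original rule, then recombining with $\imprtwo$ or $\allrtwo$. The only cosmetic difference is that the paper also flags the $\allrtwo$ case (not just $\allrone$ and $\existsl$) as requiring the eigenvariable renaming, but this is the same substitution argument you already invoke.
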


\begin{proof} We prove the result by induction on the height of the given derivation of $\g \slash \Gamma \vdash \Delta, \forall x A$ and show that $\g \slash \Gamma \vdash \Delta \slash \vdash A[\fvx / x]$ is derivable.

\textit{Base case.} If $\g \slash \Gamma \vdash \Delta, \forall x A$ is obtained via $\idone$, $\idtwo$, or $\botl$, then $\g \slash \Gamma \vdash \Delta \slash \vdash A[\fvx / x]$ is an instance of the corresponding rule as well.

\textit{Inductive step.} All cases, with the exception of the $\improne$, $\allrone$, $\existsl$, and $\allrtwo$ rules, are resolved by applying IH to the premise(s) and then applying the relevant rule. Let us consider each of the additional cases in turn.

The $\improne$ case is shown below left and is resolved as shown below right.
\begin{center}
\resizebox{\columnwidth}{!}{
\begin{tabular}{c c c}
\AxiomC{$\g \slash \Gamma \vdash \Delta, \forall x A \slash B \vdash C$}
\RightLabel{$\improne$}
\UnaryInfC{$\g \slash \Gamma \vdash \Delta, \forall x A, B \imp C$}
\DisplayProof

&

&

\AxiomC{$\g \slash \Gamma \vdash \Delta, \forall x A \slash B \vdash C$}
\RightLabel{Lem.~\ref{lm:admiss-lower}}
\dashedLine
\UnaryInfC{$\g \slash \Gamma \vdash \Delta \slash B \vdash C, \forall x A$}
\RightLabel{IH}
\dashedLine
\UnaryInfC{$\g \slash \Gamma \vdash \Delta \slash B \vdash C \slash \vdash A[\fvx / x]$}

\AxiomC{$\g \slash \Gamma \vdash \Delta, \forall x A \slash B \vdash C$}
\RightLabel{Lem.~\ref{lm:invert-all-right-two}}
\dashedLine
\UnaryInfC{$\g \slash \Gamma \vdash \Delta \slash \vdash A[\fvx / x] \slash B \vdash C$}
\RightLabel{$\improne$}
\UnaryInfC{$\g \slash \Gamma \vdash \Delta \slash \vdash A[\fvx / x], B \imp C$}

\RightLabel{$\imprtwo$}
\BinaryInfC{$\g \slash \Gamma \vdash \Delta, B \imp C \slash \vdash A[\fvx / x]$}
\DisplayProof
\end{tabular}
}
\end{center}

In the $\allrone$ case where the relevant formula $\forall x A$ is principal, the premise of the inference is the desired conclusion. If the relevant formula $\forall x A$ is not principal, then the $\allrone$ inference is of the form shown below left and is resolved as shown below right.
\begin{center}
\resizebox{\columnwidth}{!}{
\begin{tabular}{c c c}
\AxiomC{$\g \slash \Gamma \vdash \Delta, \forall x A \slash B[\fvy / y]$}
\RightLabel{$\allrone$}
\UnaryInfC{$\g \slash \Gamma \vdash \Delta, \forall x A, \forall y B$}
\DisplayProof

&

&

\AxiomC{$\g \slash \Gamma \vdash \Delta, \forall x A \slash \vdash B[\fvy / y]$}
\RightLabel{Lem.~\ref{lm:admiss-lower}}
\dashedLine
\UnaryInfC{$\g \slash \Gamma \vdash \Delta \slash B[\fvy / y], \forall x A$}
\RightLabel{IH}
\dashedLine
\UnaryInfC{$\g \slash \Gamma \vdash \Delta \slash \vdash B[\fvy / y] \slash \vdash A[\fvx / x]$}

\AxiomC{$\g \slash \Gamma \vdash \Delta, \forall x A \slash \vdash B[\fvy / y]$}
\RightLabel{Lem.~\ref{lm:invert-all-right-two}}
\dashedLine
\UnaryInfC{$\g \slash \Gamma \vdash \Delta \slash \vdash A[\fvx / x] \slash \vdash B[\fvy / y]$}
\RightLabel{$\allrone$}
\UnaryInfC{$\g \slash \Gamma \vdash \Delta \slash \vdash A[\fvx / x], \forall y B$}

\RightLabel{$\allrtwo$}
\BinaryInfC{$\g \slash \Gamma \vdash \Delta, \forall y B \slash \vdash A[\fvx / x]$}
\DisplayProof

\end{tabular}
}
\end{center}

If the last inference is an instance of the $\existsl$ or $\allrtwo$ rule, then we must ensure that the eigenvariable of the inference is not identical to the parameter $a$ in $A[\fvx / x]$ introduced by IH, but this can always be ensured due to Lem.~\ref{lm:substitution-lemma}.
\qed
\end{proof}

\begin{lemma}\label{lm:admiss-icl}
The $\icl$ rule is admissible in $\calc$.
\end{lemma}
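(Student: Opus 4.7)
I will prove admissibility of $\icl$ by induction on the lexicographically ordered pair $(|A|, h)$, where $A$ is the contracted formula and $h$ is the height of the given derivation $\Pi$ of the premise $\seq{\g}{\Gamma, A, A}{\Delta}{\h}$. In the base case, $\Pi$ ends with $\idone$, $\idtwo$, or $\botl$, and the desired contracted sequent $\seq{\g}{\Gamma, A}{\Delta}{\h}$ is directly an instance of the same initial rule, since deleting one of two copies of a formula cannot destroy the required atomic match or the $\bot$ in the antecedent.

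For the inductive step, I case split on the last rule $r$ of $\Pi$. If neither copy of $A$ is principal in $r$, I apply the inner IH (same $A$, smaller $h$) to each premise to eliminate one copy of $A$, then reapply $r$. The rules whose premises may spread occurrences of $A$ across distinct components, in particular $\lift$, $\imprtwo$, $\allrtwo$, and $\impl$, pose no real difficulty: one invokes the inner IH separately in each relevant component, always at strictly smaller height.

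The genuine difficulty arises when $r$ introduces $A$ as its principal formula, in which case $A$ is non-atomic. The strategy is to apply the matching clause of Lem.~\ref{lm:m-invertibility} to the premise(s) of $r$ in order to invert the remaining occurrence of $A$ into its subformulae, each of strictly smaller complexity than $A$; the outer IH then contracts each pair of duplicated subformulae, and $r$ is reapplied to rebuild $A$. For instance, when $r = \landl$ with $A = A_{1} \wedge A_{2}$ principal, the premise has shape $\seq{\g}{\Gamma, A, A_{1}, A_{2}}{\Delta}{\h}$; Lem.~\ref{lm:m-invertibility}(i) delivers $\seq{\g}{\Gamma, A_{1}, A_{1}, A_{2}, A_{2}}{\Delta}{\h}$, two applications of the outer IH yield $\seq{\g}{\Gamma, A_{1}, A_{2}}{\Delta}{\h}$, and $\landl$ concludes. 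The $\impl$ case is handled analogously, using Lem.~\ref{lm:m-invertibility}(iii) to contract on $A_{2}$ in the left premise while handling the right premise by the inner IH at smaller height.

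The principal obstacle is the quantifier case with an eigenvariable, most notably $\existsl$ (and symmetrically $\allrone$). If $A = \exists x B$ is principal in $\existsl$, the premise takes the form $\seq{\g}{\Gamma, \exists x B, B[\fvx / x]}{\Delta}{\h}$ with $\fvx$ not occurring in the conclusion, so the two copies of $A$ in the conclusion do not directly survive as two copies of a common formula in the premise. Inverting the remaining $\exists x B$ via Lem.~\ref{lm:m-invertibility}(v) introduces a fresh parameter $\fvy$, giving $\seq{\g}{\Gamma, B[\fvy / x], B[\fvx / x]}{\Delta}{\h}$. I then apply Lem.~\ref{lm:substitution-lemma} (hp-admissibility of $\vsub$) to substitute $\fvx$ for $\fvy$ and unify the two copies into $B[\fvx / x]$, after which the outer IH (at strictly smaller complexity $|B[\fvx / x]| < |A|$) contracts them, and a final application of $\existsl$ with eigenvariable $\fvx$ delivers the goal $\seq{\g}{\Gamma, \exists x B}{\Delta}{\h}$.
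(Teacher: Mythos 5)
Your proof is correct and follows essentially the same route as the paper: induction on the lexicographic pair $(|A|,h)$, with the only genuinely nontrivial case being a principal $\existsl$ on $\exists x B$, resolved by inverting the surviving copy via Lem.~\ref{lm:m-invertibility}(v) and then contracting $B[\fvx/x]$ at strictly smaller complexity before reapplying $\existsl$. Your extra $\vsub$ step to unify the two parameters is a harmless (indeed slightly more careful) refinement of the paper's direct instantiation of the inversion at the existing parameter; the only small slip is the parenthetical about $\allrone$, whose principal formula sits in the succedent and so can never be a contracted antecedent formula, and note that a principal $\alll$ needs no inversion at all since the rule retains its principal formula in the premise.
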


\begin{proof} We extend the proof of \cite[Lem.~5.12]{LelKuz18} and prove the result by induction on the lexicographic ordering of pairs $(|A|,h)$, where $|A|$ is the complexity of the contraction formula $A$ and $h$ is the height of the derivation. We know the result holds for $\lng$, and so, we argue the inductive step for the quantifier rules. 

With the exception of the $\existsl$ case shown below left, all quantifier cases are settled by applying IH followed by an application of the corresponding rule. The only nontrivial case occurs when a contraction is performed on a formula $\exists x A$ with one of the contraction formulae principal in the $\icl$ inference. The situation is resolved as shown below right.
\begin{center}
\begin{tabular}{c @{\hskip 1em} c}
\AxiomC{$\g \slash \Gamma, A[a/x], \exists x A \vdash \Delta \slash \h$}
\RightLabel{$\existsl$}
\UnaryInfC{$\g \slash \Gamma, \exists x A, \exists x A \vdash \Delta \slash \h$}
\RightLabel{$\icl$}
\UnaryInfC{$\g \slash \Gamma, \exists x A \vdash \Delta \slash \h$}
\DisplayProof

&

\AxiomC{$\g \slash \Gamma, A[a/x], \exists x A \vdash \Delta \slash \h$}
\RightLabel{Lem.~\ref{lm:m-invertibility}}
\dashedLine
\UnaryInfC{$\g \slash \Gamma, A[a/x], A[a/x] \vdash \Delta \slash \h$}
\RightLabel{IH}
\dashedLine
\UnaryInfC{$\g \slash \Gamma, A[a/x] \vdash \Delta \slash \h$}
\RightLabel{$\existsl$}
\UnaryInfC{$\g \slash \Gamma, \exists x A \vdash \Delta \slash \h$}
\DisplayProof
\end{tabular}
\end{center}
Notice that IH is applicable since we are contracting on a formula of smaller complexity.
\qed
\end{proof}

\begin{lemma}
\label{lm:admiss-mrg}
The $\mrg$ rule is admissible in $\calc$.
\end{lemma}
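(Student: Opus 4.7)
The plan is to prove admissibility of $\mrg$ by induction on the height of the given derivation of the premise $\g \sslash \Gamma_{1} \vdash \Delta_{1} \sslash \Gamma_{2} \vdash \Delta_{2} \sslash \h$, producing a derivation of the merged conclusion $\g \sslash \Gamma_1, \Gamma_2 \vdash \Delta_1, \Delta_2 \sslash \h$. Since the corresponding result for the propositional fragment $\lng$ is already available from~\cite{LelKuz18}, it suffices to extend the induction with the quantifier rules and to re-verify the cases involving rules that manipulate component structure, namely $\lift$, $\improne$, $\allrone$, $\imprtwo$, and $\allrtwo$.

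In the base case, any initial sequent remains initial after merging: if both principal atoms of $\idone$ lie in one of the two merged components they stay paired; if they straddle the two components (an instance of $\idtwo$) they collapse into an instance of $\idone$; and $\botl$ instances are preserved. In the inductive step, for every rule that is local to a single component — $\landl$, $\landr$, $\lorl$, $\lorr$, $\impl$, $\alll$, $\existsl$, $\existsr$ — as well as for any inference acting on a component entirely before or after the pair being merged, I simply apply IH to the premise(s) and reapply the rule within the merged sequent, invoking Lem.~\ref{lm:substitution-lemma} to rename an eigenvariable should a clash arise.

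The genuinely delicate cases are the structural rules. For $\lift$, if the two components involved in the lift are precisely those being merged, IH yields two copies of the lifted formula $A$ in the merged antecedent, which I then contract via Lem.~\ref{lm:admiss-icl}; in every other configuration IH followed by a single reapplication of $\lift$ suffices. For $\improne$ and $\allrone$, the premise contains a freshly introduced last component that sits outside the pair being merged, so IH applied to that premise followed by reapplication of the rule (with the eigenvariable renamed via Lem.~\ref{lm:substitution-lemma} if necessary) delivers the merged conclusion. For $\imprtwo$ and $\allrtwo$, the only interesting configuration is when the two components being merged are exactly the pair bracketing the principal formula $A \supset B$ or $\forall x A$. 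Here I circumvent the left premise — whose extra middle component $A \vdash B$ (respectively $\vdash A[\fvx/x]$) would force me to merge two non-adjacent components — and work instead with the right premise, which already places the principal formula in the right-hand component of the pair, so that a single application of IH directly yields the desired merged sequent. All remaining placements of $\imprtwo$ and $\allrtwo$ reduce to an application of IH on both premises followed by reapplication of the rule.

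The principal technical obstacle will be the bookkeeping needed to keep track of how the components of the $\mrg$-premise correspond to those of the premise(s) of the last rule applied, together with verifying that the eigenvariable side conditions of $\allrone$, $\existsl$, and $\allrtwo$ remain satisfied after merging — a difficulty always resolvable by Lem.~\ref{lm:substitution-lemma}. The ``right-premise trick'' for $\imprtwo$ and $\allrtwo$ is the one genuinely new idea needed beyond the propositional argument, and it is precisely what prevents the need to merge non-adjacent components. No circularity arises, since $\mrg$ is not claimed to be height-preserving and the auxiliary appeals to Lem.~\ref{lm:admiss-icl} have already been established.
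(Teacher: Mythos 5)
Your proposal is correct and follows essentially the same route as the paper: induction on the height of the derivation, delegating the propositional cases to the result for $\lng$ from~\cite{LelKuz18}, handling the remaining quantifier rules by IH plus rule reapplication, and resolving the critical case---$\mrg$ applied to the two components bracketing an $\allrtwo$ (or $\imprtwo$) inference---by applying IH to the right premise alone, which is precisely the paper's move. Your explicit re-verification of the $\lift$, $\improne$, and $\imprtwo$ cases (and the appeal to Lem.~\ref{lm:admiss-icl} in the $\lift$ case) is redundant given the propositional lemma but harmless and correct.
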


\begin{proof} We extend the proof of \cite[Lem.~5.13]{LelKuz18}, which proves that $\mrg$ is admissible in $\lng$, and prove the admissibility of $\mrg$ in $\calc$ by induction on the height of the given derivation. We need only consider the quantifier rules due to \cite[Lem.~5.13]{LelKuz18}. The $\allrone$, $\alll$, $\existsl$, and $\existsr$ cases are all resolved by applying IH to the premise of the rule followed by an application of the rule. If $\mrg$ is applied to the principal components of the $\allrtwo$ rule as follows:
\[
\infer[\mrg]
{\g \sslash \Gamma_{1}, \Gamma_{2} \vdash \Delta_{1}, \Delta_{2}, \forall x A \sslash \h}
{
\infer[\allrtwo]
{\g \sslash \Gamma_{1} \vdash \Delta_{1}, \forall x A \sslash \Gamma_{2} \vdash \Delta_{2} \sslash \h}
{{\g \sslash \Gamma_{1} \vdash \Delta_{1} \sslash \vdash A[\fvx/x] \sslash \Gamma_{2} \vdash \Delta_{2} \sslash \h} & {\g \sslash \Gamma_{1} \vdash \Delta_{1} \sslash \Gamma_{2} \vdash \Delta_{2}, \forall x A \sslash \h}}
}
\]
then the desired conclusion is obtained by applying IH to the top right premise. 
In all other cases, we apply IH to the premises of $\allrtwo$ followed by an application of the rule.
\qed
\end{proof}

\begin{lemma}
\label{lm:admiss-icr}
The $\icr$ rule is admissible in $\calc$.
\end{lemma}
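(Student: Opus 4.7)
My plan is to follow the template of the $\icl$ admissibility proof (Lem.~\ref{lm:admiss-icl}) and induct on the lexicographic ordering of pairs $(|A|, h)$, where $|A|$ is the complexity of the contraction formula $A$ and $h$ is the height of the derivation of $\g \slash \Gamma \vdash A, A, \Delta \slash \h$. Since the propositional case is settled in \cite{LelKuz18}, the task is to extend the argument to cover the new quantifier right rules $\existsr$, $\allrone$, and $\allrtwo$. Base cases ($\idone$, $\idtwo$, $\botl$) are immediate, using Lem.~\ref{lm:hp-admiss-iw} to supply the extra copy of $\Gamma$ in the conclusion if necessary. Cases in which $A$ is not principal in the last rule are resolved by applying IH to the premise(s) and reapplying the rule, invoking Lem.~\ref{lm:substitution-lemma} to rename eigenvariables of $\allrone$, $\allrtwo$, or $\existsl$ when needed.

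If $A = \exists x B$ is principal in $\existsr$, I would use hp-invertibility of $\existsr$ (Lem.~\ref{lm:invertibility-of-andr-orr-existsr}) to collapse both copies of $\exists x B$ into two occurrences of $B[\fvx/x]$, apply IH on the strictly smaller $B[\fvx/x]$, and reapply $\existsr$. If $A = \forall x B$ is principal in $\allrone$, I would invert the second $\forall x B$ via Lem.~\ref{lm:invert-all-right-two}, align the two eigenvariables using Lem.~\ref{lm:substitution-lemma}, merge the resulting fresh $\vdash B[\fvx/x]$ components via Lem.~\ref{lm:admiss-mrg}, apply IH on the strictly smaller $B[\fvx/x]$, reapply $\allrone$, and close with Lem.~\ref{lm:hp-admiss-iw} to supply the extra copy of $\Gamma$.

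The main obstacle is $A = \forall x B$ principal in $\allrtwo$, which parallels the difficult $\imprtwo$ case of the propositional fragment. Since $\allrtwo$ is only invertible and not hp-invertible (Lem.~\ref{lm:invert-all-right-two}), a direct appeal to IH on an inverted premise at the same complexity is unavailable, so the two premises of a new $\allrtwo$ application (whose conclusion is the desired sequent) must be constructed separately. For the left premise of the new $\allrtwo$, I would take the left premise of the original $\allrtwo$, apply Lem.~\ref{lm:invert-all-right-two} followed by Lem.~\ref{lm:substitution-lemma} and Lem.~\ref{lm:admiss-mrg} to expose two copies of $B[\fvx/x]$ in a single component, apply IH on the strictly smaller $B[\fvx/x]$, and finish with Lem.~\ref{lm:hp-admiss-iw}; for the right premise, I would lower the stray $\forall x B$ from the first to the second component via Lem.~\ref{lm:admiss-lower} to expose two copies of $\forall x B$ in the same component, then invoke IH at the same complexity but strictly smaller height (legitimate because $\lwr$ is hp-admissible), followed by Lem.~\ref{lm:admiss-icl} and Lem.~\ref{lm:hp-admiss-iw}. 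A final application of $\allrtwo$ yields the desired conclusion. The delicate point throughout is verifying that every IH appeal is either on a strictly smaller formula or, at the same complexity, on a strictly smaller height, with the latter available precisely because $\lwr$ preserves the height of the derivation.
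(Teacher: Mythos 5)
Your proposal is correct and follows essentially the same route as the paper: the same lexicographic induction, inversion of $\allrtwo$ plus $\mrg$ plus IH on the smaller formula for the left branch of the critical $\allrtwo$ case, and hp-admissibility of $\lwr$ to justify the same-complexity, smaller-height IH appeal in the right branch. Your extra bookkeeping with Lem.~\ref{lm:substitution-lemma}, $\icl$, and $\iw$ only makes explicit steps the paper leaves implicit.
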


\begin{proof} We extend the proof of \cite[Lem.~5.14]{LelKuz18} to include the quantifier rules and argue the claim by induction on the lexicographic ordering of pairs $(|A|,h)$, where $|A|$ is the complexity of the contraction formula $A$ and $h$ is the height of the derivation. The $\alll$ and $\existsl$ cases are settled by applying IH to the premise of the inference followed by an application of the rule. For the $\existsr$ case, we apply IH to the premise of the rule, followed by the rule. The nontrivial case (occurring when the principal formula is contracted) for the $\allrone$ rule is shown below left, and the desired conclusion is derived as shown below right (where IH is applicable due to the decreased complexity of the contraction formula).
\begin{center}
\begin{tabular}{c @{\hskip 1em} c}
\AxiomC{$\g \slash \Gamma \vdash \Delta, \forall x A \slash \vdash A[\fvx / x]$}
\RightLabel{$\allrone$}
\UnaryInfC{$\g \slash \Gamma \vdash \Delta, \forall x A, \forall x A$}
\RightLabel{$\icr$}
\UnaryInfC{$\g \slash \Gamma \vdash \Delta, \forall x A$}
\DisplayProof

&

\AxiomC{$\g \slash \Gamma \vdash \Delta, \forall x A \slash \vdash A[\fvx / x]$}
\RightLabel{Lem.~\ref{lm:invert-all-right-two}}
\dashedLine
\UnaryInfC{$\g \slash \Gamma \vdash \Delta \slash \vdash A[\fvx / x] \slash \vdash A[\fvx / x]$}
\RightLabel{Lem.~\ref{lm:admiss-mrg}}
\dashedLine
\UnaryInfC{$\g \slash \Gamma \vdash \Delta \slash \vdash A[\fvx / x], A[\fvx / x]$}
\RightLabel{IH}
\dashedLine
\UnaryInfC{$\g \slash \Gamma \vdash \Delta \slash \vdash A[\fvx / x]$}
\RightLabel{$\allrone$}
\UnaryInfC{$\g \slash \Gamma \vdash \Delta, \forall x A$}
\DisplayProof
\end{tabular}
\end{center}
When the contracted formulae are both non-principal in an $\allrone$ inference, we apply IH to the premise followed by an application of the $\allrone$ rule. If the contracted formulae are both non-principal in an $\allrtwo$ inference, then we apply IH to the premises followed by an application of the rule. If one of the contracted formulae is principal in an $\allrtwo$ inference (as shown below top), then the case is settled as shown below bottom.
\begin{center}
\resizebox{\columnwidth}{!}{
\begin{tabular}{c}
\AxiomC{$\g \slash \Gamma_{1} \vdash \Delta_{1}, \forall x A \slash \vdash A[\fvx / x] \slash \Gamma_{2} \vdash \Delta_{2} \slash \h$}
\AxiomC{$\g \slash \Gamma_{1} \vdash \Delta_{1}, \forall x A \slash \Gamma_{2} \vdash \Delta_{2}, \forall x A \slash \h$}
\RightLabel{$\allrtwo$}
\BinaryInfC{$\g \slash \Gamma_{1} \vdash \Delta_{1}, \forall x A, \forall x A \slash \Gamma_{2} \vdash \Delta_{2} \slash \h$}
\DisplayProof
\end{tabular}
}
\end{center}
\begin{center}
\resizebox{\columnwidth}{!}{
\begin{tabular}{c} 
\AxiomC{$\g \slash \Gamma_{1} \vdash \Delta_{1}, \forall x A \slash \vdash A[\fvx / x] \slash \Gamma_{2} \vdash \Delta_{2} \slash \h$}
\RightLabel{Lem.~\ref{lm:invert-all-right-two}}
\dashedLine
\UnaryInfC{$\g \slash \Gamma_{1} \vdash \Delta_{1} \slash \vdash A[\fvx / x] \slash \vdash A[\fvx / x] \slash \Gamma_{2} \vdash \Delta_{2} \slash \h$}
\RightLabel{Lem.~\ref{lm:admiss-mrg}}
\dashedLine
\UnaryInfC{$\g \slash \Gamma_{1} \vdash \Delta_{1} \slash \vdash A[\fvx / x], A[\fvx / x] \slash \Gamma_{2} \vdash \Delta_{2} \slash \h$}
\RightLabel{IH}
\dashedLine
\UnaryInfC{$\g \slash \Gamma_{1} \vdash \Delta_{1} \slash \vdash A[\fvx / x] \slash \Gamma_{2} \vdash \Delta_{2} \slash \h$}

\AxiomC{$\g \slash \Gamma_{1} \vdash \Delta_{1}, \forall x A \slash \Gamma_{2} \vdash \Delta_{2}, \forall x A \slash \h$}
\RightLabel{Lem.~\ref{lm:admiss-lower}}
\dashedLine
\UnaryInfC{$\g \slash \Gamma_{1} \vdash \Delta_{1} \slash \Gamma_{2} \vdash \Delta_{2}, \forall x A, \forall x A \slash \h$}
\RightLabel{IH}
\dashedLine
\UnaryInfC{$\g \slash \Gamma_{1} \vdash \Delta_{1} \slash \Gamma_{2} \vdash \Delta_{2}, \forall x A \slash \h$}

\RightLabel{$\allrtwo$}
\BinaryInfC{$\g \slash \Gamma_{1} \vdash \Delta_{1}, \forall x A \slash \Gamma_{2} \vdash \Delta_{2} \slash \h$}
\DisplayProof
\end{tabular}
}
\end{center}
Note that we may apply IH in the left branch of the derivation since the complexity of the contraction formula is less than $\forall x A$, and we may apply IH in the right branch since the height of the derivation is less than the original.
\qed
\end{proof}

Before moving on to the cut-elimination theorem, we present the definition of the \emph{splice} operation~\cite{LelKuz18,Mas92}. The operation is used to formulate the $\cut$ rule.

\begin{definition}[Splice~\cite{LelKuz18}] The \emph{splice} $\g \oplus \h$ of two linear nested sequents $\g$ and $\h$ is defined as follows:

\begin{center}
$(\Gamma_{1} \vdash \Delta_{1}) \oplus (\Gamma_{2} \vdash \Delta_{2}) := \Gamma_{1}, \Gamma_{2} \vdash \Delta_{1}, \Delta_{2}$\\
$(\Gamma_{1} \vdash \Delta_{1}) \oplus (\Gamma_{2} \vdash \Delta_{2} \sslash \f) := \Gamma_{1}, \Gamma_{2} \vdash \Delta_{1}, \Delta_{2} \sslash \f$\\
$(\Gamma_{1} \vdash \Delta_{1} \sslash \f) \oplus (\Gamma_{2} \vdash \Delta_{2}) := \Gamma_{1}, \Gamma_{2} \vdash \Delta_{1}, \Delta_{2} \sslash \f$\\
$(\Gamma_{1} \vdash \Delta_{1} \sslash \f) \oplus (\Gamma_{2} \vdash \Delta_{2} \sslash \k) := \Gamma_{1}, \Gamma_{2} \vdash \Delta_{1}, \Delta_{2} \sslash (\f \oplus \k)$\\
\end{center}

\end{definition}

\begin{theorem}[Cut-Elimination]\label{thm:cut-elimination} The rule 
\begin{center}
\AxiomC{$\g \sslash \Gamma \vdash \Delta, A \sslash \h$}
\AxiomC{$\f \sslash A^{k_{1}}, \Gamma_{1} \vdash \Delta_{1} \sslash \cdots \sslash A^{k_{n}}, \Gamma_{n} \vdash \Delta_{n}$}
\RightLabel{$\cut$}
\BinaryInfC{$(\g \oplus \f) \sslash \Gamma, \Gamma_{1} \vdash \Delta, \Delta_{1} \slash \big(\h \oplus (\Gamma_{2} \vdash \Delta_{2} \slash \cdots \slash \Gamma_{n} \vdash \Delta_{n})\big)$}
\DisplayProof
\end{center}
where $\parallel \g \parallel \ = \ \parallel \f \parallel$, $\parallel \h \parallel \ = \ n-1$, and $\sum_{i=1}^{n} k_{i} \geq 1$, is eliminable in $\calc$.
\end{theorem}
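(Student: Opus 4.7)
The plan is to argue by induction on the lexicographic ordering of pairs $(|A|, h)$, where $|A|$ is the complexity of the cut formula $A$ and $h$ is the sum of the heights of the derivations of the two premises. I would case-split on the last rules used to derive each premise, distinguishing the usual three situations: (i) one premise is an initial sequent; (ii) the cut formula $A$ is non-principal in at least one of the premises (permutation cases); (iii) $A$ is principal in both premises (reduction cases). The multi-copy, multi-component form of the rule (with $A^{k_i}$ spread across the components of the right premise and with the splice $\g \oplus \f$ in the conclusion) means that I will need to track carefully which component of the right premise the principal $A$ sits in and which component of the left premise produced $A$.

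For the base case, if the left premise is $\idone$, $\idtwo$, or $\botl$ with $A$ not principal, the conclusion is itself an initial sequent (after invoking Lem.~\ref{lm:hp-admiss-iw} to reinstate missing context); if $A$ is principal in $\idone$ or $\idtwo$, then $A$ is atomic and the conclusion follows from Lem.~\ref{lm:hp-admiss-iw} applied to the right premise. The symmetric case for the right premise is analogous. For the non-principal cases I permute the cut upward past the last rule in whichever premise the cut formula is parametric, applying the induction hypothesis at smaller $h$ and then reapplying the rule. The delicate sub-cases here are $\imprtwo$, $\allrtwo$, and $\lift$, which either span two components or propagate formulas across a component boundary; for each I must check that the splice $\oplus$ interacts properly with the reintroduced rule, which it does because the components involved remain aligned.

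For the principal reductions, I group the logical rules by connective. The $\land$, $\lor$, and $\exists$ cases are settled by combining Lem.~\ref{lm:invertibility-of-andr-orr-existsr} (hp-invertibility of $\landr$, $\lorr$, $\existsr$) with Lem.~\ref{lm:m-invertibility} to peel open every copy of $A$ across the components, followed by one or two applications of IH at strictly smaller $|A|$. For $A = B \imp C$, when the left premise ends in $\improne$ and the right premise in $\impl$ or $\imprtwo$, I use Lem.~\ref{lm:invert-implies-right-one} and Lem.~\ref{lm:invert-implies-right-two} to obtain sequents featuring $B$ in the succedent and $C$ in the antecedent in the appropriate components; two smaller cuts on $B$ and $C$, plus Lem.~\ref{lm:admiss-mrg} and the contraction lemmata~\ref{lm:admiss-icl},~\ref{lm:admiss-icr} to tidy up duplicated context, yield the conclusion. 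For $A = \forall x B$, Lem.~\ref{lm:invert-all-right-one} and Lem.~\ref{lm:invert-all-right-two} expose $B[\fvx/x]$ in a fresh component; I then apply Lem.~\ref{lm:substitution-lemma} to rename eigenvariables when needed and invoke IH at smaller $|A|$. The m-invertibility lemma is essential here because $A$ may occur with multiplicity $k_i > 1$ in multiple components simultaneously.

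The main obstacle I anticipate is the bookkeeping around the splice operation in the principal cases, particularly for $\imprtwo$ and $\allrtwo$. When the right premise's last inference is one of these two-component rules acting on the cut formula, the conclusion of the cut collapses two component structures ($\g \oplus \f$ and $\h \oplus \cdots$) that must match up the component in which $A$ is introduced on the left with the one in which it is consumed on the right. Handling this cleanly will require that after applying the appropriate invertibility lemma to the right premise to expose $B$ and $C$ (respectively $B[\fvx/x]$) in a new component, I reappraise the multiplicities $k_i$ and shift to smaller instances of the cut rule whose splice structures align; after the smaller cuts, Lem.~\ref{lm:admiss-mrg} and Lem.~\ref{lm:admiss-icl}, \ref{lm:admiss-icr} restore the intended shape of the conclusion.
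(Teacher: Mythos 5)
Your proposal reaches the same destination with the same key ingredients (the invertibility lemmata, $\mrg$, $\icl$/$\icr$, $(sub)$, $\ew$/$\iw$), but it is organized quite differently from the paper's proof. You set up the classical symmetric Gentzen argument: induction on $(|A|, h)$ with $h$ the \emph{sum} of the heights of both premises, a case split on the last rule of \emph{each} premise, and a separate ``principal in both premises'' category. The paper instead takes $h$ to be the height of the \emph{right} premise only and performs the entire case analysis on the last rule of the right premise; the left premise is never decomposed or permuted into---it is only ever acted upon by admissible rules ($\lwr$, $\iw$, $(\bot_{r})$, $\ew$) and by the invertibility lemmata. This asymmetric design halves the case analysis and entirely avoids the left-premise permutation cases you would have to check (e.g.\ commuting $\cut$ past a left-premise $\improne$ or $\allrone$ that appends a new component, which forces an $\ew$ on the right premise and a re-indexing of the $k_{i}$). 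Since you also invoke the invertibility lemmata in the principal reductions, your left-premise case split is strictly redundant work rather than an error. Two points of imprecision worth flagging: (1) in the principal $\exists$/$\forall$ reductions the \emph{first} application of IH is a cut on the original formula $\exists x B$ (resp.\ $\forall x B$) against the premise of $\existsl$ (resp.\ $\alll$)---justified by smaller \emph{height}, not smaller $|A|$---and only the second cut, on $B[\fvx/x]$, is at smaller complexity; your sketch describes both as being at strictly smaller $|A|$, which is not how the remaining copies $A^{k_{i}}$ in the other components get discharged. (2) In your base case, when $A$ is principal in a left-premise $\idone$/$\idtwo$, the conclusion does not follow from $\iw$ applied to the right premise alone (weakening cannot delete the $A^{k_{i}}$ occurrences); one needs $\lift$ and $\icl$ as well---though the paper sidesteps this case entirely by never splitting on the left premise.
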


\begin{proof} We extend the proof of \cite[Thm.~5.16]{LelKuz18} and prove the result by induction on the lexicographic ordering of pairs $(|A|,h_{1},h_{2})$, where $|A|$ is the complexity of the cut formula $A$, $h_{1}$ is the height of the derivation of the \emph{right premise} of the $\cut$ rule, and $h_{2}$ is the height of the derivation of the \emph{left premise} of the $\cut$ rule. Moreover, we assume w.l.o.g. that $\cut$ is used once as the last inference of the derivation (given a derivation with multiple applications of $\cut$, we may repeatedly apply the elimination algorithm described here to the topmost occurrence of $\cut$, ultimately resulting in a cut-free derivation). By \cite[Thm.~5.16]{LelKuz18}, we know that $\cut$ is eliminable from any derivation in $\lng$, and therefore, we need only consider cases which incorporate quantifier rules.

If $h_{1} = 0$, then the right premise of $\cut$ is an instance of $\idone$, $\idtwo$, or $\botl$. If none of the cut formulae $A$ are principal in the right premise, then the conclusion of $\cut$ is an instance of $\idone$, $\idtwo$, or $\botl$. If, however, one of the cut formulae $A$ is principal in the right premise and is an atomic formula $p(\vv{a})$, then the top right premise of $\cut$ is of the form
$$
\f \sslash p(\vv{a})^{k_{1}}, \Gamma_{1} \vdash \Delta_{1} \sslash \cdots \slash p(\vv{a})^{k_{i}}, \Gamma_{i} \vdash p(\vv{a}), \Delta_{i}' \slash \cdots \sslash p(\vv{a})^{k_{n}}, \Gamma_{n} \vdash \Delta_{n}
$$
where $\Delta_{i} = p(\vv{a}), \Delta_{i}'$. Observe that since $\Delta_{i}$ occurs in the conclusion of $\cut$, so does $p(\vv{a})$. To construct a cut-free derivation of the conclusion of $\cut$, we apply $\lwr$ to the left premise $\g \sslash \Gamma \vdash \Delta, p(\vv{a}) \sslash \h$ until $p(\vv{a})$ is in the $i^{th}$ component, and then apply hp-admissibility of $\iw$ (Lem.~\ref{lm:hp-admiss-iw}) to add in the missing formulae. Last, if the cut formula $A$ is principal in the right premise and is equal to $\bot$, then the left premise of $\cut$ is of the form $\g \sslash \Gamma \vdash \Delta, \bot \sslash \h$. We obtain a cut-free derivation of the conclusion of $\cut$ by first applying hp-admissibility of $(\bot_{r})$ (Lem.~\ref{lem:botr-admiss}), followed by hp-admissibility of $\iw$ (Lem.~\ref{lm:hp-admiss-iw}) to add in the missing formulae.

Suppose that $h_{1} > 0$. If none of the cut formulae $A$ are principal in the inference $(r)$ of the right premise of $\cut$, then for all cases (with the exception of the $\allrone$, $\improne$, $\existsl$, $\allrtwo$, and $\imprtwo$ cases) we apply IH with the left premise of $\cut$ and the premise(s) of $(r)$, followed by an application of $(r)$. Let us now consider the $\allrone$, $\existsl$, $\allrtwo$,  $\improne$, and $\imprtwo$ cases when none of the cut formulae $A$ are principal. First, assume that $\allrone$ is the rule used to derive the right premise of $\cut$:
\begin{center}
\AxiomC{$\g \sslash \Gamma \vdash \Delta, A \sslash \h$}

\AxiomC{$\f \sslash A^{k_{1}}, \Gamma_{1} \vdash \Delta_{1} \sslash \cdots \sslash A^{k_{n}}, \Gamma_{n} \vdash \Delta_{n} \slash \vdash B[\fvx / x]$}
\RightLabel{$\allrone$}
\UnaryInfC{$\f \sslash A^{k_{1}}, \Gamma_{1} \vdash \Delta_{1} \sslash \cdots \sslash A^{k_{n}}, \Gamma_{n} \vdash \Delta_{n}, \forall x B$}
\RightLabel{$\cut$}
\BinaryInfC{$(\g \oplus \f) \sslash \Gamma, \Gamma_{1} \vdash \Delta, \Delta_{1} \slash \big(\h \oplus (\Gamma_{2} \vdash \Delta_{2} \slash \cdots \slash \Gamma_{n} \vdash \Delta_{n}, \forall x B)\big)$}
\DisplayProof
\end{center}
We invoke hp-admissibility of $\vsub$ (Lem.~\ref{lm:substitution-lemma}) to substitute the eigenvariable $\fvx$ of $\allrone$ with a fresh variable $\fvy$ that does not occur in either premise of $\cut$. We then apply admissibility of $\ew$ (Lem.~\ref{lm:admiss-ew}) to the left premise of $\cut$, apply IH to the resulting derivations, and last apply the $\allrone$ rule, as shown below:
\begin{center}
\resizebox{\columnwidth}{!}{
\AxiomC{$\g \sslash \Gamma \vdash \Delta, A \sslash \h$}
\RightLabel{Lem.~\ref{lm:admiss-ew}}
\dashedLine
\UnaryInfC{$\g \sslash \Gamma \vdash \Delta, A \sslash \h \slash \vdash $}

\AxiomC{$\f \sslash A^{k_{1}}, \Gamma_{1} \vdash \Delta_{1} \sslash \cdots \sslash A^{k_{n}}, \Gamma_{n} \vdash \Delta_{n} \slash \vdash B[\fvx / x]$}
\RightLabel{Lem.~\ref{lm:substitution-lemma}}
\dashedLine
\UnaryInfC{$\f \sslash A^{k_{1}}, \Gamma_{1} \vdash \Delta_{1} \sslash \cdots \sslash A^{k_{n}}, \Gamma_{n} \vdash \Delta_{n} \slash \vdash B[\fvy / x]$}

\RightLabel{IH}
\dashedLine
\BinaryInfC{$(\g \oplus \f) \sslash \Gamma, \Gamma_{1} \vdash \Delta, \Delta_{1} \slash \big(\h \oplus (\Gamma_{2} \vdash \Delta_{2} \slash \cdots \slash \Gamma_{n} \vdash \Delta_{n})\big) \slash \vdash B[\fvy / x] $}

\RightLabel{$\allrone$}
\UnaryInfC{$(\g \oplus \f) \sslash \Gamma, \Gamma_{1} \vdash \Delta, \Delta_{1} \slash \big(\h \oplus (\Gamma_{2} \vdash \Delta_{2} \slash \cdots \slash \Gamma_{n} \vdash \Delta_{n}, \forall x B)\big)$}

\DisplayProof
}
\end{center}

In the $\existsl$ case below
\begin{center}
\resizebox{\columnwidth}{!}{
\begin{tabular}{c} 
\AxiomC{$\g \sslash \Gamma \vdash \Delta, A \sslash \h$}

\AxiomC{$\f \sslash A^{k_{1}}, \Gamma_{1} \vdash \Delta_{1} \sslash \cdots \slash A^{k_{i}}, B[\fvx / x], \Gamma_{i} \vdash \Delta_{i} \slash \cdots \slash A^{k_{n}}, \Gamma_{n} \vdash \Delta_{n}$}
\RightLabel{$\existsl$}
\UnaryInfC{$\f \sslash A^{k_{1}}, \Gamma_{1} \vdash \Delta_{1} \sslash \cdots \slash A^{k_{i}}, \exists x B, \Gamma_{i} \vdash \Delta_{i} \slash \cdots \slash A^{k_{n}}, \Gamma_{n} \vdash \Delta_{n}$}
\RightLabel{$\cut$}
\BinaryInfC{$(\g \oplus \f) \sslash \Gamma, \Gamma_{1} \vdash \Delta, \Delta_{1} \slash \big(\h \oplus (\Gamma_{2} \vdash \Delta_{2} \sslash \cdots \slash \exists x B, \Gamma_{i} \vdash \Delta_{i} \slash \cdots \slash \Gamma_{n} \vdash \Delta_{n})\big)$}
\DisplayProof
\end{tabular}
}
\end{center}
we also make use of the hp-admissibility of $\vsub$ to ensure that the $\existsl$ rule can be applied after invoking the inductive hypothesis:
\begin{center}
\resizebox{\columnwidth}{!}{
\begin{tabular}{c} 
\AxiomC{$\g \sslash \Gamma \vdash \Delta, A \sslash \h$}

\AxiomC{$\f \sslash A^{k_{1}}, \Gamma_{1} \vdash \Delta_{1} \sslash \cdots \slash A^{k_{i}}, B[\fvx / x], \Gamma_{i} \vdash \Delta_{i} \slash \cdots \slash A^{k_{n}}, \Gamma_{n} \vdash \Delta_{n}$}
\RightLabel{Lem.~\ref{lm:substitution-lemma}}
\dashedLine
\UnaryInfC{$\f \sslash A^{k_{1}}, \Gamma_{1} \vdash \Delta_{1} \sslash \cdots \slash A^{k_{i}}, B[\fvy / x], \Gamma_{i} \vdash \Delta_{i} \slash \cdots \slash A^{k_{n}}, \Gamma_{n} \vdash \Delta_{n}$}

\RightLabel{IH}
\dashedLine
\BinaryInfC{$(\g \oplus \f) \sslash \Gamma, \Gamma_{1} \vdash \Delta, \Delta_{1} \slash \big(\h \oplus (\Gamma_{2} \vdash \Delta_{2} \sslash \cdots \slash A^{k_{i}}, B[\fvy / x], \Gamma_{i} \vdash \Delta_{i} \slash \cdots \slash \Gamma_{n} \vdash \Delta_{n})\big)$}

\RightLabel{$\existsl$}
\UnaryInfC{$(\g \oplus \f) \sslash \Gamma, \Gamma_{1} \vdash \Delta, \Delta_{1} \slash \big(\h \oplus (\Gamma_{2} \vdash \Delta_{2} \sslash \cdots \slash \exists x B, \Gamma_{i} \vdash \Delta_{i} \slash \cdots \slash \Gamma_{n} \vdash \Delta_{n})\big)$}
\DisplayProof
\end{tabular}
}
\end{center}

Let us consider the $\allrtwo$ case
\begin{center}
\resizebox{\columnwidth}{!}{
\begin{tabular}{c} 
\AxiomC{(1)}

\AxiomC{(2)}
\AxiomC{(3)}
\RightLabel{$\allrtwo$}
\BinaryInfC{$\f \sslash A^{k_{1}}, \Gamma_{1} \vdash \Delta_{1} \slash \cdots \sslash A^{k_{i}}, \Gamma_{i} \vdash \Delta_{i}, \forall x B \slash A^{k_{i+1}}, \Gamma_{i+1} \vdash \Delta_{i+1} \slash \cdots \slash A^{k_{n}}, \Gamma_{n} \vdash \Delta_{n}$}
\RightLabel{$\cut$}
\BinaryInfC{$(\g \oplus \f) \sslash \Gamma, \Gamma_{1} \vdash \Delta, \Delta_{1} \slash \big(\h \oplus (\Gamma_{2} \vdash \Delta_{2} \slash \cdots \sslash \Gamma_{i} \vdash \Delta_{i}, \forall x B \slash \Gamma_{i+1} \vdash \Delta_{i+1} \slash \cdots \slash \Gamma_{n} \vdash \Delta_{n})\big)$}
\DisplayProof
\end{tabular}
}
\end{center}
\begin{scriptsize}
$$
\begin{aligned}
(1) & \ \ \g \sslash \Gamma \vdash \Delta, A \sslash \h_{1} \slash \Gamma_{i}' \vdash \Delta_{i}' \slash \Gamma_{i+1}' \vdash \Delta_{i+1}' \slash \h_{2}\\
(2) & \ \ \f \slash A^{k_{1}}, \Gamma_{1} \vdash \Delta_{1} \slash \cdots \sslash A^{k_{i}}, \Gamma_{i} \vdash \Delta_{i} \slash \vdash B[\fvx / x] \slash A^{k_{i+1}}, \Gamma_{i+1} \vdash \Delta_{i+1} \slash \cdots \slash A^{k_{n}}, \Gamma_{n} \vdash \Delta_{n}\\
(3) & \ \ \f \slash A^{k_{1}}, \Gamma_{1} \vdash \Delta_{1} \slash \cdots \sslash A^{k_{i}}, \Gamma_{i} \vdash \Delta_{i} \slash A^{k_{i+1}}, \Gamma_{i+1} \vdash \Delta_{i+1}, \forall x B \slash \cdots \slash A^{k_{n}}, \Gamma_{n} \vdash \Delta_{n}
\end{aligned}
$$
\end{scriptsize}

\noindent
where $\h = \h_{1} \slash \Gamma_{i}' \vdash \Delta_{i}' \slash \Gamma_{i+1}' \vdash \Delta_{i+1}' \slash \h_{2}$. To resolve the case we invoke admissibility of $\ew$ (Lem.~\ref{lm:admiss-ew}) on (1) to obtain a derivation of
$$
(1)' \ \ \g \sslash \Gamma \vdash \Delta, A \sslash \h_{1} \slash \Gamma_{i}' \vdash \Delta_{i}' \slash \vdash \slash \Gamma_{i+1}' \vdash \Delta_{i+1}' \slash \h_{2}
$$
Moreover, to ensure that the eigenvariable $\fvx$ in (2) does not occur in (1), we apply hp-admissibility of $\vsub$ (Lem.~\ref{lm:substitution-lemma}) to obtain $(2)'$ where $\fvx$ has been replaced by a fresh parameter $\fvy$. Applying IH between $(1)'$ and $(2)'$, and (1) and (3), followed by an application of $\allrtwo$, gives the desired result. Last, note that the $\improne$ and $\imprtwo$ cases are resolved as explained in the proof of \cite[Thm.~5.16]{LelKuz18}.

We assume now that one of the cut formulae $A$ is principal in the inference yielding the right premise of $\cut$. The cases where $A$ is an atomic formula $p(\vv{a})$ or is identical to $\bot$ are resolved as explained above (when $h_{1} = 0$). For the case when $A$ is principal in an application of $\lift$, we simply apply IH between the left premise of $\cut$ and the premise of the $\lift$ rule. Also, if $A$ is of the form $B \land C$, $B \lor C$, or $B \imp C$, then all such cases can be resolved as explained in the proof of \cite[Thm.~5.16]{LelKuz18}. Thus, we only consider the cases where $A$ is of the form $\exists x B$ and $\forall x B$. We first consider the former case, which can be split into two subcases: either the cut formula $\exists x B$ is principal in the left premise of $\cut$, or it is not. Let us consider the former case first, where the cut formula $\exists x B$ is principal in the left premise:
\begin{center}
\resizebox{\columnwidth}{!}{
\begin{tabular}{c} 
\AxiomC{$\g \sslash \Gamma \vdash \Delta, B[\fvy/x], \exists x B \sslash \h$}
\RightLabel{$\existsr$}
\UnaryInfC{$\g \sslash \Gamma \vdash \Delta, \exists x B \sslash \h$}

\AxiomC{$\f \sslash \exists x B^{k_{1}}, \Gamma_{1} \vdash \Delta_{1} \sslash \cdots \slash \exists x B^{k_{i}}, B[\fvx / x], \Gamma_{i} \vdash \Delta_{i} \slash \cdots \slash \exists x B^{k_{n}}, \Gamma_{n} \vdash \Delta_{n}$}
\RightLabel{$\existsl$}
\UnaryInfC{$\f \sslash \exists x B^{k_{1}}, \Gamma_{1} \vdash \Delta_{1} \sslash \cdots \slash \exists x B^{k_{i+1}}, \Gamma_{i} \vdash \Delta_{i} \slash \cdots \slash \exists x B^{k_{n}}, \Gamma_{n} \vdash \Delta_{n}$}
\RightLabel{$\cut$}
\BinaryInfC{$(\g \oplus \f) \sslash \Gamma, \Gamma_{1} \vdash \Delta, \Delta_{1} \slash \big(\h \oplus (\Gamma_{2} \vdash \Delta_{2} \sslash \cdots \slash  \Gamma_{i} \vdash \Delta_{i} \slash \cdots \slash \Gamma_{n} \vdash \Delta_{n})\big)$}
\DisplayProof
\end{tabular}
}
\end{center}
By IH, the premise of $\existsr$ and the right premise of $\cut$ yield a cut-free derivation of:
$$
(\g \oplus \f) \sslash \Gamma, \Gamma_{1} \vdash B[\fvy / x], \Delta, \Delta_{1} \slash \big(\h \oplus (\Gamma_{2} \vdash \Delta_{2} \sslash \cdots \slash  \Gamma_{i} \vdash \Delta_{i} \slash \cdots \slash \Gamma_{n} \vdash \Delta_{n})\big)
$$
By hp-admissibility of $\vsub$ (Lem.~\ref{lm:substitution-lemma}), we have a derivation of the premise of $\existsl$ where the eigenvariable $\fvx$ has been replaced by $\fvy$. Invoking IH between the derivation of this sequent together with the left premise of $\cut$ gives a cut-free derivation of:
$$
(\g \oplus \f) \sslash \Gamma, \Gamma_{1} \vdash \Delta, \Delta_{1} \slash \big(\h \oplus (\Gamma_{2} \vdash \Delta_{2} \sslash \cdots \slash  B[\fvy / x], \Gamma_{i} \vdash \Delta_{i} \slash \cdots \slash \Gamma_{n} \vdash \Delta_{n})\big)
$$
Since $|B[\fvy / x]| < |\exists x B|$, we can apply IH to the derivations of the above two sequents. After applying admissibility of $\icl$ and $\icr$ (Lem.~\ref{lm:admiss-icl} and~\ref{lm:admiss-icr}), we obtain the desired conclusion.

Let us now suppose that $\exists x B$ is not principal in the left premise of $\cut$ and that the left premise is obtained by an instance of the rule $(r)$, that is, our derivation ends with:
\begin{center}
\resizebox{\columnwidth}{!}{
\begin{tabular}{c} 
\AxiomC{$\g \sslash \Gamma \vdash \Delta, \exists x B \sslash \h$}

\AxiomC{$\f \sslash \exists x B^{k_{1}}, \Gamma_{1} \vdash \Delta_{1} \sslash \cdots \slash \exists x B^{k_{i}}, B[\fvx / x], \Gamma_{i} \vdash \Delta_{i} \slash \cdots \slash \exists x B^{k_{n}}, \Gamma_{n} \vdash \Delta_{n}$}
\RightLabel{$\existsl$}
\UnaryInfC{$\f \sslash \exists x B^{k_{1}}, \Gamma_{1} \vdash \Delta_{1} \sslash \cdots \slash \exists x B^{k_{i+1}}, \Gamma_{i} \vdash \Delta_{i} \slash \cdots \slash \exists x B^{k_{n}}, \Gamma_{n} \vdash \Delta_{n}$}
\RightLabel{$\cut$}
\BinaryInfC{$(\g \oplus \f) \sslash \Gamma, \Gamma_{1} \vdash \Delta, \Delta_{1} \slash \big(\h \oplus (\Gamma_{2} \vdash \Delta_{2} \sslash \cdots \slash  \Gamma_{i} \vdash \Delta_{i} \slash \cdots \slash \Gamma_{n} \vdash \Delta_{n})\big)$}
\DisplayProof
\end{tabular}
}
\end{center}
If $(r)$ is a rule other than $\imprtwo$ or $\allrtwo$, then we obtain the desired conclusion by applying a $\cut$ between the premise(s) of $(r)$ and the right premise of $\cut$, followed by an application of $(r)$. If $(r)$ is either the $\imprtwo$ or $\allrtwo$ rule, then we invoke IH between the right premise of $(r)$ and the right premise of $\cut$ to obtain $\g_{1}$, apply admissibility of $\ew$ (Lem.~\ref{lm:admiss-ew}) to the right premise of $\cut$ to weaken in the empty component $\vdash$ in the appropriate place so that IH can be applied to the resulting sequent and the left premise of $(r)$ giving $\g_{2}$, and last, we obtain the desired concusion by applying $(r)$ to $\g_{1}$ and $\g_{2}$. (NB. If $(r)$ is a rule with an eigenvariable condition, then it may be necessary to first apply hp-admissibility of $\vsub$ (Lem.~\ref{lm:substitution-lemma}) to the relevant premise of $(r)$ to ensure the satisfaction of the condition after the $\cut$ has been applied.)

We obtain the desired result by applying a $\cut$ between the premise(s) of $(r)$ and the right premise of $\cut$, followed by an application of $(r)$. Note that if $(r)$ is a rule with an eigenvariable condition, then it may be necessary to first apply hp-admissibility of $\vsub$ (Lem.~\ref{lm:substitution-lemma}) to the premise of $(r)$ to ensure the satisfaction of the condition after the $\cut$ has been applied. 

Last, let us consider the case where $A$ is of the form $\forall x B$:
\begin{center}
\resizebox{\columnwidth}{!}{
\begin{tabular}{c} 
\AxiomC{$\g \sslash \Gamma \vdash \Delta, \forall x B \sslash \h$}

\AxiomC{$\f \sslash \forall x B^{k_{1}}, \Gamma_{1} \vdash \Delta_{1} \sslash \cdots \slash \forall x B^{k_{i}}, B[a / x], \Gamma_{i} \vdash \Delta_{i} \slash \cdots \slash \forall x B^{k_{n}}, \Gamma_{n} \vdash \Delta_{n}$}
\RightLabel{$\alll$}
\UnaryInfC{$\f \sslash \forall x B^{k_{1}}, \Gamma_{1} \vdash \Delta_{1} \sslash \cdots \slash \forall x B^{k_{i}}, \Gamma_{i} \vdash \Delta_{i} \slash \cdots \slash \forall x B^{k_{n}}, \Gamma_{n} \vdash \Delta_{n}$}
\RightLabel{$\cut$}
\BinaryInfC{$(\g \oplus \f) \sslash \Gamma, \Gamma_{1} \vdash \Delta, \Delta_{1} \slash \big(\h \oplus (\Gamma_{2} \vdash \Delta_{2} \sslash \cdots \slash  \Gamma_{i} \vdash \Delta_{i} \slash \cdots \slash \Gamma_{n} \vdash \Delta_{n})\big)$}
\DisplayProof
\end{tabular}
}
\end{center}
Applying IH between the left premise of $\cut$ and the premise of the $\alll$ rule, we obtain
$$
(\g \oplus \f) \sslash \Gamma, \Gamma_{1} \vdash \Delta, \Delta_{1} \slash \big(\h \oplus (\Gamma_{2} \vdash \Delta_{2} \sslash \cdots \slash  B[a / x], \Gamma_{i} \vdash \Delta_{i} \slash \cdots \slash \Gamma_{n} \vdash \Delta_{n})\big)
$$
Depending on if $\h$ is empty or not, we invoke the invertibility of $\allrone$ or $\allrtwo$ (Lem.~\ref{lm:invert-all-right-one} and~\ref{lm:invert-all-right-two}), admissibility of $\mrg$ (Lem.~\ref{lm:admiss-mrg}), and hp-admissibility of $\vsub$ (Lem.~\ref{lm:substitution-lemma}) to obtain a derivation of the sequent $\g \slash \Gamma \vdash \Delta, B[a/x] \slash \h$. Since $|B[\fvx / x]| < |\forall x B|$ we can apply IH between this sequent and the one above to obtain a cut-free derivation of:
$$
(\g \oplus \g \oplus \f) \sslash \Gamma, \Gamma, \Gamma_{1} \vdash \Delta, \Delta, \Delta_{1} \slash \big(\h \oplus \h \oplus (\Gamma_{2} \vdash \Delta_{2} \sslash \cdots \slash \Gamma_{i} \vdash \Delta_{i} \slash \cdots \slash \Gamma_{n} \vdash \Delta_{n})\big)
$$
Admissibility of $\icl$ and $\icr$ (Lem.~\ref{lm:admiss-icl} and~\ref{lm:admiss-icr}) give the desired conclusion.
\qed
\end{proof}



\section{Conclusion}\label{sec:conclusion}

This paper presented the cut-free calculus $\calc$ for intuitionistic fuzzy logic within the relatively new paradigm of \emph{linear nested sequents}. The calculus possesses highly fundamental proof-theoretic properties such as (m-)invertibility of all logical rules, admissibility of structural rules, and syntactic cut-elimination.

In future work the author aims to investigate corollaries of the cut-elimination theorem, such as a 
\emph{midsequent theorem}~\cite{BaaZac00}. In our context, such a theorem states that every derivable sequent containing only prenex formulae is derivable with a proof containing quantifier-free sequents, called \emph{midsequents}, which have only propositional inferences (and potentially $\lift$) above them in the derivation, and only quantifier inferences (and potentially $\lift$) below them. 
Moreover, the present formalism could offer insight regarding which fragments interpolate (or if all of $\ifl$ interpolates) by applying the so-called \emph{proof-theoretic method} of interpolation~\cite{LelKuz18,LyoTiuGorClo20}. 
Additionally, it could be fruitful to adapt linear nested sequents to other first-order G\"odel logics and to investigate decidable fragments~\cite{BaaCiaPre09} by providing proof-search algorithms with implementations (e.g.~\cite{Lel16} provides an implementation of proof-search in \textsc{Prolog} for a class of modal logics within the linear nested sequent framework).  

Last, \cite{Fit14} introduced both a nested calculus for first-order intuitionistic logic with \emph{constant domains}, and a nested calculus for first-order intuitionistic logic with \emph{non-constant domains}. The fundamental difference between the two calculi involves the imposition of a side condition on the left $\forall$ and right $\exists$ rules. The author aims to investigate whether such a condition can be imposed on quantifier rules in $\calc$ in order to readily convert the calculus into a sound and cut-free complete calculus for first-order G\"odel logic with \emph{non-constant domains}. This would be a further strength of $\calc$ since switching between the calculi for the constant domain and non-constant domain versions of first-order G\"odel logic would result by simply imposing a side condition on a subset of the quantifier rules. 




\ \\
\noindent
\acknowledgments{The author would like to thank his supervisor A. Ciabattoni for her continued support, B. Lellmann for his thought-provoking discussions on linear nested sequents, and K. van Berkel for his helpful comments.}

\bibliographystyle{abbrv}
\bibliography{mybib}


\appendix

\section{Proofs}\label{appendix}

\begin{customthm}{\ref{thm:soundness-calc}}[Soundness of $\calc$]
For any linear nested sequent $\g$, if $\g$ is provable in $\calc$, then $\Vdash \uc \iota(\g)$.
\end{customthm}

\begin{proof} We prove the result by induction on the height of the derivation of $\g$.

\textit{Base case.} We argue by contradiction that $(r) \in \{\idone, \idtwo, \botl\}$ always produces a valid linear nested sequent under the universal closure of $\iota$. Each rule is of the form shown below left with the sequent $\g$ of the form shown below right:
\begin{center}
\begin{tabular}{c @{\hskip 3em} c}
\AxiomC{}
\RightLabel{$(r)$}
\UnaryInfC{$\g$}
\DisplayProof

&

$\g = \Gamma_{1} \vdash \Delta_{1} \sslash \cdots \sslash \Gamma_{n} \vdash \Delta_{n} \sslash \cdots \sslash \Gamma_{m} \vdash \Delta_{m}$
\end{tabular}
\end{center}
We therefore assume that $\g$ is invalid. This implies that there exists a model $M = (W,R,D,V)$ with world $v$ such that $Rvw_{0}$, $\vv{a} \in D_{w_{0}}$, and $M, w_{0} \not\Vdash \iota(\g)(\vv{a})$. It follows that there is a sequence of worlds $w_{1}, \cdots, w_{m} \in W$ such that $Rw_{j}w_{j+1}$ (for $0 \leq j \leq m-1$), $M, w_{i} \Vdash \bigwedge \Gamma_{i}$, and $M, w_{i} \not\Vdash \bigvee \Delta_{i}$, for each $1 \leq i \leq m$. We assume all parameters in $\bigwedge \Gamma_{i}$ and $\bigvee \Delta_{i}$ (for $1 \leq i \leq m$) are interpreted as elements of the associated domain.

{\textbf{$\idone$-rule:}} Let $\g$ be in the form above and assume that $\Gamma_{n} = \Gamma_{n}', p(\vv{\fvy})$ and $\Delta_{n} = p(\vv{\fvy}), \Delta_{n}'$. If this happens to be the case, then $M, w_{n} \Vdash \bigwedge \Gamma_{n}' \wedge p(\vv{\fvy})$ and $M, w_{n} \not\Vdash \bigvee \Delta_{n}' \vee p(\vv{\fvy})$. The former implies that $M, w_{n} \Vdash p(\vv{\fvy})$ and the latter implies that $M, w_{n} \not\Vdash p(\vv{\fvy})$, which is a contradiction.

{\textbf{$\idtwo$-rule:}} Similar to the case above, but uses Lem.~\ref{lm:persistency}.

{\textbf{$(\bot_{l})$-rule:}} Let $\g$ be as above and assume that $\Gamma_{n} = \Gamma_{n}', \bot$. If this is the case, then it follows that $M, w_{n} \Vdash \bigwedge \Gamma_{i} \wedge \bot$, from which the contradiction $M, w_{n} \Vdash \bot$ follows.

\textit{Inductive step.} Each inference rule considered is of one of the following two forms: 
\begin{center}
\begin{tabular}{c @{\hskip 5em} c}
\AxiomC{$\g'$}
\RightLabel{$(r_{1})$}
\UnaryInfC{$\g$}
\DisplayProof

&

\AxiomC{$\g_{1}$}
\AxiomC{$\g_{2}$}
\RightLabel{$(r_{2})$}
\BinaryInfC{$\g$}
\DisplayProof
\end{tabular}
\end{center}

\begin{center}
\begin{tabular}{c}
where $\g = \Gamma_{1} \vdash \Delta_{1} \sslash \cdots \sslash \Gamma_{n} \vdash \Delta_{n} \sslash \Gamma_{n+1} \vdash \Delta_{n+1} \sslash \cdots \sslash \Gamma_{m} \vdash \Delta_{m}$.
    
\end{tabular}
\end{center}
Assuming that $\g$ is invalid implies the existence of a model $M = (W,R,D,V)$ with world $v$ such that $Rvw_{0}$, $\vv{a} \in D_{w_{0}}$, and $M, w_{0} \not\Vdash \iota(\g) \sigma$. Hence, there is a sequence of worlds $w_{1}, \cdots, w_{m} \in W$ such that $Rw_{j}w_{j+1}$ (for $0 \leq j \leq m-1$), $M, w_{i} \Vdash \bigwedge \Gamma_{i}$, and $M, w_{i} \not\Vdash \bigvee \Delta_{i}$, for each $1 \leq i \leq m$. We assume all parameters in $\bigwedge \Gamma_{i}$ and $\bigvee \Delta_{i}$ (for $1 \leq i \leq m$) are interpreted as elements of the associated domain.

{\bf $(lift)$-rule:} Follows from Lem.~\ref{lm:persistency}.

{\bf $(\wedge_{l}), (\wedge_{r}), (\lor_{r}), (\lor_{l})$-rules:} It is not difficult to show that for each of these rules the premise, or at least one of the premises, is invalid if the conclusion is assumed invalid.


{\bf $(\imp_{r1})$-rule:} It follows from our assumption that $M, w_{m} \Vdash \bigwedge \Gamma_{m}$ and $M, w_{m} \not\Vdash \bigvee \Delta_{m} \vee A \imp B$. The latter statement implies that $M, w_{m} \not\Vdash A \imp B$, from which it follows that there exists a world $w_{m+1} \in W$ such that $Rw_{m}w_{m+1}$ and $M, w_{m+1} \Vdash A$, but $M, w_{m+1} \not\Vdash B$, letting us falsify the premise.

{\bf $(\imp_{r2})$-rule:} Our assumption implies that $M, w_{n} \Vdash \bigwedge \Gamma_{n}$, $M, w_{n} \not\Vdash \bigvee \Delta_{n} \vee A \imp B$, $M, w_{n+1} \Vdash \bigwedge \Gamma_{n+1}$, and $M, w_{n+1} \not\Vdash \bigvee \Delta_{n+1}$. The fact that $M, w_{n} \not\Vdash \bigvee \Delta_{n} \vee A \imp B$ holds implies that there exists a world $w \in W$ such that $ R w_{n} w$ and $M, w \Vdash A$ and $M, w \not\Vdash B$. Since our frames are connected, we have two cases to consider: (i) $R w w_{n+1}$, or (ii) $R w_{n+1} w$. In case (i), the left premise is falsified, and in case (ii) the right premise is falsified.

{\bf $(\imp_{l})$-rule:} Our assumption implies that $M, w_{n} \Vdash \bigwedge \Gamma_{n} \wedge A \imp B$ and $M, w_{n} \not\Vdash \bigvee \Delta_{n}$. Since $R$ is reflexive, we know that $Rw_{n}w_{n}$; this fact, in conjunction with the fact that $M, w_{n} \Vdash A \imp B$, entails that $M, w_{n} \not\Vdash A$ or $M, w_{n} \Vdash B$, which confirms that one of the premises of the rule is falsified in $M$.

{\bf $\existsl$-rule:} Our assumption implies that $M, w_{n} \Vdash \bigwedge \Gamma_{n} \wedge \exists x A$ and $M, w_{n} \not\Vdash \bigvee \Delta_{n}$. Therefore, $M, w_{n} \Vdash A[b/x]$ for some $b \in D_{w_{n}}$. Since $a$ is an eigenvariable, this implies that the premise is falsified when we interpret $a$ as $b$ at the world $w_{n}$.

{\bf $\existsr$-rule:} Similar to $\alll$ case.
\qed
\end{proof}

\begin{customthm}{\ref{thm:completeness-calc}}[Completeness of $\calc$] If $\vdash_{\ifl} A$, then $A$ is provable in $\calc$. 

\end{customthm}

\begin{proof} The claim is proven by showing that $\calc$ can derive each axiom of $\ifl$ and simulate each inference rule. We derive the quantifier axioms and all inference rules of $\ifl$, referring the reader to~\cite{Lel15} for the propositional cases.

\begin{center}
\resizebox{\columnwidth}{!}{
\begin{tabular}{c c c}
\AxiomC{$\vdash \slash \forall x A(x), A[\fvx / x] \vdash A[\fvx / x]$}
\UnaryInfC{$ \vdash  \slash \forall x A(x) \vdash A[\fvx / x]$}
\UnaryInfC{$ \vdash \forall x A(x) \imp A[\fvx / x]$}
\DisplayProof

&

\AxiomC{$ \vdash \slash A[\fvx / x] \vdash A[\fvx / x], \exists x A(x)$}
\UnaryInfC{$ \vdash \slash A[\fvx / x] \vdash \exists x A(x)$}
\UnaryInfC{$ \vdash A[\fvx / x] \imp \exists x A(x)$}
\DisplayProof

&

\AxiomC{$\vdash A[\fvx / x]$}
\RightLabel{Lem.~\ref{lm:admiss-ew}}
\dashedLine
\UnaryInfC{$\vdash \slash \vdash A[\fvx / x]$}
\UnaryInfC{$\vdash \forall x A$}
\DisplayProof
\end{tabular}
}
\end{center}
\begin{center}
\resizebox{\columnwidth}{!}{
\begin{tabular}{c}
\AxiomC{$\vdash \slash A[\fvx / x], \forall x (A(x) \vee B) \vdash B \slash A[\fvx / x] \vdash A[\fvx / x]$}
\UnaryInfC{$\vdash \slash A[\fvx / x], \forall x (A(x) \vee B) \vdash B \slash \vdash A[\fvx / x]$}

\AxiomC{$\vdash \slash B, \forall x (A(x) \vee B) \vdash B \slash \vdash A[\fvx / x]$}

\BinaryInfC{$\vdash \slash A[\fvx / x] \vee B, \forall x (A(x) \vee B) \vdash B \slash \vdash A[\fvx / x]$}
\UnaryInfC{$\vdash \slash \forall x (A(x) \vee B) \vdash B \slash \vdash A[\fvx / x]$}
\UnaryInfC{$\vdash \slash \forall x (A(x) \vee B) \vdash \forall x A(x), B$}
\UnaryInfC{$\vdash \slash \forall x (A(x) \vee B) \vdash \forall x A(x) \vee B$}
\UnaryInfC{$\vdash \forall x (A(x) \vee B) \imp \forall x A(x) \vee B$}
\DisplayProof
\end{tabular}
}
\end{center}
\begin{center}
\resizebox{\columnwidth}{!}{
\begin{tabular}{c c}
\AxiomC{$ \vdash \slash A \vdash B \slash B, A \vdash A$}
\UnaryInfC{$ \vdash \slash A \vdash B \slash B \vdash A$}

\AxiomC{$ \vdash \slash B \vdash A \slash  A, B \vdash B$}
\UnaryInfC{$ \vdash \slash B \vdash A \slash  A \vdash B$}
\UnaryInfC{$ \vdash \slash B \vdash A, A \imp B$}
\BinaryInfC{$ \vdash A \imp B \slash B \vdash A$}
\UnaryInfC{$ \vdash (A \imp B), (B \imp A)$}
\UnaryInfC{$ \vdash (A \imp B) \vee (B \imp A)$}
\DisplayProof

&

\AxiomC{$\vdash A$}
\RightLabel{Lem.~\ref{lm:admiss-ew}}
\dashedLine
\UnaryInfC{$\vdash \slash \vdash A$}

\AxiomC{$\vdash A \imp B$}
\RightLabel{Lem.~\ref{lm:invert-implies-right-one}}
\dashedLine
\UnaryInfC{$\vdash \slash A \vdash B$}

\RightLabel{Thm.~\ref{thm:cut-elimination}}
\dashedLine
\BinaryInfC{$\vdash \slash \vdash B$}
\RightLabel{Lem.~\ref{lm:admiss-mrg}}
\dashedLine
\UnaryInfC{$\vdash B$}

\DisplayProof
\end{tabular}
}
\end{center}
\begin{center}
\begin{tabular}{c} 
\AxiomC{$\Pi_{1}$}

\AxiomC{$\Pi_{2}$}

\BinaryInfC{$\vdash \slash \forall x (A(x) \imp B) \vdash \slash A[\fvx / x], \exists x A(x), \forall x (A(x) \imp B), A[\fvx / x] \imp B \vdash B$}
\UnaryInfC{$\vdash \slash \forall x (A(x) \imp B) \vdash \slash A[\fvx / x], \exists x A(x), \forall x (A(x) \imp B) \vdash B$}
\UnaryInfC{$\vdash \slash \forall x (A(x) \imp B) \vdash \slash A[\fvx / x], \exists x A(x) \vdash B$}
\UnaryInfC{$\vdash \slash \forall x (A(x) \imp B) \vdash \slash \exists x A(x) \vdash B$}
\UnaryInfC{$\vdash \slash \forall x (A(x) \imp B) \vdash \exists x A(x) \imp B$}
\UnaryInfC{$\vdash \forall x (A(x) \imp B) \imp (\exists x A(x) \imp B)$}
\DisplayProof
\end{tabular}
\end{center}

\begin{center}
\begin{tabular}{c c c}
$\Pi_{1}$

&

$= \big \{$

&

$\vdash \slash \forall x (A(x) \imp B) \vdash \slash A[\fvx / x], \exists x A(x), \forall x (A(x) \imp B), B \vdash B$
\end{tabular}

\resizebox{\columnwidth}{!}{
\begin{tabular}{c c c}
$\Pi_{2}$

&

$= \big \{$

&

$\vdash \slash \forall x (A(x) \imp B) \vdash \slash A[\fvx / x], \exists x A(x), \forall x (A(x) \imp B), A[\fvx / x] \imp B \vdash B, A[\fvx / x]$
\end{tabular}
}
\end{center}

\begin{center}
\resizebox{\columnwidth}{!}{
\begin{tabular}{c} 
\AxiomC{$\Pi_{1}'$}

\AxiomC{$\Pi_{2}'$}
\UnaryInfC{$\vdash \slash \forall x (B \imp A(x)) \vdash  \slash B, \forall x (B \imp A(x)) \vdash \slash \forall x (B \imp A(x)), B \imp A[\fvx / x] \vdash B, A[\fvx / x]$}

\BinaryInfC{$\vdash \slash \forall x (B \imp A(x)) \vdash  \slash B, \forall x (B \imp A(x)) \vdash \slash \forall x (B \imp A(x)), B \imp A[\fvx / x] \vdash A[\fvx / x]$}

\UnaryInfC{$\vdash \slash \forall x (B \imp A(x)) \vdash  \slash B, \forall x (B \imp A(x)) \vdash \slash \forall x (B \imp A(x)) \vdash A[\fvx / x]$}
\UnaryInfC{$\vdash \slash \forall x (B \imp A(x)) \vdash  \slash B \vdash \slash \vdash A[\fvx / x]$}
\UnaryInfC{$\vdash \slash \forall x (B \imp A(x)) \vdash  \slash B \vdash \forall x A(x)$}
\UnaryInfC{$\vdash \slash \forall x (B \imp A(x)) \vdash (B \imp \forall x A(x))$}
\UnaryInfC{$\vdash \forall x (B \imp A(x)) \imp (B \imp \forall x A(x))$}
\DisplayProof
\end{tabular}
}
\end{center}
\begin{center}
\resizebox{\columnwidth}{!}{
\begin{tabular}{c c c}
$\Pi_{1}'$

&

$= \big \{$

&

$\vdash \slash \forall x (B \imp A(x)) \vdash  \slash B, \forall x (B \imp A(x)) \vdash \slash \forall x (B \imp A(x)), A[\fvx / x] \vdash A[\fvx / x]$
\end{tabular}
}

\resizebox{\columnwidth}{!}{
\begin{tabular}{c c c}
$\Pi_{2}'$

&

$= \big \{$

&

$\vdash \slash \forall x (B \imp A(x)) \vdash  \slash B, \forall x (B \imp A(x)) \vdash \slash \forall x (B \imp A(x)), B \imp A[\fvx / x], B \vdash B, A[\fvx / x]$
\end{tabular}
}
\end{center}
\qed
\end{proof}

\begin{customlem}{\ref{lm:A-implies-A}} For any $A$, $\Gamma$, $\Delta$, $\g$, and $\h$, $\vdash_{\calc} \seq{\g}{\Gamma, A}{A, \Delta}{\h}$.
\end{customlem}

\begin{proof} We prove the result by induction on the complexity of $A$.

\textit{Base case.} When $A$ is atomic or $\bot$, the desired result is an instance of $\idone$ and $\botl$, respectively.

\textit{Inductive step.} We consider each case below and use IH to denote the proof given by the inductive hypothesis.

The cases where $A$ is of the form $B \land C$, $B \lor C$, or $\exists x B$ are simple, and are shown below.

\begin{center}
\begin{tabular}{c}

\AxiomC{}
\RightLabel{IH}
\dashedLine
\UnaryInfC{$\seq{\g}{\Gamma, B, C }{B, \Delta}{\h}$}
\AxiomC{}
\RightLabel{IH}
\dashedLine
\UnaryInfC{$\seq{\g}{\Gamma, B, C }{C, \Delta}{\h}$}
\BinaryInfC{$\seq{\g}{\Gamma, B, C }{B \land C, \Delta}{\h}$}
\UnaryInfC{$\seq{\g}{\Gamma, B \land C }{B \land C, \Delta}{\h}$}
\DisplayProof

\end{tabular}
\end{center}

\begin{center}
\begin{tabular}{c}

\AxiomC{}
\RightLabel{IH}
\dashedLine
\UnaryInfC{$\seq{\g}{\Gamma, B }{B,C, \Delta}{\h}$}
\AxiomC{}
\RightLabel{IH}
\dashedLine
\UnaryInfC{$\seq{\g}{\Gamma, C }{B,C, \Delta}{\h}$}
\BinaryInfC{$\seq{\g}{\Gamma, B \lor C }{B, C, \Delta}{\h}$}
\UnaryInfC{$\seq{\g}{\Gamma, B \lor C }{B \lor C, \Delta}{\h}$}
\DisplayProof

\end{tabular}
\end{center}

\begin{center}
\begin{tabular}{c}

\AxiomC{}
\RightLabel{IH}
\dashedLine
\UnaryInfC{$\seq{\g}{\Gamma, B[a/x] }{B[a/x], \exists x B, \Delta}{\h}$}
\UnaryInfC{$\seq{\g}{\Gamma, B[a/x] }{\exists x B, \Delta}{\h}$}
\UnaryInfC{$\seq{\g}{\Gamma, \exists x B}{\exists x B, \Delta}{\h}$}
\DisplayProof

\end{tabular}
\end{center}
The cases where $A$ is of the form $B \imp C$ or $\forall x B$ are a bit more cumbersome, and are explained below. We first define the linear nested sequents $\g_{i}$ (for $0 \leq i \leq n$) and $\h_{j}$
, where $\g_{0} = \g$.
$$
\g_{i} = \g \slash \Gamma_{1}, B \imp C \vdash \Delta_{1} \slash \cdots \slash \Gamma_{i}, B \imp C \vdash \Delta_{i} \quad \h_{j} = \Gamma_{j} \vdash \Delta_{j} \slash \cdots \slash \Gamma_{n} \vdash \Delta_{n}
$$
\begin{center}
\resizebox{\columnwidth}{!}{
\begin{tabular}{c} 

\AxiomC{}
\RightLabel{IH}
\dashedLine
\UnaryInfC{$\seq{\g}{\Gamma_{1}, B \imp C }{\Delta_{1} \slash B,C \vdash C}{\h_{2}}$}
\AxiomC{}
\RightLabel{IH}
\dashedLine
\UnaryInfC{$\seq{\g}{\Gamma_{1}, B \imp C }{\Delta_{1} \slash B, B \imp C \vdash B, C}{\h_{2}}$}
\BinaryInfC{$\seq{\g}{\Gamma_{1}, B \imp C }{\Delta_{1} \slash B, B \imp C \vdash C}{\h_{2}}$}
\UnaryInfC{$\seq{\g}{\Gamma_{1}, B \imp C }{\Delta_{1} \slash B \vdash C}{\h_{2}}$}

\AxiomC{$\Pi_{0}$}

\BinaryInfC{$\seq{\g}{\Gamma_{1}, B \imp C }{B \imp C, \Delta_{1}}{\h_{2}}$}
\DisplayProof

\end{tabular}
}
\end{center}
The derivations $\Pi_{i}$, with $0 \leq i \leq n-3$, are as follows:
\begin{center}
\resizebox{\columnwidth}{!}{
\begin{tabular}{c} 
\AxiomC{}
\RightLabel{IH}
\dashedLine
\UnaryInfC{$\seq{\g_{i+1}}{\Gamma_{i+2}, B \imp C }{\Delta_{i+2} \slash B,C \vdash C}{\h_{i+3}}$}
\AxiomC{}
\RightLabel{IH}
\dashedLine
\UnaryInfC{$\seq{\g_{i+1}}{\Gamma_{i+2}, B \imp C }{\Delta_{i+2} \slash B, B \imp C \vdash B, C}{\h_{i+3}}$}
\BinaryInfC{$\seq{\g_{i+1}}{\Gamma_{i+2}, B \imp C }{\Delta_{i+2} \slash B, B \imp C \vdash C}{\h_{i+3}}$}
\UnaryInfC{$\seq{\g_{i+1}}{\Gamma_{i+2}, B \imp C }{\Delta_{i+2} \slash B \vdash C}{\h_{i+3}}$}

\AxiomC{$\Pi_{i+1}$}
\BinaryInfC{$\seq{\g_{i+1}}{\Gamma_{i+2}, B \imp C }{B \imp C, \Delta_{i+2}}{\h_{i+3}}$}
\RightLabel{=}
\dottedLine
\UnaryInfC{$\seq{\g_{i}}{\Gamma_{i+1}, B \imp C }{\Delta_{i+1} \slash \Gamma_{i+2}, B \imp C \vdash B \imp C, \Delta_{i+2}}{\h_{i+3}}$}
\UnaryInfC{$\seq{\g_{i}}{\Gamma_{i+1}, B \imp C }{\Delta_{i+1} \slash \Gamma_{i+2} \vdash B \imp C, \Delta_{i+2}}{\h_{i+3}}$}
\DisplayProof
\end{tabular}
}
\end{center}
Last, the derivation $\Pi_{n-2}$ is as follows:
\begin{center}
\begin{tabular}{c} 
\AxiomC{}
\RightLabel{IH}
\dashedLine
\UnaryInfC{$\g_{n} \slash B, C \vdash C$}

\AxiomC{}
\RightLabel{IH}
\dashedLine
\UnaryInfC{$\g_{n} \slash B, B \imp C \vdash B, C$}

\BinaryInfC{$\g_{n} \slash B, B \imp C \vdash C$}
\RightLabel{=}
\dottedLine

\UnaryInfC{$\g_{n-2} \slash \Gamma_{n-1}, B \imp C \vdash \Delta_{n-1} \slash \Gamma_{n}, B \imp C \vdash \Delta_{n} \slash B, B \imp C \vdash C$}

\UnaryInfC{$\g_{n-2} \slash \Gamma_{n-1}, B \imp C \vdash \Delta_{n-1} \slash \Gamma_{n}, B \imp C \vdash \Delta_{n} \slash B \vdash C$}
\UnaryInfC{$\g_{n-2} \slash \Gamma_{n-1}, B \imp C \vdash \Delta_{n-1} \slash \Gamma_{n} \vdash \Delta_{n} \slash B \vdash C$}
\UnaryInfC{$\g_{n-2} \slash \Gamma_{n-1}, B \imp C \vdash \Delta_{n-1} \slash \Gamma_{n} \vdash B \imp C, \Delta_{n}$}
\DisplayProof
\end{tabular}
\end{center}

Let us consider the case where $A$ is of the form $\forall x B$. We first define the linear nested sequents $\g_{i}$ (for $0 \leq i \leq n$) and $\h_{j}$
, where $\g_{0} = \g$.
$$
\g_{i} = \g \slash \Gamma_{1}, \forall x B \vdash \Delta_{1} \slash \cdots \slash \Gamma_{i}, \forall x B \vdash \Delta_{i} \quad \h_{j} = \Gamma_{j} \vdash \Delta_{j} \slash \cdots \slash \Gamma_{n} \vdash \Delta_{n}
$$

\begin{center}
\begin{tabular}{c} 

\AxiomC{}
\RightLabel{IH}
\dashedLine
\UnaryInfC{$\seq{\g}{\Gamma_{1}, \forall x B }{\Delta_{1} \slash A[y/x],\forall x A \vdash A[y/x]}{\h_{2}}$}
\UnaryInfC{$\seq{\g}{\Gamma_{1}, \forall x B }{\Delta_{1} \slash \forall x A \vdash A[y/x]}{\h_{2}}$}
\UnaryInfC{$\seq{\g}{\Gamma_{1}, \forall x B }{\Delta_{1} \slash \vdash A[y/x]}{\h_{2}}$}

\AxiomC{$\Pi_{0}$}

\BinaryInfC{$\seq{\g}{\Gamma_{1}, \forall x B }{\forall x B, \Delta_{1}}{\h_{2}}$}
\DisplayProof

\end{tabular}
\end{center}
The derivations $\Pi_{i}$, with $0 \leq i \leq n-3$, are as follows:
\begin{center}
\resizebox{\columnwidth}{!}{
\begin{tabular}{c}
\AxiomC{}
\RightLabel{IH}
\dashedLine
\UnaryInfC{$\seq{\g_{i+2}}{\forall x B, B[y_{i}/x] }{B[y_{i}/x]}{\h_{i+3}}$}
\UnaryInfC{$\seq{\g_{i+2}}{\forall x B }{B[y_{i}/x]}{\h_{i+3}}$}
\RightLabel{=}
\dottedLine
\UnaryInfC{$\seq{\g_{i}}{\Gamma_{i+1}, \forall x B }{\Delta_{i+1}}{\Gamma_{i+2}, \forall x B \vdash \Delta_{i+2} \slash \forall x B \vdash B[y_{i}/x] \slash \h_{i+3}}$}
\UnaryInfC{$\seq{\g_{i}}{\Gamma_{i+1}, \forall x B }{\Delta_{i+1}}{\Gamma_{i+2}, \forall x B \vdash \Delta_{i+2} \slash \vdash B[y_{i}/x] \slash \h_{i+3}}$}

\AxiomC{$\Pi_{i+1}$}

\BinaryInfC{$\seq{\g_{i}}{\Gamma_{i+1}, \forall x B }{\Delta_{i+1}}{\Gamma_{i+2}, \forall x B \vdash \forall x B, \Delta_{i+2} \slash \h_{i+3}}$}
\UnaryInfC{$\seq{\g_{i}}{\Gamma_{i+1}, \forall x B }{\Delta_{i+1}}{\Gamma_{i+2} \vdash \forall x B, \Delta_{i+2} \slash \h_{i+3}}$}
\DisplayProof
\end{tabular}
}
\end{center}
The last component of the derivation, $\Pi_{n-2}$ is given below:
\begin{center}
\begin{tabular}{c}
\AxiomC{}
\RightLabel{IH}
\dashedLine
\UnaryInfC{$\seq{\g_{n-2}}{\Gamma_{n-1}, \forall x B }{\Delta_{n-1}}{\Gamma_{n}, \forall x B \vdash \Delta_{n} \slash \forall x B, B[y_{n} / x] \vdash B[y_{n} / x]}$}
\UnaryInfC{$\seq{\g_{n-2}}{\Gamma_{n-1}, \forall x B }{\Delta_{n-1}}{\Gamma_{n}, \forall x B \vdash \Delta_{n} \slash \forall x B \vdash B[y_{n} / x]}$}
\UnaryInfC{$\seq{\g_{n-2}}{\Gamma_{n-1}, \forall x B }{\Delta_{n-1}}{\Gamma_{n}, \forall x B \vdash \Delta_{n} \slash \vdash B[y_{n} / x]}$}
\UnaryInfC{$\seq{\g_{n-2}}{\Gamma_{n-1}, \forall x B }{\Delta_{n-1}}{\Gamma_{n} \vdash \Delta_{n} \slash \vdash B[y_{n} / x]}$}
\UnaryInfC{$\seq{\g_{n-2}}{\Gamma_{n-1}, \forall x B }{\Delta_{n-1}}{\Gamma_{n} \vdash \forall x B, \Delta_{n}}$}
\DisplayProof
\end{tabular}
\end{center}
\qed
\end{proof}

\begin{customlem}{\ref{lm:substitution-lemma}}
The $\vsub$ rule is hp-admissible in $\calc$.
\end{customlem}

\begin{proof} We prove the result by induction on the height of the given derivation of $\g$.

\textit{Base case.} Any instance of the rule $\idone$, $\idtwo$, or $\botl$ is still an instance of the rule under the variable substitution $[\fvy /\fvx]$.

\textit{Inductive step.} For 
all rules, with the exception of the $\allrone$, $\existsl$, and $\allrtwo$ rules, the claim follows straightforwardly by applying IH followed by the corresponding rule. The nontrivial cases occur when the last rule applied is an instance of $\allrone$, $\existsl$, or $\allrtwo$, and the variable substituted into the conclusion of the inference is also the eigenvariable of the inference:

\begin{center}
\begin{tabular}{c @{\hskip 1em} c}
\AxiomC{$\seq{\g}{\Gamma}{\Delta}{ \vdash A[\fvx/x] }$}
\UnaryInfC{$\g \sslash \Gamma \vdash \Delta, \forall x A$}
\UnaryInfC{$(\g \sslash \Gamma \vdash \Delta, \forall x A)[\fvx / \fvy]$}
\DisplayProof

&

\AxiomC{$\seq{\g}{\Gamma, A[\fvx/x]}{\Delta}{\h}$}
\UnaryInfC{$\seq{\g}{\Gamma, \exists x A}{\Delta}{\h}$}
\UnaryInfC{$(\seq{\g}{\Gamma, \exists x A}{\Delta}{\h})[\fvx / \fvy]$}
\DisplayProof
\end{tabular}
\end{center}

\begin{center}
\AxiomC{${\g \sslash \Gamma_{1} \vdash \Delta_{1} \sslash \vdash A[\fvx/x] \sslash \Gamma_{2} \vdash \Delta_{2} \sslash \h}$}
\AxiomC{${\g \sslash \Gamma_{1} \vdash \Delta_{1} \sslash \Gamma_{2} \vdash \Delta_{2}, \forall x A \sslash \h}$}
\BinaryInfC{$\g \sslash \Gamma_{1} \vdash \Delta_{1}, \forall x A \sslash \Gamma_{2} \vdash \Delta_{2} \sslash \h$}
\UnaryInfC{$(\g \sslash \Gamma_{1} \vdash \Delta_{1}, \forall x A \sslash \Gamma_{2} \vdash \Delta_{2} \sslash \h)[\fvx / \fvy]$}
\DisplayProof
\end{center}

In such cases we invoke the inductive hypothesis twice: first, we apply the substitution $[\fvz / \fvx]$ to the premise, where $\fvz$ is a fresh parameter, and then we invoke the inductive hypothesis again and apply the substitution $[\fvx / \fvy]$. The desired result follows by a single application of each rule.
\qed
\end{proof}


\begin{customlem}{\ref{lm:hp-admiss-iw}}
The $\iw$ rule is hp-admissible in $\calc$.
\end{customlem}

\begin{proof} We know by~\cite[Lem.~5.5]{LelKuz18} that $\iw$ is admissible in $\lng$. We extend the argument to $\calc$ and consider the cases of permuting $\iw$ past the $\allrone$, $\existsl$, and $\allrtwo$ rules; the $\lift$, $\alll$, $\existsr$ cases are trivial.

Suppose we have a $\allrone$ inference followed by an instance of $\iw$:

\begin{center}
\begin{tabular}{c}

\AxiomC{$\g \slash \Gamma \vdash \Delta \slash \vdash A[\fvx/x]$}
\RightLabel{$\allrone$}
\UnaryInfC{$\g \slash \Gamma \vdash \Delta, \forall x A$}
\RightLabel{$\iw$}
\UnaryInfC{$\g' \slash \Gamma' \vdash \Delta', \forall x A$}
\DisplayProof

\end{tabular}
\end{center}

The nontrivial case occurs when the $\iw$ rule weakens in a formula containing the parameter $\fvx$. If this happens to be the case, then we invoke Lem.~\ref{lm:substitution-lemma} and apply a substitution $[\fvy/\fvx]$ where $\fvy$ is a fresh parameter not occurring in the derivation above. After performing this operation, we may complete the derivation as shown below:
\begin{center}
\begin{tabular}{c}

\AxiomC{$\g \slash \Gamma \vdash \Delta \slash \vdash A[\fvx/x]$}
\dashedLine
\RightLabel{Lem.~\ref{lm:substitution-lemma}}
\UnaryInfC{$\g[\fvy/\fvx] \slash \Gamma[\fvy/\fvx] \vdash \Delta[\fvy/\fvx] \slash \vdash A[\fvx/x][\fvy/\fvx]$}
\RightLabel{=}
\dottedLine
\UnaryInfC{$\g\slash \Gamma\vdash \Delta \slash \vdash A[\fvy/x]$}
\RightLabel{$\iw$}
\UnaryInfC{$\g' \slash \Gamma' \vdash \Delta' \slash \vdash A[\fvy/x]$}
\RightLabel{$\allrone$}
\UnaryInfC{$\g' \slash \Gamma' \vdash \Delta', \forall x A$}
\DisplayProof

\end{tabular}
\end{center}
The second and third line are equal because $\fvx$ does not occur in $\g$, $\Gamma$, or $\Delta$ (i.e., $\fvx$ is an eigenvariable) and also because $A[\fvx/x][\fvy/\fvx] = A[\fvy/x]$ (we may assume w.l.o.g. that $A$ does not contain any occurrences of $a$). Last, we may apply the $\allrone$ rule since we are guaranteed that $\fvy$ is an eigenvariable by choice.

The cases for $\existsl$ and $\allrtwo$ are shown similarly.
\qed
\end{proof}

\setcounter{equation}{0}

\begin{customlem}{\ref{lm:m-invertibility}} If $\sum_{i=1}^{n} k_{n} \geq 1$, then
\begin{center}
\begin{tabular}{c @{\hskip 1em} c @{\hskip 1em} c}
$
\begin{aligned}
& (i) \ (1) \text{ implies } (2)\\
& (ii) \ (3) \text{ implies } (4) \text{ and } (5)
\end{aligned}
$

&

$
\begin{aligned}
& (iii) \ (6) \text{ implies } (7) \text{ and } (8)\\
& (iv) \ (9) \text{ implies } (10)
\end{aligned}
$

&

$
\begin{aligned}
& (v) \ (11) \text{ implies } (12)\\
\
\end{aligned}
$
\end{tabular}
\end{center}
\begin{eqnarray}
\vdash_{\calc} \ \Gamma_{1}, (A \land B)^{k_{1}} \vdash \Delta_{1} \slash & \cdots & \slash \Gamma_{n}, (A \land B)^{k_{n}} \vdash \Delta_{n}\\
\vdash_{\calc} \ \Gamma_{1}, A^{k_{1}}, B^{k_{1}} \vdash \Delta_{1} \slash & \cdots & \slash \Gamma_{n}, A^{k_{n}}, B^{k_{n}} \vdash \Delta_{n}\\
\vdash_{\calc} \ \Gamma_{1}, (A \lor B)^{k_{1}} \vdash \Delta_{1} \slash & \cdots & \slash \Gamma_{n}, (A \lor B)^{k_{n}} \vdash \Delta_{n}\\
\vdash_{\calc} \ \Gamma_{1}, A^{k_{1}} \vdash \Delta_{1} \slash & \cdots & \slash \Gamma_{n}, A^{k_{n}} \vdash \Delta_{n}\\
\vdash_{\calc} \ \Gamma_{1}, B^{k_{1}} \vdash \Delta_{1} \slash & \cdots & \slash \Gamma_{n}, B^{k_{n}} \vdash \Delta_{n}\\
\vdash_{\calc} \ \Gamma_{1}, (A \imp B)^{k_{1}} \vdash \Delta_{1} \slash & \cdots & \slash \Gamma_{n}, (A \imp B)^{k_{n}} \vdash \Delta_{n}\\
\vdash_{\calc} \ \Gamma_{1}, B^{k_{1}} \vdash \Delta_{1} \slash & \cdots & \slash \Gamma_{n}, B^{k_{n}} \vdash \Delta_{n}\\
\vdash_{\calc} \ \Gamma_{1}, (A \imp B)^{k_{1}} \vdash \Delta_{1}, A^{k_{1}} \slash & \cdots & \slash \Gamma_{n}, (A \imp B)^{k_{n}} \vdash \Delta_{n}, A^{k_{n}}\\
\vdash_{\calc} \ \Gamma_{1}, (\forall x A)^{k_{1}} \vdash \Delta_{1} \slash & \cdots & \slash \Gamma_{n}, (\forall x A)^{k_{n}} \vdash \Delta_{n}\\
\vdash_{\calc} \ \Gamma_{1}, A[\fvx/x]^{k_{1}}, (\forall x A)^{k_{1}} \vdash \Delta_{1} \slash & \cdots & \slash \Gamma_{n}, A[\fvx/x]^{k_{n}}, (\forall x A)^{k_{n}} \vdash \Delta_{n}\\
\vdash_{\calc} \ \Gamma_{1}, (\exists x A)^{k_{1}} \vdash \Delta_{1} \slash & \cdots & \slash \Gamma_{n}, (\exists x A)^{k_{n}} \vdash \Delta_{n}\\
\vdash_{\calc} \ \Gamma_{1}, A[\fvx/x]^{k_{1}} \vdash \Delta_{1} \slash & \cdots & \slash \Gamma_{n}, A[\fvx/x]^{k_{n}}\vdash \Delta_{n}
\end{eqnarray}

\end{customlem}

\begin{proof} By~\cite[Lem.~5.9]{LelKuz18} we know that claims (i)-(iii) hold for $\lng$. It is easy to show that the proof of~\cite[Lem.~5.9]{LelKuz18} can be extended to the quantifier rules for claims (i) and (ii) by applying the IH and then the rule. We therefore argue that claim (iii) continues to hold in the presence of the quantifier rules, and also argue that claims (iv) and (v) hold.

\textit{Claim (iii).} We know that (6) implies (8) by Lem.~\ref{lm:hp-admiss-iw}. One can prove that (6) implies (7) by induction on the height of the given derivation. By~\cite[Lem.~5.9]{LelKuz18} all cases with the exception of the quantifier rules hold. 
All of the quantifier cases are handled by applying IH followed by an application of the corresponding rule.

\textit{Claim (iv).} Statement (9) implies (10) by Lem.~\ref{lm:hp-admiss-iw}.

\textit{Claim (v).} We argue that (11) implies (12) by induction on the height of the given derivation of $\g = \Gamma_{1}, (\exists x A)^{k_{1}} \vdash \Delta_{1} \slash \cdots \slash \Gamma_{n}, (\exists x A)^{k_{n}} \vdash \Delta_{n}$.

\textit{Base case.} If $\g$ is the result of $\idone$, $\idtwo$, or $\botl$, then $\Gamma_{1}, A[\fvx / x]^{k_{1}} \vdash \Delta_{1} \slash \cdots \slash \Gamma_{n}, A[\fvx / x]^{k_{n}} \vdash \Delta_{n}$ is an instance of the corresponding rule as well.

\textit{Inductive step.} For all rules, with the exception of $\allrone$, $\existsl$, and $\allrtwo$, we apply IH to the premise(s) followed by the rule. In the $\allrone$, $\existsl$, and $\allrtwo$ cases we must ensure that the eigenvariable of the inference is not identical to the parameter $\fvx$ occurring in $A[\fvx / x]$; however, due to Lem.~\ref{lm:substitution-lemma}, this can always be ensured.
\qed
\end{proof}

\label{appendix}

\end{document}